\PassOptionsToPackage{unicode}{hyperref}
\PassOptionsToPackage{hyphens}{url}
\PassOptionsToPackage{dvipsnames,svgnames,x11names}{xcolor}
\documentclass{article}

\usepackage{PRIMEarxiv}
\usepackage{mathptmx}
\usepackage{amsmath,amssymb}
\usepackage{iftex}
\ifPDFTeX
  \usepackage[T1]{fontenc}
  \usepackage[utf8]{inputenc}
  \usepackage{textcomp}
\else
  \usepackage{unicode-math}
  \defaultfontfeatures{Scale=MatchLowercase}
  \defaultfontfeatures[\rmfamily]{Ligatures=TeX,Scale=1}
\fi
\ifPDFTeX\else
\fi
\IfFileExists{upquote.sty}{\usepackage{upquote}}{}
\IfFileExists{microtype.sty}{%
  \usepackage[]{microtype}
  \UseMicrotypeSet[protrusion]{basicmath}
}{}
\makeatletter
\@ifundefined{KOMAClassName}{%
  \IfFileExists{parskip.sty}{%
    \usepackage{parskip}
  }{%
    \setlength{\parindent}{0pt}
    \setlength{\parskip}{6pt plus 2pt minus 1pt}}
}{%
  \KOMAoptions{parskip=half}}
\makeatother
\usepackage{xcolor}
\makeatletter
\ifx\paragraph\undefined\else
  \let\oldparagraph\paragraph
  \renewcommand{\paragraph}{
    \@ifstar
      \xxxParagraphStar
      \xxxParagraphNoStar
  }
  \newcommand{\xxxParagraphStar}[1]{\oldparagraph*{#1}\mbox{}}
  \newcommand{\xxxParagraphNoStar}[1]{\oldparagraph{#1}\mbox{}}
\fi
\ifx\subparagraph\undefined\else
  \let\oldsubparagraph\subparagraph
  \renewcommand{\subparagraph}{
    \@ifstar
      \xxxSubParagraphStar
      \xxxSubParagraphNoStar
  }
  \newcommand{\xxxSubParagraphStar}[1]{\oldsubparagraph*{#1}\mbox{}}
  \newcommand{\xxxSubParagraphNoStar}[1]{\oldsubparagraph{#1}\mbox{}}
\fi
\makeatother

\providecommand{\tightlist}{%
  \setlength{\itemsep}{0pt}\setlength{\parskip}{0pt}}\usepackage{longtable,booktabs,array}
\usepackage{calc}
\usepackage{etoolbox}
\makeatletter
\patchcmd\longtable{\par}{\if@noskipsec\mbox{}\fi\par}{}{}
\makeatother
\IfFileExists{footnotehyper.sty}{\usepackage{footnotehyper}}{\usepackage{footnote}}
\makesavenoteenv{longtable}
\usepackage{graphicx}
\makeatletter
\def\maxwidth{\ifdim\Gin@nat@width>\linewidth\linewidth\else\Gin@nat@width\fi}
\def\maxheight{\ifdim\Gin@nat@height>\textheight\textheight\else\Gin@nat@height\fi}
\makeatother
\setkeys{Gin}{width=\maxwidth,height=\maxheight,keepaspectratio}
\makeatletter
\def\fps@figure{htbp}
\makeatother

\makeatletter
\@ifpackageloaded{caption}{}{\usepackage{caption}}
\AtBeginDocument{%
\ifdefined\contentsname
  \renewcommand*\contentsname{Table of contents}
\else
  \newcommand\contentsname{Table of contents}
\fi
\ifdefined\listfigurename
  \renewcommand*\listfigurename{List of Figures}
\else
  \newcommand\listfigurename{List of Figures}
\fi
\ifdefined\listtablename
  \renewcommand*\listtablename{List of Tables}
\else
  \newcommand\listtablename{List of Tables}
\fi
\ifdefined\figurename
  \renewcommand*\figurename{Figure}
\else
  \newcommand\figurename{Figure}
\fi
\ifdefined\tablename
  \renewcommand*\tablename{Table}
\else
  \newcommand\tablename{Table}
\fi
}
\@ifpackageloaded{float}{}{\usepackage{float}}
\floatstyle{ruled}
\@ifundefined{c@chapter}{\newfloat{codelisting}{h}{lop}}{\newfloat{codelisting}{h}{lop}[chapter]}
\floatname{codelisting}{Listing}

\makeatother
\makeatletter
\@ifpackageloaded{caption}{}{\usepackage{caption}}
\@ifpackageloaded{subcaption}{}{\usepackage{subcaption}}
\makeatother

\ifLuaTeX
  \usepackage{selnolig}
\fi
\IfFileExists{xurl.sty}{\usepackage{xurl}}{}

\usepackage[hidelinks]{hyperref}
\usepackage{bookmark}
\usepackage{titletoc}

\hypersetup{
  pdftitle={Decorated graphons for temporal network estimation},
  pdfauthor={Charles Dufour and Sofia Olhede},
  pdfkeywords={nonparametric estimation, longitudinal relational data, graph processes, stochastic blockmodels, latent structure modeling},
  colorlinks=true,
  linkcolor={blue},
  filecolor={Maroon},
  citecolor={Blue},
  urlcolor={Blue}}

\usepackage[
    backend=biber,
    style=apa,
    natbib=true,
    sortlocale=en_US,
    doi=false,
    eprint=false,
    isbn=true,
    uniquelist=false,
    uniquename=false,
]{biblatex}
\usepackage{enumerate}

\usepackage{url}

\usepackage{mathtools}
\usepackage{amsthm}
\usepackage{amssymb}

\usepackage{bbm}
\usepackage{amsfonts}
\usepackage[capitalise]{cleveref}
\usepackage{enumitem}
\usepackage{tikz}
\usetikzlibrary{arrows, fit, matrix, positioning, shapes, backgrounds, positioning, shapes.geometric, arrows.meta, calc}
\usepackage[most]{tcolorbox}
\usepackage{eqparbox}

\usepackage{thmtools}
\usepackage{thm-restate}

\graphicspath{{figures/}}

\crefname{enumi}{assumption}{assumptions}
\newtheorem{assumption}{Assumption}

\definecolor{oceanblue}{RGB}{33, 113, 181}
\definecolor{burntorange}{RGB}{242, 95, 17}
\definecolor{forestgreen}{RGB}{35, 155, 86}
\definecolor{firebrick}{RGB}{192, 48, 44}
\definecolor{violet}{RGB}{115, 82, 167}
\definecolor{sienna}{RGB}{160, 98, 29}
\definecolor{lavender}{RGB}{194, 165, 207}
\definecolor{slategray}{RGB}{89, 89, 89}
\definecolor{IBM1}{RGB}{100, 143, 255}
\definecolor{IBM2}{RGB}{220, 38, 127}
\definecolor{IBM3}{RGB}{255, 176, 0}

\newcommand{\stext}[1]{\,\text{#1}\,}

\newtheorem{theorem}{Theorem}[section]
\newtheorem{proposition}{Proposition}[section]
\newtheorem{lemma}{Lemma}[section]

\newtheorem{definition}{Definition}[section]
\newtheorem{remark}{Remark}[section]
\newtheorem*{remark*}{Remark}

\theoremstyle{definition}

\newcommand{\indicator}[1]{\mathbbm{1}_{\left\{#1\right\}}}

\newcommand{\probas}[1]{\mathcal{P}\left(#1\right)}
\newcommand{\bbE}{\mathbb{E}}

\newcommand{\N}{\mathbb{N}}

\newcommand{\tx}[2]{#1^{(#2)}}
\newcommand{\e}[2]{\tx{A_{#1}}{#2}}

\newcommand{\est}{\lambda}
\newcommand{\Est}{\Lambda}

\newcommand{\cpsi}{\psi}
\newcommand{\bias}{\beta}
\newcommand{\Eest}{\mu}
\newcommand{\RNum}[1]{%
  \textup{\uppercase\expandafter{\romannumeral#1}}%
}
\newcommand{\UnifDist}{U[0,1]}
\newcommand{\bbP}{\mathbb{P}}

\usepackage{tikz}
\usetikzlibrary{arrows, fit, matrix, positioning, shapes, backgrounds, positioning, shapes.geometric, arrows.meta, calc}
\usepackage[most]{tcolorbox}
\usepackage{xparse}

\definecolor{IBM1}{RGB}{100, 143, 255}
\definecolor{IBM2}{RGB}{220, 38, 127}
\definecolor{IBM3}{RGB}{255, 176, 0}

\begin{document}

\def\spacingset#1{\renewcommand{\baselinestretch}%
{#1}\small\normalsize} \spacingset{1}

\title{Decorated graphons for temporal network estimation}
\author{
	\href{https://orcid.org/0009-0000-3612-7335}{\includegraphics[scale=0.06]{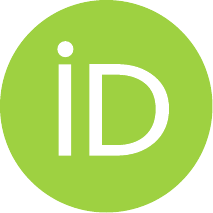}\hspace{1mm}Charles Dufour} \\
	Institute of Mathematics, \'Ecole Polytechnique F\'ed\'erale de Lausanne \\
	Station 8, 1015 Lausanne, Switzerland \\
	\texttt{charles.dufour@epfl.ch}
	\And
	\href{https://orcid.org/0000-0003-0061-227X}{\includegraphics[scale=0.06]{orcid.pdf}\hspace{1mm}Sofia C.~Olhede} \\
	Institute of Mathematics, \'Ecole Polytechnique F\'ed\'erale de Lausanne \\
	Station 8, 1015 Lausanne, Switzerland \\
	\texttt{sofia.olhede@epfl.ch}
}
\maketitle

\begin{abstract}
We propose a unified nonparametric framework for modeling time-evolving networks using decorated graphons (also known as probability-graphons): symmetric functions that assign to each node pair a probability distribution over binary edge time series. This generalizes the static decorated-graphon construction to dynamic graphs while preserving node exchangeability and allowing temporal dynamics such as memory and periodicity. Models in which edges evolve independently given the latent variables, such as autoregressive and Markov edge processes, arise as special cases. We develop a two-stage estimation procedure that separates temporal modeling from network structure. Because the network stage requires only mild regularity conditions on the edge-process estimator, a broad class of temporal edge models can be used in the first stage. We establish nonparametric convergence rates in both block-model and Hölder-smooth regimes, and make explicit how the rate depends on the number of observed time steps and on the quality of the edge-level estimation. We illustrate the method on simulated data and a hospital contact network, recovering latent community structure and time-varying interaction patterns. The framework gives a nonparametric baseline for dynamic network analysis with explicit convergence guarantees.
\end{abstract}

\keywords{Nonparametric estimation \and Longitudinal relational data \and Graph processes \and
	Stochastic blockmodels \and Latent structure modeling}

\section{Introduction}\label{sec:intro}

Contact networks observed at high temporal resolution, such as those collected in hospital wards or high schools, often exhibit structured and heterogeneous edge dynamics \citep{vanhems_estimating_2013}. For example, wearable sensors may record face-to-face proximity every minute over several days, resulting in thousands of interactions, typically with strong periodic effects (e.g.\ day-night cycles or school schedules) and considerable heterogeneity in contact durations \citep{stehle_highresolution_2011, fournet_contact_2014}. These data offer valuable insight into processes such as disease transmission and social behavior. Modeling them requires capturing fine-grained, edge-specific temporal patterns while maintaining node exchangeability, which guarantees consistency under subsampling and the existence of a well-defined graph limit \citep{crane_dynamic_2016}. Existing dynamic network models achieve one of these goals at the expense of the other.

Dynamic extensions of classical stochastic blockmodels and latent-space embeddings preserve node-exchangeability by allowing either block-membership or latent coordinates to evolve over time \citep{matias_statistical_2017}. However, these approaches invariably impose strong constraints on temporal dynamics, such as piecewise-constant parameters, conditional independence between time steps, or predetermined change-points, and any available convergence results are proved on a case-by-case basis rather than against a common nonparametric object. Continuous-time  models, by contrast, capture fine-grained interaction patterns and even contagion effects across edges \citep{matias_semiparametric_2018, passino_mutually_2023}, but typically sacrifice exchangeability and require fully parametric specifications with tailored asymptotic theory. More recent proposals, such as block-ALARM \citep{suveges_networks_2023}, autoregressive-edge models \citep{chang_autoregressive_2024, mantziou_gnaredge_2023}, and the AR(1) network model \citep{jiang_autoregressive_2023}, deliver richer edge dynamics or preserve exchangeability, yet none provides a coherent nonparametric limit and an estimation procedure. Consequently, the field lacks a single, well-defined graph-limit object for dynamic networks, a gap we address in this work.

We develop estimation methodology that reconciles node exchangeability with rich temporal structure in dynamic networks. Our approach is based on decorated graphons, in particular $2^\mathbb{N}$‑graphons, which generalize classical graphons by assigning to each node pair a probability law over binary sequences rather than a single edge probability. In the standard setting, a graphon $W$ is a symmetric function from $[0,1]^2$ to $[0,1]$, where each node is associated with a latent coordinate $\xi_i \in [0,1]$, and edges are generated independently with probability $W(\xi_i, \xi_j)$ \citep{lovasz_large_2012}. While decorated graphons have previously been used to model multiplex or attribute-rich networks \citep{dufour_inference_2024}, we show that they can be naturally adapted to capture time-evolving networks with a wide range of edge-wise dynamics. In our formulation, $W(\xi_i, \xi_j)$ is a distribution over infinite binary sequences, and the observed dynamic network is generated by sampling an edge-specific time series from this law for each node pair. This construction retains node exchangeability while allowing for complex temporal behavior, such as periodicity, and other structured dependencies.

Our estimation procedure is modular and proceeds in two stages. In the first stage, each observed edge time series is fit with a flexible model (e.g.\ estimating a Bernoulli rate or a Markov kernel), yielding edge-level parameter summaries. In the second stage, these summaries are aggregated via blockwise least squares to recover an approximation of the decorated graphon. This separation of temporal modeling from network estimation is the key design choice: the theoretical guarantees only require mild regularity conditions on the edge-level estimator, so that any temporal model with an asymptotically consistent parameter estimator can be used in the first stage. We derive convergence rates in both a block-model regime and a Hölder-smooth regime, with error bounds that depend explicitly on the number of time steps and the quality of the edge-level estimation.

We also illustrate the applicability of our approach to a wide range of dynamic network models. For example, the block-ALARM model of \citet{suveges_networks_2023} and the AR$(1)$ model of \citet{jiang_autoregressive_2023} correspond to decorated graphons where $W(\xi_i,\xi_j)$ is a (potentially periodic) Markov chain of higher order. More generally, our framework accommodates a variety of edge processes, as long as mild regularity conditions are met.

By construction, our model does not allow for explicit dependence between edge processes, only through unobserved latent variables. In contrast to \citet{chang_autoregressive_2024,mantziou_gnaredge_2023}, we cannot model explicit triadic or feedback effects across edges. In return, this restriction yields a theoretically tractable framework with interpretable latent structure and a natural nonparametric baseline for comparing more complex models. The remainder of this article is organized as follows. In \Cref{sec:prob_invariances}, we define the decorated graphon model formally. \Cref{sec:estimation} describes the estimation procedure, and \Cref{sec:a_examples} presents theoretical guarantees. In \Cref{sec:application,subsec:hospital}, we provide empirical illustrations on simulated and real datasets. We finish with a discussion of the limitations and potential extensions of our framework in \Cref{sec:conclusion}. The supplemental material contains the proofs and some technical extensions and comments.

\section{From Sequences of Graphs to Decorated Graphs}\label{sec:prob_invariances}

For the sake of exposition, we  consider graph-valued discrete stochastic processes, where each realization manifests as an infinite sequence of aligned simple graphs. However, our observation is restricted to a finite and contiguous subset of this infinite sequence. In practical terms, we observe a sequence of graphs  \(\{G^{(t)}\}\) \(t \in \{1,\ldots,T\}\) on the same node set \(V\), where every \(G^{(t)}\) constitutes a simple graph. \(G^{(t)}\) will be called a \emph{snapshot} at time \(t\). These graphs can be succinctly described using their respective adjacency matrices, denoted as \(A^{(t)} \in \{0,1\}^{n\times n}\), where \(\e{ij}{t}=1\) if the edge \(ij\) was observed at time \(t\), or equivalently \((i,j)\) is an edge in  \(G^{(t)}\), and \(n\) is the number of nodes in the graph.

\begin{figure}[h!]
	\centering
	\begin{tikzpicture}[scale=1, transform shape]

\def\ngon{5}
\def\scalenode{0.8}

\def\starttimelime{0}%
\def\endtimeline{12.5}%
\def\starttimedots{0.2}
\def\endtimedots{12}

\definecolor{colornodes}{RGB}{220, 220, 220}
\def\colorstart{IBM1}
\def\colormid{IBM2}
\def\colorend{IBM3}

\def\firstlist{"\colorstart", "\colormid", "\colorend"}

    \node (snapshots) at (0,0){
	\begin{tikzpicture}[
                show background rectangle,
                scale=1,
                transform shape,
                mystyle/.style={draw,shape=circle,fill=colornodes,scale=\scalenode}
            ]
    	\begin{scope}[on grid, xshift=0.5cm]
                \node[regular polygon,regular polygon sides=\ngon,minimum size=1.5cm] (p) at (1.5,1.5) {};
                \foreach\x in {1,...,\ngon}{\node[mystyle] (n\x) at (p.corner \x){\x};}
                \path[draw, \colorstart , ultra thick] (n1) -- (n2) -- (n3) -- (n1);

            \end{scope}

            \begin{scope}[on grid, xshift=4.5cm]
                \node[regular polygon,regular polygon sides=\ngon,minimum size=1.5cm] (p) at (1.5,1.5) {};
                \foreach\x in {1,...,\ngon}{\node[mystyle] (n\x) at (p.corner \x){\x};}
                \path[draw,\colormid, ultra thick] (n4) -- (n5);
                \path[draw,\colormid, ultra thick] (n2) -- (n3) -- (n1);

            \end{scope}

            \begin{scope}[on grid, xshift=8.5cm]
                \node[regular polygon,regular polygon sides=\ngon,minimum size=1.5cm] (p) at (1.5,1.5) {};
                \foreach\x in {1,...,\ngon}{\node[mystyle] (n\x) at (p.corner \x){\x};}
                \path[draw,ultra thick,\colorend] (n2) -- (n1) -- (n3) -- (n4) -- (n5);
            \end{scope}

            \draw[thick,-latex] (\starttimelime,0) node[xshift=-1.25cm, anchor=west] (t_handle) {$t$} -- (\endtimeline,0);
            \node (G_handle) at (-1cm,1.5) {$G^{(t)}$};
            \node (dots_start) at (\starttimedots,1.5) {\ldots};
            \node (dots_end) at (\endtimedots,1.5) {\ldots};
            \foreach \t [count=\index, evaluate=\index as \c using {{\firstlist}[\index-1]}] in {2,6,10}{
                    \draw[thick, \c] (\t cm,3pt) -- (\t cm,-3pt) node[below] {$\textcolor{\c}{t_{k+\index}}$};
                }
            \node (title) at (6,3) {(a) Infinite sequence of simple graphs};
	\end{tikzpicture}
    };

    \node (decorated-graph) [below=2cm of snapshots, scale=1, transform shape]{
        \begin{tikzpicture}[mystyle/.style={draw,shape=circle,fill=colornodes}, show background rectangle]
            \node[regular polygon,regular polygon sides=\ngon,minimum size=4cm] (p) at (0,2) {};
            \foreach\x in {1,...,\ngon}{\node[mystyle] (n\x) at (p.corner \x){\x};}
            \foreach\x in {1,...,\numexpr\ngon-1\relax}{
              \foreach\y in {\x,...,\ngon}{
                \draw[dashed] (n\x) -- (n\y);
                \node (mid-\x\y) at ($(n\x)!0.5!(n\y)$) {};
              }
            };

            \node[rectangle] (sequence12) at (-3,2) {$[\ldots,\mathbf{\textcolor{\colorstart}{1}},\mathbf{\textcolor{\colormid}{0}},\mathbf{\textcolor{\colorend}{1}}, \ldots]$};
            \draw[{Stealth}-] (sequence12.east) to [in=90,out=0] (mid-12);

             \node[rectangle] (sequence15) at (3,2) {$[\ldots,\mathbf{\textcolor{\colorstart}{0}},\mathbf{\textcolor{\colormid}{0}},\mathbf{\textcolor{\colorend}{0}}, \ldots]$};
             \draw[{Stealth}-] (sequence15.west) to [in=90,out=180] (mid-15);

            \node[rectangle] (sequence23) at (-3,-2) {$[\ldots,\mathbf{\textcolor{\colorstart}{1}},\mathbf{\textcolor{\colormid}{1}},\mathbf{\textcolor{\colorend}{0}}, \ldots]$};
             \draw[{Stealth}-] (sequence23.north) to [in=180,out=90] (mid-23);

             \node[rectangle] (sequence45) at (3,-2) {$[\ldots,\mathbf{\textcolor{\colorstart}{0}},\mathbf{\textcolor{\colormid}{1}},\mathbf{\textcolor{\colorend}{1}}, \ldots]$};
             \draw[{Stealth}-] (sequence45.north) to [in=0,out=90] (mid-45);

            \node (title) at (0,3) {(b) $2^\mathbb{N}$ decorated graph};
	\end{tikzpicture}
    };

    \path[-latex] (snapshots)  edge  node[rectangle, draw, fill=white, dashed,inner sep=0.2cm]  {$ij \longmapsto \left[\ldots,\, A_{ij}^{(\textcolor{\colorstart}{{t_{k+1}}})}, \, A_{ij}^{(\textcolor{\colormid}{t_{k+2}})}, \, A_{ij}^{(\textcolor{\colorend}{t_{k+3}})},\, \ldots\right]$}  (decorated-graph);

\end{tikzpicture}
	\caption{Illustration of changing from an infinite sequence of simple graphs to a \(2^{\N}\)-decorated graph. In (a), we have an infinite sequence of aligned simple graphs. We construct (b) by considering the fully connected graph and mapping each edge \((i,j)\) to the infinite binary sequence encoding its realization in (a). This is the \emph{decoration} of the edge \((i,j)\). The figure shows the mapping for three consecutive times (color-coded) for the edges \((1,2),(2,3),(4,5)\) and \((1,5)\) going counterclockwise from the top left of (b).}\label{fig:decorated_graph}
\end{figure}

At the heart of our approach is the concept of exchangeability, a foundational assumption in modern network analysis \citep{aldous_representations_1981,hoover_relations_1979,kallenberg_representation_1989}; the joint distribution of the network remains invariant under any permutation of the node labels. This assumption implies that the ordering of the nodes is irrelevant to the underlying stochastic structure, and it has served as a common assumption for many nonparametric network models \citep{olhede_network_2014,orbanz_bayesian_2015,gao_rateoptimal_2015}.

\begin{assumption}[Global exchangeability]
\label{assumption:exchangeability}
The graph-valued process $\{G^{(t)}\}_{t \in \mathbb{N}}$ is jointly exchangeable: for any permutation $\pi$ of the node labels, $\{G^{(t)}\}_{t \in \mathbb{N}} \stackrel{d}{=} \{\pi(G^{(t)})\}_{t \in \mathbb{N}}$, where $\pi(G^{(t)})$ denotes the graph with adjacency matrix $A^{(t)}_{\pi(i)\pi(j)}$.
\end{assumption}

To formally incorporate this invariance, we adopt a decorated graph representation \citep{lovasz_limits_2010}. In this framework, each edge in the complete graph on \(n\) nodes is ``decorated'' with an infinite binary sequence that encodes its temporal evolution as represented in \Cref{fig:decorated_graph} (for more details on decorated graphons, see \citet{dufour_inference_2024} and the references therein). This implies that the law of a discrete graph-valued process can be represented as a \(2^{\N}\)-graphon as defined in \Cref{def:decorated_graphon_ch3}.

\begin{definition}[\(2^{\N}\)-graphon]
	\label{def:decorated_graphon_ch3}
	Let \(\probas{2^{\N}}\) denote the set of Borel probability measures on \(2^{\N}\), where \(2^{\N} = \{0,1\}^{\N}\)  is endowed with the product topology (under which it is compact and metrizable); equivalently, $\probas{2^{\N}}$ is the set of laws of $\{0,1\}$-valued discrete-time stochastic processes. Let \(\mathcal{W}(2^{\N})\) denote the set of Borel-measurable symmetric functions
	\begin{equation}
        \label{eq:decorated_graphon}
		W:[0,1]^2 \to \probas{2^{\N}},
	\end{equation}
    where $\probas{2^{\N}}$ is equipped with the topology of weak convergence. Elements of\ \  \(\mathcal{W}(2^{\N})\) are called \emph{\(2^{\N}\)-graphons}.
\end{definition}

We can think of \(W(\xi_i,\xi_j)\) as providing a distribution for the time-series of edge \((i,j)\). It is then natural to assume that our observed graph sequence was generated by a \(2^{\N}\)-graphon \(W^*\) by the following mechanism. First, for a network with \( n \) nodes, we draw latent variables \( \xi_1, \ldots, \xi_n \) independently from the uniform distribution on \([0,1]\). These latent variables, which remain fixed over time, encode the intrinsic propensities of the nodes to interact and underpin the exchangeability of the network. Given the latent positions, for every unordered pair \((i,j)\) with \(1\leq i < j\leq n\), we sample an infinite binary sequence
\[
A_{ij} = (\e{ij}{1},\e{ij}{2}, \e{ij}{3}, \ldots)
\]
from the discrete binary stochastic process defined by \(W^*(\xi_i, \xi_j) \in \probas{2^{\N}}\). This binary sequence serves as the \emph{decoration} for edge \((i,j)\) and encodes its evolution over time. In practice, for any finite observation window \( t=1,\ldots,T \), the realized states \(\left(\e{ij}{1},\ldots,\e{ij}{T}\right)\) determine whether the edge is active (with \(1\)) or inactive (with \(0\)) at each time point.

Any node-exchangeable discrete graph-valued process can be represented as a \(2^{\N}\)-graphon \citep{kallenberg_probabilistic_2005}. This means that all models that assume conditional edge independence will fall into this framework, including the autoregressive models of \citet{jiang_autoregressive_2023}, the block-ALARM model of \citet{suveges_networks_2023}, and the dynamic network model of \citet{pensky_dynamic_2019} among others (see the supplemental for a more detailed discussion).

\section{A General Framework for Estimating Graph-Valued Processes}\label{sec:estimation}

In this section, we are interested in estimating the \(2^{\N}\)-graphon \(W^*\) from a finite observation window of the graph sequence. We will consider a general framework for estimating the graphon, which is based on the idea of aggregating edge-wise estimates of the probability measures \(W^*(\xi_i,\xi_j)\) for each edge \((i,j)\). The main idea is to decouple the graph aspect from the specificity of the edge decorations, making our estimator applicable to a wide range of models for the edge processes. Additionally, this separation allows for cleaner proofs and provides a plugin framework, where any type of edge process can be accommodated as long as we can estimate their underlying distributions ``well-enough'' (see \cref{assumption:edge_process_est} for a formal statement). This decoupling preserves the convergence rate of static graphon estimation (\Cref{theorem:abstract_result_blockmodel}); any efficiency difference relative to a pooled procedure enters only through the per-edge prefactor $\cpsi^2$ and is discussed in \cref{remark:statistical_efficiency}.

\begin{figure}[h!]
    \centering
    \begin{tikzpicture}[
  node distance=0.5cm and 0.95cm,
  box/.style = {
    draw,
    rounded corners,
    minimum width=2cm,
    minimum height=1.2cm,
    align=center
  },
  arrow/.style = {
    -{Latex[length=2mm]},
    thick
  }
]

\node[box] (input) {$\{A_{ij}^{(t)}\}_{t=1}^{T}$};
\node[box, right=of input] (stage1) {$\lambda_{ij} :=  \Lambda(\{A_{ij}^{(t)}\}) \in \mathcal{X}$};
\node[box, right=of stage1] (stage2) {$\hat{\theta}_{ab} = \frac{1}{\left|\hat{z}^{-1}(a,b)\right|} \sum \lambda_{i j}$};
\node[box, right=of stage2] (output) {$\widehat{W}(x,y) = \hat{\theta}_{\lceil Kx\rceil, \lceil Ky \rceil}$};

\draw[arrow] (input) --  (stage1);
\draw[arrow] (stage1) -- node[above, font=\footnotesize] {L.S}  (stage2);
\draw[arrow] (stage2) -- (output);

\end{tikzpicture}
    \caption{Flowchart of our estimation procedure. The observed edge series $\{A_{ij}^{(t)}\}_{t=1}^T$ are summarized edge-wise into $\est_{ij}$ (\Cref{subsec:assumption:edge_process}), pooled within blocks by the least-squares node clustering (L.S., \Cref{subsec:estimation_graphon}) into block parameters $\hat{\theta}_{ab}$, giving the graphon estimate $\widehat{W}$ as a form of network histogram approximation \citep{olhede_network_2014}. The assumptions place only minimal constraints on $\Est$ and $\theta$, so the methodology applies across different parametrizations of the temporal dynamics.}
    \label{fig:diagram}
\end{figure}

\subsection{Assumption on the Edge Processes}
\label{subsec:assumption:edge_process}
\begin{assumption}
    \label{assumption:parametric_family}
    We assume that we consider an identifiable parametric\footnote{Note that we do not require the parameters to be finite dimensional} family of \(\mathcal{P}(2^{\mathbb{N}})\) that we denote \(\mathcal{P}_{\mathcal{X}}(2^{\mathbb{N}})\), where each element of \(\mathcal{P}_{\mathcal{X}}(2^{\mathbb{N}})\) is parametrized by a \(\theta \in \mathcal{X}\), where \(\mathcal{X}\) is a convex subset of a real Hilbert space, with inner product \(\langle\cdot,\cdot\rangle_{\mathcal{X}}\) and induced norm \(\|\cdot\|_\mathcal{X}\).
\end{assumption}

\Cref{assumption:parametric_family} is a standard parametric assumption that allows us to estimate the probability measures \(W^*(\xi_i,\xi_j)\) for each edge \((i,j)\) independently. We will denote this edge-wise estimator of \(W^*(\xi_i,\xi_j)\) by \(\est_{ij}\) and the corresponding true parameter by \(\theta_{ij}\). We will assume that \(\est_{ij}\) is a \(\mathcal{X}\)-valued  random variable, which is a function of the edge process%
\(\left(\e{ij}{1},\ldots,\e{ij}{T}\right)\), and satisfies the following properties:

\begin{restatable}{assumption}{assumptionEdgeEst}
    \label{assumption:edge_process_est}
    Under \Cref{assumption:parametric_family}, we suppose that there exists a deterministic function
    \begin{equation}
        \Est  :  \bigcup_{T \in \mathbb{N}} \{0,1\}^T \rightarrow \mathcal{X}
    \end{equation}
  such that if $\left(\e{ij}{1},\ldots,\e{ij}{T}\right)$ follows a distribution parametrized by $\theta_{ij}$, then letting the estimator \[\est_{ij} = \Est\left(\e{ij}{1},\ldots,\e{ij}{T}\right)\]
  we assume that
    \begin{itemize}
        \tightlist
        \item  \(\|\est_{ij}-\bbE[\est_{ij}]\|_\mathcal{X}\) is a sub-Gaussian random variable with sub-Gaussian norm \[\|\est_{ij}-\bbE[\est_{ij}]\|_{\psi_2} := \inf\left\{c>0: \bbE\left[\exp\left(\frac{\|\est_{ij}-\bbE[\est_{ij}]\|_{\mathcal{X}}^2}{c^2}\right)\right] \leq 2\right\}.\]
        \item There exist constants $\cpsi, \bias \geq 0$ such that for all $i,j \in [n]$
            $$\|\est_{ij}-\bbE[\est_{ij}]\|_{\psi_2} \leq \cpsi \quad \text{and} \quad \|\bbE[\est_{ij}] - \theta_{ij}\|_\mathcal{X} \leq \bias.$$
    \end{itemize}
\end{restatable}

\Cref{assumption:edge_process_est} requires that for a given temporal observation, we can have a reasonable estimate of the probability measure \(W^*(\xi_i,\xi_j)\) for each edge \((i,j)\) by only considering the information carried by this edge. Here, $\cpsi$ controls the concentration of the estimator around its mean (the sub-Gaussian norm), while $\bias$ bounds the worst-case bias across all edges.
An example of such function \(\Est\) in the case of a Bernoulli process is the empirical mean of the sequence. In this case, we have that \(\Est(\e{ij}{1},\ldots,\e{ij}{T}) = \frac{1}{T}\sum_{t=1}^T \e{ij}{t}\) (details and more examples can be found in \cref{sec:a_examples}).
If the parameter space \(\mathcal{X}\) is bounded, \Cref{assumption:edge_process_est} is trivially satisfied for any unbiased estimator. This assumption may seem somewhat restrictive, but we will show in \Cref{sec:a_examples} that it is satisfied by some common models in the literature. We note that this condition is sufficient but not known to be necessary, and that the results could be extended to more general estimators.

\subsection{Estimation of the Decorated Graphon}
\label{subsec:estimation_graphon}

We will start by estimating the matrix of parameters \(\theta\) with a decorated block with \(k\) blocks by minimizing a similar least-squares criteria as in the binary graphon literature \citep{gao_rateoptimal_2015}:
\begin{equation}
    \label{eq:mse_loss}
    \mathcal{L}(Q,z) = \sum_{i,j\in[n]} \|Q_{z(i)z(j)} - \est_{ij} \|_\mathcal{X}^2,
\end{equation}
where \(Q \in \mathcal{X}_{sym}^{k\times k}\) is a symmetric matrix of parameters, and \(z:[n] \mapsto [k]\) is a function that assigns each node to a group. Our least squares estimator is then given by \(\hat{\theta}_{ij} = \hat{Q}_{\hat{z}(i)\hat{z}(j)}\), where
\begin{equation}
    \label{eq:def_estimator}
    \hat{Q}, \hat{z} = \underset{Q \in \mathcal{X}_{sym}^{k\times k}, z \in \mathcal{Z}_{n,k}}{\operatorname{argmin}} \sum_{i,j\in[n]} \|Q_{z(i)z(j)} - \est_ {ij} \|_\mathcal{X}^2,
\end{equation}
and where \(\mathcal{Z}_{n,k}\) is the set of functions from \([n]\) to \([k]\) representing all possible node groupings. Essentially, this is assigning nodes to $k$ communities and estimating a $k\times k$ matrix of edge-process parameters ($Q$), by minimizing the sum of squared differences between the observed edge parameters $\est_{ij}$ and the model $Q_{z(i),z(j)}$. This is analogous to fitting a stochastic blockmodel on the estimated edge summary statistics. While optimizing \cref{eq:mse_loss} exactly is NP-hard, in practice one could initialize $z$ via a spectral or clustering method on $(\est_{ij})$ and then perform iterative refinement, as is common in the literature \citep{olhede_network_2014}.

We can now state our main results on the convergence rate of the estimator \(\hat{\theta}\) to the true parameter \(\theta\). We first present the results for the case where the edge processes are generated by a block-model, and then for the general case where the underlying graphon \(W^*\) is Hölder-smooth, i.e.,
\[W \in \mathcal{H}_{\alpha}(M) \Longleftrightarrow \|W(x,y)-W(u,v)\|_\mathcal{X} \leq M\left(|x-u| + |y-v|\right)^\alpha.\]

\begin{restatable}{theorem}{theoremAbstractResultBlockmodel}
    \label{theorem:abstract_result_blockmodel}
    Assume that \(A\) is generated by a block-model, i.e.,  there exists \(k \leq n\), such that \(\theta_{ij} = Q_{z(i),z(j)}\) for some \(Q \in \mathcal{X}^{k \times k}\) and \(z \in \mathcal{Z}_{n,k}\). Moreover, let \(\Est\) satisfy \Cref{assumption:edge_process_est}, and \(\hat{\theta}\) as in \cref{eq:def_estimator}. Then, for any constant \(C'>0\), there exists a constant \(C>0\) only depending on \(C'\), such that with probability at least \(1-\exp \left(-C^{\prime} n \log k\right)\)
    \[
    \frac{1}{n^2}\sum_{i,j\in[n]}\|\hat{\theta}_{ij}-\theta_{ij}\|_\mathcal{X}^2 \leq C \left[\cpsi^2 \left(\frac{k^2}{n^2}+ \frac{\log k}{n}\right) + \bias^2 \right].
    \]
\end{restatable}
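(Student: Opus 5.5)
We follow the Gao--Lu--Zhou least-squares argument, carried out in the Hilbert space $\mathcal{X}$ instead of $\mathbb{R}$. Write $\est_{ij} = \theta_{ij} + b_{ij} + \varepsilon_{ij}$, where $b_{ij} = \bbE[\est_{ij}] - \theta_{ij}$ is the bias, with $\|b_{ij}\|_\mathcal{X} \le \bias$, and $\varepsilon_{ij} = \est_{ij} - \bbE[\est_{ij}]$ is the centered noise. The noise is sub-Gaussian with $\psi_2$-norm at most $\cpsi$, and its terms are independent across unordered pairs given the latent variables. Since the true $(Q,z)$ is feasible in \cref{eq:def_estimator}, the basic inequality $\mathcal{L}(\hat Q,\hat z)\le \mathcal{L}(Q,z)$ expands to
\begin{equation*}
\sum_{i,j}\|\hat\theta_{ij}-\theta_{ij}\|_\mathcal{X}^2 \le 2\sum_{i,j}\langle \hat\theta_{ij}-\theta_{ij},\, b_{ij}+\varepsilon_{ij}\rangle_\mathcal{X}.
\end{equation*}
We treat the bias part by Cauchy--Schwarz: it is at most $2\|\hat\theta-\theta\|\, n\bias$, where $\|\cdot\|$ is the Frobenius-type norm $(\sum_{i,j}\|\cdot\|_\mathcal{X}^2)^{1/2}$. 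After dividing by $\|\hat\theta-\theta\|$ and squaring with $(a+b)^2\le 2a^2+2b^2$, this part contributes the $C\bias^2$ term.

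For the noise part we use the standard decomposition. Let $\tilde\theta$ be the block-average of $\theta$ (equivalently of $\bbE\est$) under the partition $\hat z$, that is, the projection of $\theta$ onto the block-constant matrices for $\hat z$. Then
\begin{equation*}
\langle \hat\theta-\theta,\varepsilon\rangle = \langle \hat\theta-\tilde\theta,\varepsilon\rangle + \langle \tilde\theta-\theta,\varepsilon\rangle .
\end{equation*}
The first term is bounded by $\|\hat\theta-\tilde\theta\|\,\sup_{z'}\sup_{U}\langle U,\varepsilon\rangle$. Here $U$ ranges over unit-norm block-constant matrices for the partition $z'$. For fixed $z'$, this supremum equals $(\sum_{a\le b}\|\bar\varepsilon_{ab}\|_\mathcal{X}^2\, n_{ab})^{1/2}$, where $\bar\varepsilon_{ab}$ is the block mean of the noise and $n_{ab}$ the block size. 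The second term is $\|\tilde\theta-\theta\|$ times $\langle w,\varepsilon\rangle$ for a fixed unit direction $w$. The union bound runs over the $k^n$ partitions, which gives the $n\log k$ term, while the $k^2$ block means give the $k^2$ term. Finally we use $\|\hat\theta-\tilde\theta\|,\|\tilde\theta-\theta\|\le\|\hat\theta-\theta\|$ (Pythagoras: both $\hat\theta$ and $\theta$ minus their projections...) and divide through.

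The main obstacle is that $\varepsilon_{ij}$ is Hilbert-valued, and we only know that its norm is sub-Gaussian; we do not know that its one-dimensional projections are sub-Gaussian with independent coordinates. For a scalar projection $\langle w,\varepsilon\rangle=\sum_{ij}\langle w_{ij},\varepsilon_{ij}\rangle_\mathcal{X}$ this is still fine. Each summand is centered and satisfies $|\langle w_{ij},\varepsilon_{ij}\rangle|\le\|w_{ij}\|\,\|\varepsilon_{ij}\|$, so it is sub-Gaussian with norm at most $\|w_{ij}\|\cpsi$. Since the summands are independent over $i<j$, Hoeffding's inequality gives a tail $\exp(-ct^2/\cpsi^2)$. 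For the block-mean norms $\|\bar\varepsilon_{ab}\|_\mathcal{X}$ we need a vector-valued concentration for sums of independent centered Hilbert-valued variables with sub-Gaussian norms, i.e.\ a Pinelis-type inequality, which holds because Hilbert spaces are $2$-smooth. It gives
\begin{equation*}
\Bigl\|\sum \varepsilon_{ij}\Bigr\|_\mathcal{X}\lesssim \cpsi\sqrt{n_{ab}}\,(1+u)
\end{equation*}
with probability at least $1-e^{-u^2}$. Summing the squares over the $k^2$ blocks controls a $\chi^2$-like quantity: its mean is of order $\cpsi^2k^2$, plus a deviation of order $\cpsi^2(n\log k)$ after the union bound over $k^n$ labelings. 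Choosing the deviation level proportional to $C'n\log k$ gives probability $1-\exp(-C'n\log k)$. Combining everything yields $\|\hat\theta-\theta\|^2\lesssim\cpsi^2(k^2+n\log k)+n^2\bias^2$, and dividing by $n^2$ gives the claimed bound.
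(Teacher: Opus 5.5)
Your proof is correct. It shares the Gao--Lu--Zhou skeleton with the paper: the basic inequality around a feasible point, the projection $\tilde\theta$ of $\theta$ onto $\hat z$-block-constant matrices, and a union bound over the $k^n$ labelings. It differs from the paper's proof in three places.

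\emph{Bias.} You absorb the bias by Cauchy--Schwarz inside the basic inequality written around $\theta$. The paper instead splits $\|\hat\theta-\theta\|^2\le 2\|\hat\theta-\Eest\|^2+2n^2\bias^2$ with $\Eest=\bbE[\est]$, and reruns the unbiased argument with $\Eest$ in place of $\theta$. That rerun implicitly needs $\Eest$ to be block-constant, hence feasible. This holds when the law of each edge series is a function of $\theta_{ij}$ alone. It would fail, for example, for Markov chains with edge-specific initial distributions. Your route needs no such condition.

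\emph{Linearization.} You use Pythagoras, $\|\hat\theta-\tilde\theta\|,\|\tilde\theta-\theta\|\le\|\hat\theta-\theta\|$, and divide through. The paper instead bounds $\|\hat\theta-\tilde\theta\|$ separately via $\hat Q_{ab}=\bar\est_{ab}(\hat z)$ and then solves a quadratic inequality. Both routes reduce to controlling $\sup_{z}\sum_{a,b}n_an_b\|\bar\varepsilon_{ab}(z)\|_{\mathcal X}^2$.

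\emph{Block-mean concentration.} You control that supremum with a dimension-free Hilbert-space inequality giving $\|\sum\varepsilon_{ij}\|_{\mathcal X}\lesssim\cpsi\sqrt{n_an_b}$. The paper's proof of \Cref{lemma:concentration_1} instead applies the triangle inequality $\|\sum\varepsilon_{ij}\|_{\mathcal X}\le\sum\|\varepsilon_{ij}\|_{\mathcal X}$. It then uses a Hoeffding-type bound that ignores the nonzero mean $\bbE\|\varepsilon_{ij}\|_{\mathcal X}$. Since $\sum\|\varepsilon_{ij}\|_{\mathcal X}$ is of order $n_an_b\,\bbE\|\varepsilon_{ij}\|_{\mathcal X}$ rather than $\sqrt{n_an_b}\,\cpsi$, that step does not deliver the stated $\exp(-ct/\cpsi^2)$ tail. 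The cancellation you preserve is exactly what the Hilbert-valued setting requires, so the obstacle you flagged is the real one.

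Two details should be tightened.
\begin{enumerate}
\item The Hoeffding-type form of Pinelis's inequality is for bounded increments. For sub-Gaussian norms, use the Bernstein (moment) form of Pinelis--Sakhanenko or Yurinskii with $\bbE\|\varepsilon_{ij}\|_{\mathcal X}^m\le\frac{m!}{2}(C\cpsi)^m$. This gives sub-Gaussian tails at scale $\cpsi\sqrt{n_an_b}$ for deviations $u\lesssim\sqrt{n_an_b}$; beyond that, the crude bound via $\sum\|\varepsilon_{ij}\|_{\mathcal X}$ takes over. Hence $n_an_b\|\bar\varepsilon_{ab}\|_{\mathcal X}^2$ is sub-exponential with norm $\lesssim\cpsi^2$. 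Distinct unordered block pairs involve disjoint edges and are therefore independent. Bernstein's inequality plus the $k^n$ union bound then give your $\cpsi^2(k^2+n\log k)$.
\item The parenthetical ``equivalently of $\bbE\est$'' is wrong: the two block averages differ by the block average of the bias $b$. Your Pythagoras step requires the projection of $\theta$, which is what you actually use, so nothing breaks.
\end{enumerate}
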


The rate in \Cref{theorem:abstract_result_blockmodel} has three components: $k^2/n^2$ for estimating $k^2$ block parameters from $n^2$ edge summaries, $\log k / n$ for clustering $n$ nodes into $k$ groups, and $\bias$ accounting for the bias of the edge-wise estimator. When the bias term is $0$, the first two match the static graphon literature \citep{gao_rateoptimal_2015,klopp_oracle_2017}. The factors $\psi^2$ and $\bias$ capture the per-edge estimation error (in the case of simple graphon estimation, $\cpsi^2$ can be related to $(1-\chi)^2$ in \citet{wu_tractably_2025}, though their constant measures non-exchangeability rather than edge-estimation error); $\cpsi^2$  is equal to $1/T$ for Bernoulli processes (\Cref{prop:bernoulli_process}), and scales as $1 / (T\pi_*)$ for Markov chains (\Cref{theorem:mc_graphon}). In these examples, $\psi^2 \to 0$ as $T \to \infty$, reflecting the intuition that longer observation windows yield better edge summaries and thus better graphon estimates. The bias term $\bias$ captures systematic errors in the edge-wise estimator, which could arise from model misspecification or insufficient temporal data. In the Markov chain example, $\bias$ decays exponentially in $T$ (\Cref{lemma:concentration_MC}), so that $\bias^2$ becomes negligible compared to $\cpsi^2$ for any reasonably long observation window. This new term allows us to explicitly capture the effect of the edge-level bias on the overall graphon estimation error.

\begin{restatable}{theorem}{theoremAbstractResultConv}
    \label{theorem:abstract_result_conv}
    Assume \(W \in \mathcal{H}_{\alpha}(M)\). Let \(\theta\) be a symmetric matrix such that \(\theta_{ij} = W(\xi_i,\xi_j)\) for some iid latent variables \(\xi_i \sim  U[0,1]\). Moreover, let \(\Est\) satisfy \Cref{assumption:edge_process_est}, \(k=\lceil n^{1/(\alpha \wedge 1+1)}\rceil\) and \(\hat{\theta}\) as in \cref{eq:def_estimator}. Then, for any constant \(C'>0\), there exists a constant \(C>0\) only depending on \(C'\), such that
    \[\frac{1}{n^2} \sum_{i, j \in[n]}\left\|\hat{\theta}_{i j}-\theta_{i j}\right\|_\mathcal{X}^2 \leq C\left(\left(1+\cpsi^2\right)n^{-2 \alpha /(\alpha+1)}+\cpsi^2\frac{\log n}{n} + \bias^2 \right)\]
    with probability at least \(1-\exp \left(-C^{\prime} n\log n\right)\), where the probability is jointly over \(\{A_{ij}\}\) and \(\{\xi_i\}\).
\end{restatable}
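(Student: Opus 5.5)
Following \citet{gao_rateoptimal_2015}, I would reduce \Cref{theorem:abstract_result_conv} to an oracle inequality with an approximation term, reusing the machinery behind \Cref{theorem:abstract_result_blockmodel}. Write $\est_{ij}=\theta_{ij}+b_{ij}+\varepsilon_{ij}$, where $b_{ij}=\bbE[\est_{ij}]-\theta_{ij}$ satisfies $\|b_{ij}\|_\mathcal{X}\le\bias$, and $\varepsilon_{ij}=\est_{ij}-\bbE[\est_{ij}]$ is centered. Conditionally on $\{\xi_i\}$, the $\varepsilon_{ij}$ are independent across $i<j$ and sub-Gaussian with norm at most $\cpsi$. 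First I construct an oracle block approximation. Partition $[0,1]$ into the $k$ intervals $I_a=((a-1)/k,a/k]$ and set $z^*(i)=a$ when $\xi_i\in I_a$. Then define $\theta^*_{ij}=Q^*_{z^*(i)z^*(j)}$, where $Q^*_{ab}$ is the average of $\theta_{ij}$ over pairs in block $(a,b)$. This average exists because $\mathcal{X}$ is convex. Hölder continuity gives $\|\theta_{ij}-\theta^*_{ij}\|_\mathcal{X}\le M(2/k)^{\alpha\wedge 1}$ deterministically. For $\alpha>1$ the bound only uses $\alpha\wedge1$, which matches the choice of $k$. So $\frac1{n^2}\sum\|\theta-\theta^*\|^2\lesssim k^{-2(\alpha\wedge1)}$. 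Since the bound holds for every realization of $\xi$, the joint probability statement reduces to a conditional one.

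Second, I use the basic inequality. Because $(\hat Q,\hat z)$ minimizes $\mathcal L$, we have $\sum\|\hat\theta-\est\|^2\le\sum\|\theta^*-\est\|^2$. Expanding in the Hilbert space gives
\begin{equation*}
\sum_{i,j}\|\hat\theta_{ij}-\theta_{ij}\|^2\le\sum_{i,j}\|\theta^*_{ij}-\theta_{ij}\|^2+2\sum_{i,j}\langle\hat\theta_{ij}-\theta^*_{ij},\,\varepsilon_{ij}+b_{ij}\rangle_\mathcal{X}.
\end{equation*}
For the bias cross term, Cauchy--Schwarz and $2xy\le x^2/4+4y^2$ absorb it into the left-hand side plus $n^2\bias^2$ and an approximation term. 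For the noise term, I write $\hat\theta-\theta^*$ as lying in the union, over pairs of labelings $(z,z^*)$, of linear spaces of $\mathcal{X}$-valued matrices that are piecewise constant on the common refinement blocks. For each fixed labeling, $\sup_{\|\Delta\|=1}\langle\Delta,\varepsilon\rangle$ equals the norm of the block-averaged noise. This norm has sub-Gaussian fluctuation of order $\cpsi\sqrt{k^2}$ above its mean. The $k^n$ labelings cost a union-bound factor $e^{n\log k}$, giving a deviation $\cpsi\sqrt{k^2+n\log k}$ with probability $1-e^{-C'n\log k}$. This is exactly the step used for \Cref{theorem:abstract_result_blockmodel}, so I would invoke it rather than redo it.

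The main obstacle is the Hilbert-valued concentration, if it has to be redone. The terms $\langle\Delta_{ij},\varepsilon_{ij}\rangle$ must be handled with only the scalar sub-Gaussianity of $\|\varepsilon_{ij}\|$. The block average $\frac1{m}\sum\varepsilon_{ij}$ of independent centered vectors needs a Pinelis-type bound in Hilbert space to have sub-Gaussian norm $\lesssim\cpsi/\sqrt m$. I would use that bound, since Hilbert spaces are $2$-smooth. Combining the pieces gives
\[
\tfrac1{n^2}\sum\|\hat\theta-\theta\|^2\lesssim k^{-2(\alpha\wedge1)}+\cpsi^2\Big(\tfrac{k^2}{n^2}+\tfrac{\log k}{n}\Big)+\bias^2 .
\]
With $k\asymp n^{1/(\alpha\wedge1+1)}$, both $k^{-2(\alpha\wedge 1)}$ and $k^2/n^2$ equal $n^{-2(\alpha\wedge1)/(\alpha\wedge1+1)}$, which yields the $(1+\cpsi^2)$ factor. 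Also $\log k\asymp\log n$, which makes the probability bound $1-\exp(-C'n\log n)$.
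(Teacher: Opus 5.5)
\textbf{The gap is in the noise term.} You place $\hat\theta-\theta^*$ in the space of $\mathcal{X}$-valued matrices that are piecewise constant on the \emph{common refinement} of $\hat z$ and $z^*$. The node classes of that refinement are indexed by pairs $(\hat z(i),z^*(i))$, so there can be $\min(n,k^2)$ of them and $\min(n,k^2)^2$ cells, not $k^2$.

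The supremum of $\langle\Delta,\varepsilon\rangle$ over unit $\Delta$ in that space is the norm of the cell-averaged noise. Its square has mean of order $\cpsi^2$ per nonempty cell in the worst case. A deviation of $\cpsi\sqrt{k^2+n\log k}$ above the mean says nothing about the mean itself, which can be of order $\cpsi\min(n,k^2)$.

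For $\alpha\le1$ your choice of $k$ gives $k^2\ge n$. So there are labelings $z$ whose refinement with $z^*$ has of order $n$ classes, and the supremum over the union is then of order $\cpsi n$. The bound you would actually obtain on $\frac{1}{n^2}\sum_{i,j}\|\hat\theta_{ij}-\theta_{ij}\|_\mathcal{X}^2$ is $O(\cpsi^2)$, with no decay in $n$. This is also not ``exactly the step'' of \Cref{theorem:abstract_result_blockmodel}: that proof never passes through the common refinement, which would cost $k^4$ cells there as well.

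\textbf{The missing idea} is the paper's split through $\tilde\theta$, the $\hat z$-block average of $\theta$: $\hat\theta-\theta^*=(\hat\theta-\tilde\theta)+(\tilde\theta-\theta^*)$.
The first piece is piecewise constant on the $k^2$ blocks of $\hat z$ alone, so it is controlled by $\cpsi\sqrt{k^2+n\log k}$ (\Cref{lemma:concentration_1,lemma:concentration_3}).
The second piece is, for each fixed $z$, a single matrix that is deterministic given $\{\xi_i\}$. The union bound over $k^n$ labelings of one scalar sub-Gaussian variable then costs only $\cpsi\sqrt{n\log k}$ (\Cref{lemma:concentration_4}). Solving the resulting quadratic inequality in $\|\hat\theta-\theta^*\|$ finishes the argument.

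In your formulation this can even be made orthogonal. With $P_z$ the block-averaging projection for $z$, $\hat\theta-\theta^*=P_{\hat z}(\hat\theta-\theta^*)-(I-P_{\hat z})\theta^*$. So $\hat\theta-\theta^*$ lies in the span of the $\hat z$-block-constant matrices plus one fixed direction.

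With that replacement the rest of your plan is sound:
\begin{itemize}
\item the oracle construction and the basic inequality;
\item your Cauchy--Schwarz absorption of the bias cross term, which is simpler than the paper's rerun with $\Eest=\bbE[\est]$ and \Cref{lemma:oracle_approx_bias};
\item the Pinelis-type bound for block averages, which is more careful than the triangle-inequality step in the paper's proof of \Cref{lemma:concentration_1}.
\end{itemize}

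A smaller point, shared with the paper: for $\alpha>1$ your final display gives $(1+\cpsi^2)n^{-1}$. Its approximation part is not dominated by $n^{-2\alpha/(\alpha+1)}+\cpsi^2\log n/n$ when $\cpsi$ is small. The stated bound still holds, because a H\"older condition with exponent $\alpha>1$ forces $W$ to be constant, so $\theta^*=\theta$.
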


We see here the same behavior as before, but the improvement we could get from \(\Est\) cannot allow us to go faster than the usual nonparametric rate of \(n^{-2\alpha/(\alpha+1)}\). This is due to the approximation of the continuous graphon $W$ by a piecewise constant function, and thus this term appears as a lower bound in the oracle estimator. On the other hand, if \(\Est\) has really high bias, increasing the size of the graph also cannot help. This is where the separation between the edge process (the use of $\Est$) and the graph aspect (unknown design points $\xi_i$) comes in and prevents us from using the temporal repetitions of the edge process to potentially improve the error of the oracle. We will see in the following section that \(\cpsi^2\) will depend on the number of time points \(T\) observed and the mixing properties of each edge process.

For $\alpha < 1$, the rate $n^{-2\alpha/(\alpha+1)}$ in \Cref{theorem:abstract_result_conv} is in fact minimax optimal for the node-centric component of the estimation error, even in the presence of temporal replication. Static graphon estimation is the special case $T=1$ and $\mathcal{X} = \mathbb{R}$, for which the decorated graphon reduces to a classical graphon, so the minimax lower bound of \citet[Theorem~1.2]{gao_rateoptimal_2015} applies and yields
\begin{equation}
    \label{eq:minimax_floor}
    \inf_{\hat{\theta}} \sup_{W \in \mathcal{H}_\alpha(M)}
    \frac{1}{n^2} \sum_{i,j \in [n]}
    \|\hat{\theta}_{ij} - \theta_{ij}\|_\mathcal{X}^2
    \geq C\, n^{-2\alpha/(\alpha+1)}.
\end{equation}
Additional temporal observations can only help, so this floor persists for any $T$; matching the upper bound in \Cref{theorem:abstract_result_conv}, our estimator is rate-optimal in $n$ throughout the H\"older regime $\alpha < 1$.

The same argument as in \citet[Proposition 3.5]{klopp_oracle_2017} lifts these bounds from the parameter matrix $\theta$ to the graphon itself: the Mean Integrated Squared Error (MISE) between the true graphon $W^*$ and its estimate $\hat{W}$ is controlled by the rate of \Cref{theorem:abstract_result_conv} together with an additional approximation (agnostic) term of order $n^{-\alpha \wedge 1}$.

\begin{remark}[Heterogeneous edge processes]
\label{remark:heterogeneous_edges}
    \Cref{assumption:edge_process_est} requires uniform bounds $\cpsi$ and $\bias$ across all edges. This can be relaxed to edge-dependent bounds $\cpsi_{ij}$ and $\bias_{ij}$, replacing the global worst case by blockwise averages in the main theorems and yielding sharper rates when only a few edges have poor concentration (see \cref{subsec:heterogeneous_extension}).
\end{remark}

\begin{remark}[Statistical efficiency of the two-stage procedure]
    \label{remark:statistical_efficiency}
    A distinction should be made between the proposed methodology and the intuitive approach of estimating parameters directly from grouped edges.
    In the latter, nodes are first clustered and parameters are subsequently estimated by pooling all edge processes within each cluster under the assumption of a shared generating mechanism.
    In contrast, the current study estimates parameters for each edge process independently before performing aggregation.

    In the block-model regime (\Cref{theorem:abstract_result_blockmodel}), these two approaches are asymptotically comparable: within a correctly identified block $(a,b)$, both the pooled estimator and the average of $\lfloor n/k \rfloor^2$ independent edge-level estimates converge at the same rate, since the edge-level estimators are independent and identically distributed conditional on the block assignment.
    There is thus no differential cost to the two-stage procedure in this regime: the prefactor $\cpsi^2$, which governs the variance of each edge-level estimate, enters both approaches symmetrically and sets their common scale.

    In the H\"older-smooth regime, however, pooling becomes fundamentally more delicate: even within a correctly identified block, different edges $(i,j)$ and $(u,v)$ may be governed by different parameters (since $W(\xi_i,\xi_j) \neq W(\xi_u,\xi_v)$ in general), so pooled observations are independent but not identically distributed.
    Quantifying the efficiency gain of a pooled estimator in this setting would require a substantially different analysis; see the related work of \citet{leskela_robust_2026}.
    We return to the practical trade-offs of pooling, and to a possible hybrid, in \Cref{sec:conclusion}.
\end{remark}

\section{Example Applications of Our Framework}
\label{sec:a_examples}

In this section, we will show that for specifying a graph-valued discrete stochastic process, one only has to specify a parametric family of edge processes and an estimator. Then using the result from the previous section, we can estimate the decorated graphon based on this estimator using \Cref{theorem:abstract_result_conv}. The strength of our approach is that any edge process that is a member of a parametric family where we can estimate the parameters \(\theta_{ij}\) independently for each edge \((i,j)\) can be directly plugged into our result and will yield an estimator and the corresponding convergence rates.  In the following subsections, we will focus on some examples of a single parametrized edge process, explicitly define the edge-wise estimators defined with \(\Est\), and show that they satisfy \Cref{assumption:edge_process_est}.

\begin{remark}
    $\Est$ is a deterministic function, and the random variable $\est_{ij}$ is the result of applying this function to a subset of the edge process. The random variable $\est_{ij}$ is then a function of the edge process between $i$ and $j$. When defining $\Est$, we will often directly express it via the observed realization of the edge process, as it makes the notation cleaner.
\end{remark}

In order to make the exposition clearer, we will define a shorthand. Say that a \(2^{\mathbb{N}}\)-graphon is such that its image is contained in a specific subset of discrete binary processes (e.g., Bernoulli processes). Additionally, under \Cref{assumption:parametric_family}, say we have that this subset is parametrizable with elements of \(\mathcal{X}\). We will say that \(W\) is \(\mathcal{X}\)-valued graphon or a \textit{subset type} graphon (e.g., Bernoulli process graphon) if \(W:[0,1]^2\rightarrow \mathcal{P}_{\mathcal{X}}\left(2^{\mathbb{N}}\right)\).

\subsection{Binary Edge Variables}

We consider the case where the edge processes are Bernoulli processes, i.e.,
\[\e{ij}{t} \overset{iid}{\sim} \operatorname{Bernoulli}(\theta_{ij}) \quad \text{and} \quad \theta_{ij} = W(\xi_i,\xi_j) \in [0,1].\]
Note that this process includes identical replication (i.e., aligned graphs with the same underlying latents) of the same simple graphon trivially. We define $\Est$ as the empirical mean of the edge process:
\begin{equation}
    \label{eq:est_bernoulli_process}
    \Est\left(\e{ij}{1},\ldots,\e{ij}{T}\right) := \frac{1}{T}\sum_{t=1}^T\e{ij}{t}.
\end{equation}
Since \(\est_{ij}= \Est\left(\e{ij}{1},\ldots,\e{ij}{T}\right) \) trivially satisfies \Cref{assumption:edge_process_est} with \(\cpsi^2 = 1/T\) and \(\bias=0\), we obtain the following convergence rate for the graphon estimation:
\begin{proposition}
    \label{prop:bernoulli_process}
    Let \(W\) be a Bernoulli process graphon satisfying the assumptions of \Cref{theorem:abstract_result_conv}. Let \(\Est\) be as in \cref{eq:est_bernoulli_process}, namely a sample average. Then, for any constant \(C'>0\), there exists a constant \(C>0\) only depending on \(C'\), such that
    \[\frac{1}{n^2} \sum_{i, j \in[n]}\left(\hat{\theta}_{i j}-\theta_{i j}\right)^2 \leq C\left(n^{-2 \alpha /(\alpha+1)}+\frac{\log n}{nT}\right)\]
    with probability at least \(1-\exp \left(-C^{\prime} n\log n\right)\), where the probability is jointly over \(\{A_{ij}\}\) and \(\{\xi_i\}\).
\end{proposition}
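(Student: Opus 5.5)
The plan is to deduce \Cref{prop:bernoulli_process} directly from \Cref{theorem:abstract_result_conv}. That requires checking that the sample mean in \cref{eq:est_bernoulli_process} satisfies \Cref{assumption:edge_process_est} with $\bias = 0$ and $\cpsi^2 \lesssim 1/T$, and then simplifying the resulting bound.

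First, fix the parametric setting of \Cref{assumption:parametric_family}. Take $\mathcal{X} = [0,1] \subset \mathbb{R}$ with the usual inner product. This set is convex, and the Bernoulli process family is identifiable by its success probability. Conditionally on $\xi_i, \xi_j$, the variables $\e{ij}{1},\ldots,\e{ij}{T}$ are iid $\operatorname{Bernoulli}(\theta_{ij})$. Hence $\bbE[\est_{ij}] = \theta_{ij}$, which gives $\bias = 0$.

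For the sub-Gaussian norm, write $\est_{ij} - \theta_{ij} = \frac{1}{T}\sum_t (\e{ij}{t} - \theta_{ij})$. This is an average of $T$ independent centered variables, each taking values in an interval of length $1$. Hoeffding's lemma gives $\bbE \exp(s(\est_{ij}-\theta_{ij})) \le \exp(s^2/(8T))$ for all $s$. By the standard equivalence between the moment generating function and the $\psi_2$-norm characterizations of sub-Gaussianity, this yields $\|\est_{ij}-\theta_{ij}\|_{\psi_2} \le c_0/\sqrt{T}$ for an absolute constant $c_0$.

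One could instead bound directly $\bbP(|\est_{ij}-\theta_{ij}|>u) \le 2e^{-2Tu^2}$ and integrate to verify $\bbE \exp(T(\est_{ij}-\theta_{ij})^2/c^2)\le 2$ for a suitable absolute $c$. Either route gives $\cpsi^2 = c_0^2/T$, uniformly in $(i,j)$ and independent of $\theta_{ij}$. The absolute constant $c_0^2$ is absorbed into $C$, which the paper summarizes as $\cpsi^2 = 1/T$. This uniformity holds conditionally on the latents, and that is what \Cref{theorem:abstract_result_conv} uses, since its probability is joint over $\{A_{ij}\}$ and $\{\xi_i\}$.

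Next, apply \Cref{theorem:abstract_result_conv} with $k=\lceil n^{1/(\alpha\wedge1+1)}\rceil$. On an event of probability at least $1-\exp(-C'n\log n)$, it gives
\[
\frac{1}{n^2}\sum_{i,j}(\hat\theta_{ij}-\theta_{ij})^2 \le C\Bigl((1+c_0^2/T)\,n^{-2\alpha/(\alpha+1)} + \frac{c_0^2}{T}\frac{\log n}{n}\Bigr).
\]
Since $T\ge1$, we have $1+c_0^2/T\le 1+c_0^2$, and absorbing constants into $C$ gives the stated bound.

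There is no real obstacle here. The only care needed is the constant-level translation from Hoeffding's tail bound to the specific Orlicz-norm definition in \Cref{assumption:edge_process_est}. This needs to be done with an absolute constant, so that $C$ still depends only on $C'$.
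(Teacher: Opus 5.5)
Your proposal is correct and follows the paper's route: the paper notes that the sample mean satisfies \Cref{assumption:edge_process_est} with $\bias = 0$ and $\cpsi^2 = 1/T$, then plugs this into \Cref{theorem:abstract_result_conv}. Your Hoeffding-based check of the $\psi_2$-norm, giving $\cpsi^2 \le c_0^2/T$ for an absolute $c_0$ uniformly given the latents, makes explicit the step the paper calls trivial, and the constant is absorbed into $C$ as you say.
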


When \(T=1\), we retrieve the optimal nonparametric rate of \citet{gao_rateoptimal_2015}. For any \(T\) (even possibly growing with \(n\)), we still have the usual dependence of the rates of the underlying smoothness of $W$:
\[
    \frac{1}{n^2} \sum_{i, j \in[n]}\left(\hat{\theta}_{i j}-\theta_{i j}\right)^2 \leq  \begin{cases}
        n^{-2\alpha/(\alpha+1)} & \text{if } \alpha < 1 \\
        (nT)^{-1}\log(n) & \text{if } \alpha \geq 1 \\
    \end{cases}
\]
This highlights the fact that the edge-wise estimator \(\est\) can only help with the clustering of the nodes, but not with the estimation error done when approximating a non-smooth \(W\) with a piecewise constant function.

\begin{remark}[Temporal replication reduces the clustering penalty]
\label{remark:clustering_penalty}
For $\alpha \geq 1$, temporal replication has a qualitatively different effect from the regime $\alpha < 1$. In the static case ($T=1$), \citet{gao_rateoptimal_2015} show that the minimax rate saturates at $n^{-1}\log n$, where the logarithmic factor reflects the combinatorial cost of assigning $n$ nodes to $k$ clusters. \Cref{prop:bernoulli_process} shows that this logarithmic penalty is modulated by $T$: for $\alpha \geq 1$, the rate becomes $(nT)^{-1}\log n$, so that temporal replication effectively reduces the clustering difficulty.
\end{remark}

\subsection{\texorpdfstring{\(M\)}{M}-Markov Chain Case}
\label{subsec:m_markov_chain_and_periodic}

In this section, we show that if the edge processes are generated by a binary \(M\)th order Markov chain, we can define an estimator \(\Est\) that satisfies the conditions of \Cref{assumption:edge_process_est}. It is easy to see that by considering the state space as \(\{0,1\}^M\), we can now consider a usual Markov chain \citep{billingsley_statistical_1961,wang_approximating_1992}; see \Cref{fig:2_mc_to_1_mc} for a visual representation.

\begin{figure}[h!]
    \centering
    \input{figures/mc_decorated_graphon.tikz}
    \caption{Representation of the decorated graphon where edge dynamics are governed by a $2$-Markov chain $W(\xi_i,\xi_j)$ for each pair of latent variables $(\xi_i, \xi_j)$. We use a De Bruijn graph \citep{debruijn_combinatorial_1946} to represent a second-order Markov chain on \(\{0,1\}\) as a first-order Markov chain on \(\{0,1\}^2\). It is easy to check that any Markov chain represented by a De Bruijn graph is irreducible and aperiodic if all the edges have non-zero weight \citep{kimpton_modelling_2022a}, and that only $4$ parameters are needed to fully specify the transition probabilities of the displayed Markov chain.}
    \label{fig:2_mc_to_1_mc}
\end{figure}

We now show that for a discrete Markov chain with a finite number of states, the empirical transition probabilities are a good estimator of the transition probabilities, in particular that the empirical transition probabilities satisfy \Cref{assumption:edge_process_est} with a concentration bound that depends on the mixing properties of the chain. However, we need some additional regularity condition on the image of the \(M\)-Markov chain graphon for the concentration bounds to hold.

\begin{assumption}
    \label{assumption:mc_graphon}
    Let \(W\) be a \(2^{\N}\)-graphon such that for all $x,y \in [0,1]$,  the edge process governed by \(W(x,y)\) is a binary \(M\)-Markov chain. We assume there exists $\pi_*\in(0,1)$ and \(\tau\in \mathbb{N}\) such that for each \(M\)-Markov chain following the law \(W(x,y)\), we have that
    \begin{itemize}
        \tightlist
        \item The Markov chain is irreducible and aperiodic.
        \item The stationary distribution \(\pi\)  is unique and is lower bounded by \(\pi_*\).
        \item The mixing time of the Markov chain is uniformly bounded by \(\tau\).
    \end{itemize}
    A sufficient condition for the Markov chain to be irreducible and aperiodic is to have all transitions probabilities to be bounded away from $0$. Thus we can parametrize the transition probabilities as stochastic matrices, i.e., \(\mathcal{X} = \{ P \in  [0,1]^{2^M\times 2^M} \mid P \text{ is row stochastic} \}\).
\end{assumption}

\begin{theorem}
    \label{theorem:mc_graphon}
    Let \(W\) be a \(2^{\N}\)-graphon respecting \Cref{assumption:mc_graphon}. Moreover, assume that we observe chains of length at least $T\geq 2^{M+5}\tau^{3/2}\pi_*^{-5/2}/\log\left(2^{M+1}\pi_*^{-1}\right)$. Then under the assumptions of \Cref{theorem:abstract_result_conv}, there exists $\Est$, so that for any constant \(C'>0\), there exists a constant \(C>0\) only depending on \(C'\), \(\pi_*\), \(M\), and \(\tau\) such that
    \begin{restatable}{equation*}{equationMcGraphon}
        \frac{1}{n^2} \sum_{i, j \in[n]}\left\|\hat{\theta}_{i j}-\theta_{i j}\right\|_\mathcal{X}^2 \leq C\left(\left(1+\frac{2^M M}{T\pi_*^{3/2}}\right) n^{-2 \alpha/ (\alpha+1)}+\frac{2^M M}{T\pi_*^{3/2}}\frac{\log n}{n} \right)
    \end{restatable}
    with probability at least \(1-\exp \left(-C^{\prime} n\log n\right)\), where the probability is jointly over \(\{A_{ij}\}\) and \(\{\xi_i\}\) and $\hat{\theta}$ is our least squares estimator defined in \cref{eq:def_estimator}.
\end{theorem}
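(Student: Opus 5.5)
The plan is to reduce \Cref{theorem:mc_graphon} to \Cref{theorem:abstract_result_conv}. It suffices to exhibit an edge-wise map $\Est$ satisfying \Cref{assumption:edge_process_est} with
\[
\cpsi^2 \lesssim \frac{2^M M}{T\pi_*^{3/2}}
\qquad\text{and}\qquad
\bias^2 \lesssim \cpsi^2 \frac{\log n}{n}\ \text{(or smaller)},
\]
with constants depending only on $M$, $\pi_*$ and $\tau$. Once such an $\Est$ is in hand, the stated bound follows by substitution: the term $(1+\cpsi^2)n^{-2\alpha/(\alpha+1)}$ gives the first summand, $\cpsi^2\log n/n$ gives the second, and the bias is absorbed.

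The construction of $\Est$ proceeds in three steps.

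\textbf{Lift to a first-order chain.} First, lift each $M$-th order chain to a first-order chain on the $2^M$ states of $\{0,1\}^M$ (the De Bruijn representation). The lifted chain inherits irreducibility, aperiodicity, $\pi\ge\pi_*$ and mixing time at most $\tau$ from \Cref{assumption:mc_graphon}.

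\textbf{Define the estimator.} Take $\Est$ to be the empirical transition matrix,
\[
\hat P(s,s') = \frac{N_{ss'}}{N_s},
\]
where $N_s$ and $N_{ss'}$ are the visit and transition counts among the first $T-1$ steps. On the event that some state is unvisited, define $\hat P(s,\cdot)$ as a fixed row, for instance uniform over the two admissible successors. This keeps $\Est$ deterministic and valued in the convex set $\mathcal{X}$ of row-stochastic matrices with Frobenius (Hilbert) norm.

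\textbf{Concentration of visit counts.} Next, control $N_s$. A Bernstein/Paulin-type concentration inequality for Markov chains, with spectral gap bounded through $\tau$, gives $N_s \ge T\pi_*/2$ for all $s$ except with probability of order $2^M\exp(-cT\pi_*^{2}/\tau)$. The lower bound assumed on $T$ is exactly what makes this failure probability small. It also makes the ensuing exponential bias smaller than $\cpsi^2$: the scalings $\tau^{3/2}\pi_*^{-5/2}$ and the logarithm in the hypothesis are consistent with such a Paulin bound, combined with a union bound over $2^{M+1}$ transitions.

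\textbf{Deviation bound.} On the good event, condition on the visit sequence. By the strong Markov property, the successors of successive visits to $s$ are i.i.d.\ draws from $P(s,\cdot)$. Hence each row deviation $\hat P(s,\cdot)-P(s,\cdot)$ is an average of at least $T\pi_*/2$ bounded i.i.d.\ vectors, with sub-Gaussian norm $\lesssim (T\pi_*)^{-1/2}$. Rigorously this is a martingale/Freedman argument, which handles the random $N_s$.

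\textbf{Sub-Gaussian norm and bias.} Summing squares over the $2^M$ rows gives $\|\hat P-P\|_{\mathcal X}$ sub-Gaussian with squared norm $\lesssim 2^M/(T\pi_*)$. Two further factors enter: an extra $M$ from the union and log terms, and an extra $\pi_*^{-1/2}$ from converting the mixing-time-based deviation. Together these produce the $2^M M/(T\pi_*^{3/2})$ scaling. The bound on $\|\hat P-\bbE\hat P\|_{\psi_2}$ follows with the same constant, since $\|\hat P-P\|$ is bounded by $\sqrt{2\cdot 2^M}$ on the bad event, whose probability is exponentially small. The bias $\|\bbE\hat P-P\|$ is then the contribution of the bad event plus the ratio-estimator bias $O(1/(T\pi_*))$. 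This is the exponentially decaying bias referenced as \Cref{lemma:concentration_MC}, and under the hypothesis on $T$ it is dominated by $\cpsi^2$.

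\textbf{Main obstacle.} I expect the main difficulty to be this middle concentration step: getting the uniform lower bound on visit counts together with sub-Gaussian (not merely polynomial) tails, with the exact dependence on $\tau$ and $\pi_*$. I would first try Paulin's Bernstein inequality for the occupation measure, a union bound over states, and then the i.i.d.-successor decomposition. The rest of the argument is bookkeeping and a direct application of \Cref{theorem:abstract_result_conv}.
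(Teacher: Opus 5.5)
Your route is the paper's route. You lift to the $2^M$-state De Bruijn chain, use the empirical transition matrix, control visit counts with Paulin's inequality and row deviations with a Wolfer--Kontorovich/Freedman-type bound (this is \Cref{lemma:concentration_MC}), and then substitute into \Cref{theorem:abstract_result_conv}. Because $C$ may depend on $\pi_*$, $M$ and $\tau$, any bound $\psi^2\le C(\pi_*,M,\tau)/T$ already suffices. So the concentration step you single out is not where the argument breaks; it breaks at the bias.

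You correctly note that you need $\beta^2\lesssim\psi^2\log n/n$, but you then only argue that the bias is dominated by $\psi^2$. That is the wrong comparison. In \Cref{theorem:abstract_result_conv}, $\psi^2$ only ever appears multiplied by $n^{-2\alpha/(\alpha+1)}$ or $\log n/n$, whereas $\beta^2$ appears on its own and does not depend on $n$. Nor is the bias exponentially small. Only the unvisited-state contribution is; the ratio bias of $N_{ss'}/N_s$ is of order $1/T$ (compare $\beta\le c_2 d\tau/(T\pi_*^2)$ in \Cref{lemma:concentration_MC}). So $\beta^2\asymp T^{-2}$ must be compared with $\psi^2\log n/n\asymp\log n/(nT)$. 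No hypothesis on $T$ that does not involve $n$ can make the former smaller; you would need $T\gtrsim n/\log n$.

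Sharper estimates will not repair this, because for this $\Lambda$ the displayed bound is false at fixed $T$. Take $M=1$ and the constant graphon $W\equiv P_0$ with all entries of $P_0$ equal to $1/2$, i.e.\ i.i.d.\ fair coins, which satisfies \Cref{assumption:mc_graphon}.
\begin{itemize}
\item The $\lambda_{ij}$ are then i.i.d.\ with mean $\mathbb{E}\hat P\neq P_0$. The default row contributes nothing here, since it equals $1/2$.
\item For every $T\ge 3$ one has $\mathbb{E}\hat P_{00}<1/2$; this is the finite-sample ``hot hand'' bias of Miller and Sanjurjo. Already for $T=3$ a direct enumeration gives $\mathbb{E}\hat P_{00}=7/16$.
\item Run the proof of \Cref{theorem:abstract_result_conv} with $\theta$ replaced by $\mathbb{E}[\lambda_{ij}]$, exactly as the paper does in its bias section. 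This gives $n^{-2}\sum_{i,j}\|\hat\theta_{ij}-\mathbb{E}\hat P\|_F^2\to 0$ with high probability.
\item Hence the left-hand side is bounded below by $\|\mathbb{E}\hat P-P_0\|_F^2-o(1)>0$, while the right-hand side tends to $0$ as $n\to\infty$.
\end{itemize}

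The paper's own argument has the same omission. It combines \Cref{lemma:concentration_MC} with \Cref{theorem:abstract_result_conv} and drops the $+\beta^2$ term. Its remark that the bias is ``absorbed into the sub-Gaussian norm'' refers to Step 2 of that lemma, which bounds only $\|\hat P-\mathbb{E}\hat P\|_{\psi_2}$, i.e.\ $\psi$, not the separate $\beta^2$ in the rate. What your argument (and the paper's) actually proves is one of two things: the displayed bound plus $C\,d^2\tau^2/(T^2\pi_*^4)$ with $d=2^M$, or the displayed bound under the additional condition $T\gtrsim n/\log n$.
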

The condition on $T$ ensures that the empirical transition probabilities concentrate around the true values and that the  bias is absorbed into the sub-Gaussian norm (\Cref{lemma:concentration_MC}).

We now show that the empirical transition probabilities satisfy \Cref{assumption:edge_process_est} with a concentration bound that depends on the mixing properties of the chain. Let us introduce some notation adapted from \citet{wolfer_minimax_2019} to simplify the following discussion. We define $[d]:=\{1, \ldots, d\}$ and use $T$ to denote the length of the observed chain. The simplex of all distributions over $[d]$ will be denoted by $\Delta_d$, and the collection of all $d \times d$ row-stochastic matrices by $\mathcal{P}_d$. For $\mu \in \Delta_d$, we will write either $\mu(i)$ or $\mu_i$, as dictated by convenience (and equivalently for \(P \in \mathcal{P}_d\)). A Markov chain $(P, \mu)$ on $d$ states is specified by an initial distribution $\mu \in \Delta_d$ and a transition matrix $P \in \mathcal{P}_d$ \citep{bremaud_markov_2020}. Namely, $\left(X_1, \ldots, X_T\right) \sim(P, \mu)$ means

\[
\mathbb{P}\left[\left(X_1, \ldots, X_T\right)=\left(x_1, \ldots, x_T\right)\right]=\mu\left(x_1\right) \prod_{t=1}^{T-1} P_{x_t, x_{t+1}}.
\]
Let $\pi$ be the stationary distribution of the Markov chain $(P, \pi)$, and let $\pi_*=\min_{i \in[d]} \pi_i$ be the smallest entry of $\pi$. We will assume that the Markov chain is irreducible and aperiodic (\Cref{assumption:mc_graphon}), which ensures that $\pi$ is unique. Our edge-wise estimator is the matrix of empirical transition probabilities, which coincides with the maximum likelihood estimator \citep{anderson_statistical_1957}:
\begin{equation}
        \label{eq:empirical_transition_probabilities}
        \Est\left(X_1,\ldots,X_T\right) = \hat{P} \in \mathcal{P}_d, \quad \text{where} \quad  \hat{P}_{uv} = \begin{cases}
  \frac{N_{uv}}{N_u} = \frac{\sum_{i=1}^{T-1}\indicator{X_i = u, X_{i+1}=v} }{\sum_{i=1}^{T-1}\indicator{X_i = u} } & \text{if } N_u > 0 \\
\frac{1}{d} & \text{otherwise}.
        \end{cases}
\end{equation}
We now have to show that \(\Est\) satisfies \Cref{assumption:edge_process_est}.

\begin{restatable}{lemma}{lemmaConcentrationMC}
     \label{lemma:concentration_MC}
    Let  \(\left(X_1, \ldots, X_T\right) \sim(P, \mu)\) where \(\mu \in \Delta_d\) and \(P \in \mathcal{P}_d\) with  a unique stationary distribution $\pi$ lower bounded by $\pi_*$ and mixing time $\tau$. Assume \(T > 32d\tau_{\min}^{3/2}\pi_*^{-5/2}/\log\left(2d\pi_*^{-1}\right)\), and let \(\Est\) be the empirical transition probabilities \Cref{eq:empirical_transition_probabilities}.
    Then, \(\Est\) satisfies \Cref{assumption:edge_process_est} with
    \[\cpsi^2 \leq c_1 \frac{d\sqrt{\tau_{\min}}}{T\pi_*^{3/2}}\log\left(\frac{2d}{\pi_*}\right), \quad \text{and} \quad \bias \leq c_2 \frac{d\tau_{\min}}{T\pi_*^2}\]
    for some $c_1,c_2>0$ only depending on the mixing properties of the chain and $\mu$.
\end{restatable}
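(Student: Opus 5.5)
We need two bounds: the sub-Gaussian norm of $\|\hat P-\bbE\hat P\|_{\mathcal{X}}$ (Frobenius norm on $d\times d$ matrices) and the bias $\|\bbE\hat P-P\|_{\mathcal{X}}$. The natural route goes through the minimax analysis of \citet{wolfer_minimax_2019}. Their key ingredient is a high-probability, entrywise concentration bound for the empirical transition matrix. Combined with Bernstein-type concentration of the visit counts $N_u$ for uniformly ergodic chains (Paulin's inequality, with variance governed by the pseudo-spectral gap and hence by $\tau_{\min}$), it yields, for each row $u$ and $\delta\in(0,1)$, with probability at least $1-\delta$,
\[
\|\hat P_{u\cdot}-P_{u\cdot}\|_2^2 \lesssim \frac{\log(2d/\delta)}{N_u}, \qquad N_u \geq \tfrac12 T\pi_u \ \text{ once } T \gtrsim \frac{\sqrt{\tau_{\min}}}{\pi_*}\log\frac{2d}{\pi_*\delta}.
\]
The row bound comes from a martingale (Freedman) argument on $N_{uv}-N_uP_{uv}$. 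I would establish this row-wise statement first, using the Markov property to condition on the visits to $u$.

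\textbf{Step 1 (tail bound).} Summing over rows and using $\pi_u\ge\pi_*$, we get $\|\hat P-P\|_F^2 \lesssim \frac{d\sqrt{\tau_{\min}}}{T\pi_*^{3/2}}\log(2d/(\pi_*\delta))$ on a good event $G_\delta$. The factor $\sqrt{\tau_{\min}}/\pi_*^{1/2}$ enters through the count-concentration step. On $G_\delta^c$ we use the trivial bound $\|\hat P-P\|_F^2\le 2d$, valid since rows are probability vectors.

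\textbf{Step 2 (bias).} Write $\bbE\hat P - P=\bbE[(\hat P-P)\mathbf 1_{G}]+\bbE[(\hat P-P)\mathbf 1_{G^c}]$. The second term is at most $\sqrt{2d}\,\bbP(G^c)$, which decays exponentially in $T\pi_*^2/\tau_{\min}$ by the count concentration. For the first term, the martingale structure gives $\bbE[N_{uv}-N_uP_{uv}]=0$. The bias of the ratio $N_{uv}/N_u$ therefore comes only from the correlation between numerator and $1/N_u$. I would bound it by a second-order expansion of $1/N_u$ around $T\pi_u$ and Cauchy--Schwarz: the martingale error is of order $\sqrt{N_u}$, and the fluctuation of $N_u$ has relative size $\sqrt{\tau_{\min}/(T\pi_u)}$. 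Summing over $u$ should produce a bound of order $d\tau_{\min}/(T\pi_*^2)$. The initial distribution $\mu\neq\pi$ costs an additive burn-in of order $\tau_{\min}$ in $N_u$, which is absorbed into $c_2$ (hence the dependence on $\mu$).

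\textbf{Step 3 (sub-Gaussian norm).} Since $\|\hat P-\bbE\hat P\|_F\le\|\hat P-P\|_F+\bias$, it suffices to control the $\psi_2$ norm of $\|\hat P-P\|_F$. The tail in Step 1 is of the form $\bbP(\|\hat P-P\|_F^2> s\,\sigma^2)\le 2d\,\pi_*^{-1}e^{-s}$, with $\sigma^2 = d\sqrt{\tau_{\min}}/(T\pi_*^{3/2})$, in the range where the $T$-condition holds. Integrating $\bbE\exp(\|\cdot\|^2/c^2)$ with $c^2= c_1\sigma^2\log(2d/\pi_*)$ gives a quantity at most $2$. The $\log(2d/\pi_*)$ factor is exactly the price of the union bound over $d$ rows. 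The lower bound on $T$ in the statement is what guarantees two things: the good-event range covers all $s$ up to the trivial bound $2d$, and $\bias\le\cpsi$. The latter also justifies the remark that the bias is absorbed.

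\textbf{Main obstacle.} The hardest part is Step 3's uniformity: converting a bound that holds only for $\delta$ above a $T$-dependent threshold into a genuine $\psi_2$ bound. For very small $\delta$, the count-concentration requirement on $T$ fails, so the tail there must be handled by the deterministic bound $2d$. That leftover probability must then be shown to be at most $\exp(-2d/c^2)$, and I expect the $\tau_{\min}^{3/2}\pi_*^{-5/2}$ threshold on $T$ is calibrated precisely for this. If this fails, I would fall back on a bounded-differences (McDiarmid for Markov chains, Paulin) argument applied directly to $\|\hat P-P\|_F$ restricted to the event $\{N_u\ge T\pi_u/2\ \forall u\}$.
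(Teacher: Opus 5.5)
\textbf{Verdict: the architecture matches the paper, but Step 3 has a genuine gap.} Your ingredients are the paper's own. You use per-row martingale concentration on an event where every $N_u$ is close to $T\pi_u$, as in \Cref{lemma:concentration_transition_matrix}, which the paper adapts from Wolfer--Kontorovich. You get the bias from $1/x=1/\nu-(x-\nu)/\nu^2+(x-\nu)^2/(\nu^2x)$ plus Cauchy--Schwarz, as in \Cref{lemma:bias_empirical_transition}. You finish with the triangle inequality for $\psi_2$ norms. Two small points on the bias step: split on $\{N_u>0\}$ as the paper does, since $M\indicator{N_u>0}=M$ keeps the mean exactly zero while $\bbE[M\indicator{G}]\neq 0$ in general. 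Your burn-in remark for $\mu\neq\pi$ is more careful than the paper, which simply sets $\bbE N_u=(T-1)\pi_u$.

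\textbf{Why Step 3 fails.} No lower bound on $T$ can make the count-failure probability smaller than $e^{-2d/c^2}$. With Paulin's Hoeffding-type occupation bound (the one the paper uses), $\bbP(\exists u:\ N_u<T\pi_u/2)\le d\sqrt{2/\pi_*}\,e^{-T\pi_*^2/(4\tau_{\min})}$. Your choice of $c$ gives $2d/c^2=2T\pi_*^{3/2}/(c_1\sqrt{\tau_{\min}}\log(2d/\pi_*))$. Both exponents are linear in $T$, so what you actually need is the $T$-free condition $c_1\gtrsim\sqrt{\tau_{\min}/\pi_*}/\log(2d/\pi_*)$. That fails for slowly mixing chains if $c_1$ is absolute.

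Two related corrections:
\begin{itemize}
\item The count condition is $T\gtrsim\tau_{\min}\pi_*^{-2}\log(d/(\sqrt{\pi_*}\delta))$, not $\sqrt{\tau_{\min}}\,\pi_*^{-1}\log(\cdot)$.
\item On the good event your row bound sums to only $d\log(2d/\delta)/(T\pi_*)$. So the factor $\sqrt{\tau_{\min}/\pi_*}$ does not ``enter through count concentration''.
\end{itemize}

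\textbf{What the threshold on $T$ and the count window actually do in the paper.} The threshold simplifies the bias (\Cref{lemma:bias_lambert_function_proof}) and gives $\bias/\sqrt{\log 2}\le\tilde\psi$. Its form $d\tau_{\min}^{3/2}\pi_*^{-5/2}/\log(2d/\pi_*)$ is exactly the condition $\bias^2\lesssim\tilde\psi^2$. The large-$\varepsilon$ range is handled separately, by a tunable count window. The paper works on $\{N_i\in[(1-\delta)T\pi_i,(1+\delta)T\pi_i]\ \forall i\}$ with $\delta=\delta_{\min}=\sqrt{\tau_{\min}}/(2\sqrt{\pi_*}+\sqrt{\tau_{\min}})$, which is close to $1$ (\Cref{lemma:delta_min}). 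This makes the failure term $d\sqrt{2/\pi_*}\,e^{-\delta^2T\pi_*^2/\tau_{\min}}$ dominated by the martingale term for every $\varepsilon\le\sqrt d$. The price is the factor $(1-\delta)^2/(1+\delta)$ in the martingale exponent.

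\textbf{Do not try to recover $\sqrt{\tau_{\min}}\,\pi_*^{-3/2}$ from that price.} The paper's step $(1-2\delta_{\min})/2\ge\sqrt{\pi_*}/(2+\sqrt{\tau_{\min}})$ cannot hold. Indeed $1-2\delta_{\min}=(2\sqrt{\pi_*}-\sqrt{\tau_{\min}})/(2\sqrt{\pi_*}+\sqrt{\tau_{\min}})<0$ as soon as $\tau_{\min}>4\pi_*$, which is essentially always. The correct bound is $(1-\delta_{\min})^2/(1+\delta_{\min})\ge 2\pi_*/(2\sqrt{\pi_*}+\sqrt{\tau_{\min}})^2$, which yields only $\cpsi^2\lesssim d\tau_{\min}\log(2d/\pi_*)/(T\pi_*^2)$.

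Your fixed-window route gives the same rate once repaired: take $c^2$ of that order so that $e^{2d/c^2}\bbP(G^c)\le 1/2$. Here the threshold on $T$ is genuinely useful, but only to absorb the prefactor $d\sqrt{2/\pi_*}$. With the pseudo-spectral-gap Bernstein bound you mention, the failure exponent is of order $T\pi_*\gamma_{\mathrm{ps}}$ with $\gamma_{\mathrm{ps}}\gtrsim 1/\tau_{\min}$, and the rate improves to $d\tau_{\min}\log(2d/\pi_*)/(T\pi_*)$.

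\textbf{Linear dependence on $\tau_{\min}$ cannot be avoided.} Take a two-state chain that switches with probability $p$, so $\pi_*=1/2$ and $\tau_{\min}\asymp 1/p$. It never switches with probability $(1-p)^{T-1}$. On that event the unvisited row is set to $(1/2,1/2)$, so $\|\hat P-\bbE\hat P\|_F\ge 1/2$ for small $p$ and $T$ above the threshold. This forces $\cpsi^2\gtrsim\tau_{\min}/T$, against the displayed $\sqrt{\tau_{\min}}/T$.

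So the lemma holds only because $c_1$ may depend on the mixing properties; it must absorb at least $\sqrt{\tau_{\min}}$. The right repair of your Step 3 is to enlarge $c$ until it pays for $\bbP(G^c)$ and then invoke that clause, not to credit the threshold on $T$ with this work.
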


The proof of \Cref{lemma:concentration_MC} can be found in \cref{subsec:concentration_MC}. We note that our results extend trivially for any graph processes where the edges are decorated with a discrete process taking finitely many possible values at each time point. The main requirement is that the edge processes are generated by a Markov chain with a finite number of states and a unique stationary distribution.

\subsubsection{Periodically Time Varying Markov Chains.} Contact data often exhibit periodicities (e.g., daily routines; see \cref{subsec:hospital}), so the assumption of time-homogeneous edge dynamics may not hold. To account for this, we model edge processes as periodic Markov chains. We encode the time phase into the state space, turning the periodic process into a time-homogeneous chain on an expanded state space \citep{serfozo_basics_2009}.
Let $S$ be a finite state space and let $P\in\mathbb{N}$ be a fixed period (e.g.\ $P=24$ hours for a daily cycle with hourly observations). We define the extended state space as \(\tilde S = S \times \{0,1,\dots,P-1\}\),
where the second component represents time modulo $P$. We consider a discrete-time Markov chain on $\tilde S$ constructed as follows: given current state $(x,t)$, the next state will be $(x', t+1 \bmod P)$ with $x' \sim P_t(x, \cdot)$. Here $P_t(x,y)$ is the transition probability of the original process from state $x$ to state $y$ at phase $t$ of the period. By construction, this augmented chain is time-homogeneous on the finite state space $\tilde S$.
This construction allows us to incorporate regular time-dependent variations, and each periodically time-varying edge process can be treated as a time-homogeneous Markov chain on $\tilde{S}$. \Cref{lemma:concentration_MC} and \Cref{theorem:mc_graphon} then apply directly with $d = 2^M P$.

\section{Simulation Study}
\label{sec:application}

\subsection{Synthetic Data: Continuous Graphon with Markov Chain}

In a first set of experiments, we generated each edge's binary time series from a continuous $2$-Markov chain (MC-$2$) graphon on \([0,1]^2\), simulating a homogeneous second-order Markov chain for each pair of latents.  We then applied our two-stage estimator for varying number of nodes $n$ and number of time steps observed $T$, and visualized the resulting \(2\times2\) transition-probability heatmaps in \Cref{fig:simulation_heatmaps}.  The left grid (rows indexed by \(n\), columns by \(T\)) shows that, as either dimension grows, the estimated block-wise transitions become smoother and converge visibly toward the true graphon displayed on the right.

\begin{figure}[h!]
    \centering
    \includegraphics[width=\textwidth]{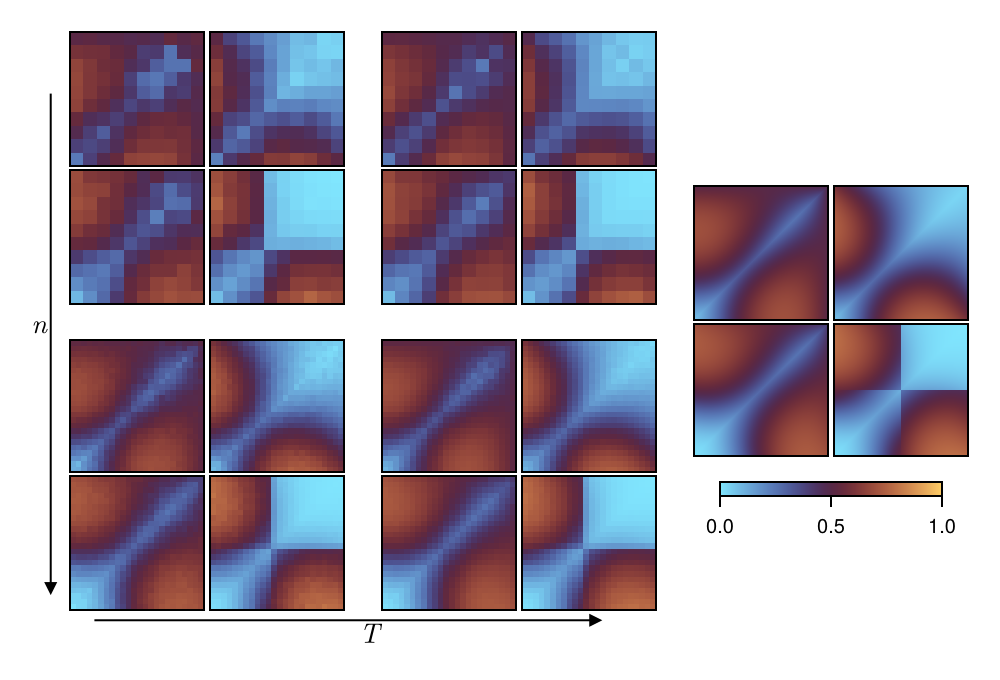}
    \caption{Estimated transition heatmaps (as  depicted in \Cref{fig:2_mc_to_1_mc}) of a continuous MC-$2$ graphon (rows: $n=100,600$; columns: $T=50,300$, and $k=\sqrt{n}$) on the left, and the true MC-$2$ graphon on the right. We can see that as $n$ and $T$ grow, the estimated heatmaps become smoother and closer to the true graphon.}
    \label{fig:simulation_heatmaps}
\end{figure}

To quantify this convergence, we computed the normalized mean-squared error \(n^{-2}\|\hat\theta - \theta\|_2^2\) (similar to \cref{eq:mse_loss}) over ten replicates for \(n\in\{200,400,600\}\) and \(T\in\{50,100,150,200,250,300\}\).
\Cref{fig:simulation_error_vs_n} (left) plots error against \(n\) for each \(T\), and (right) against \(T\) for each \(n\); we also compare different starting node labels for our greedy maximization algorithm: ordered (top) is close to the oracle labeling and random (bottom) is non-informative. In both panels we observe behavior consistent with the theoretical convergence rates of\ \ \Cref{theorem:mc_graphon} (which we repeat here for convenience):
    \equationMcGraphon*

\begin{figure}[h!]
    \centering
    \includegraphics[width=\textwidth]{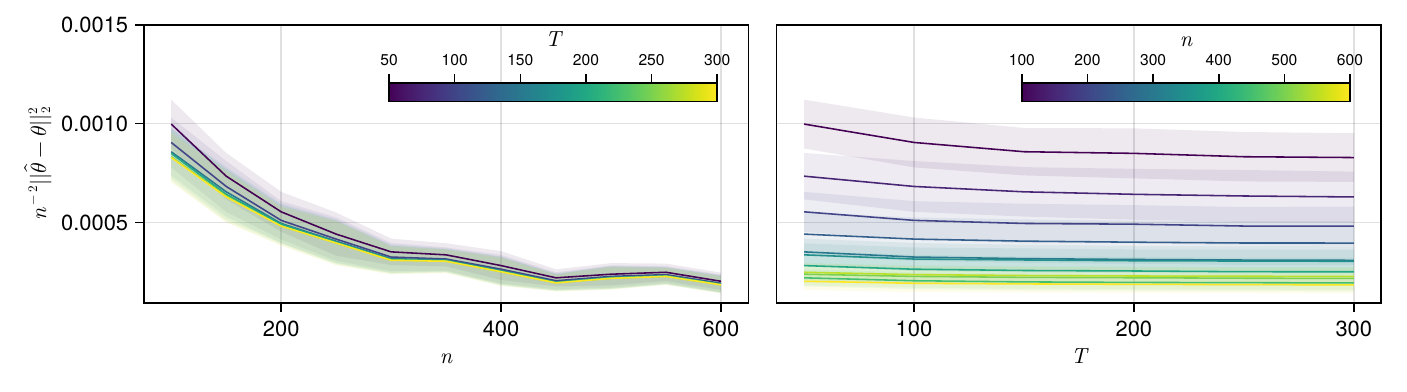}
    \includegraphics[width=\textwidth]{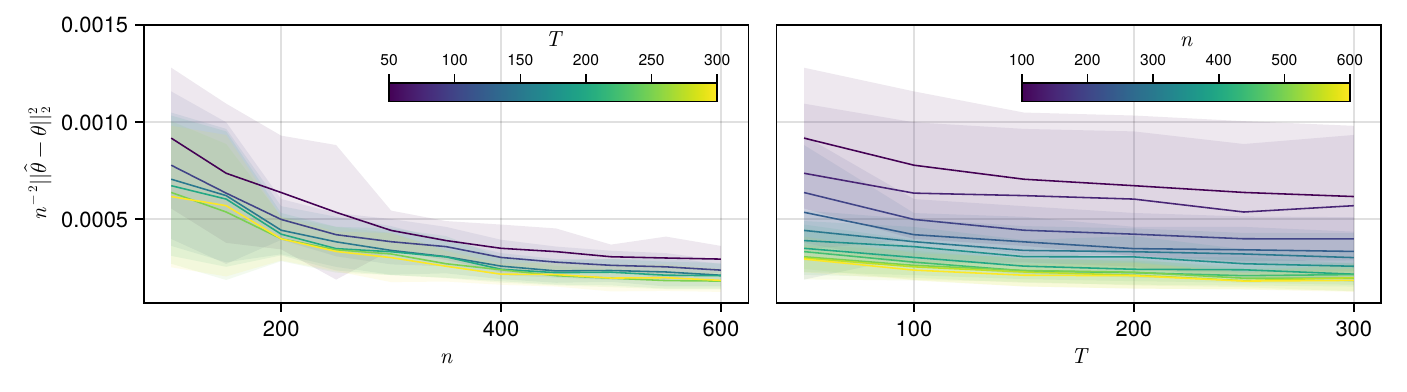}
    \caption{Normalized estimation error $\sum_{i,j}\|\hat{\theta}_{ij}-\theta_{ij}\|_2^2$ plotted against $n$ on the left (curves colored by $T$) and against $T$ on the right (curves colored by $n$). Error bands denote one standard deviation. We used an ordered initialization of the node for the top row and a random initialization for the bottom one; even though the random initialization yields higher variability, the overall trends remain the same, consistent with the theoretical rates.}
    \label{fig:simulation_error_vs_n}
\end{figure}

As \(n\) and \(T\) grow, the estimation error decreases, but for a fixed \(n\), the error does not decrease with \(T\) beyond a certain point. This is consistent with the fact that the estimation error is lower-bounded by \(Cn^{-2\alpha/(\alpha+1)}\) for fixed \(n\). The empirical results of \Cref{fig:simulation_error_vs_n} seem to indicate that this behavior is not only a proof artifact but a real limitation of the estimators.

\subsection{Synthetic Data: Block Model with BALARM Edge Processes}

In practical application, we usually do not have access to the true number of communities $k$ or the smoothness parameter $\alpha$ of the underlying model. To choose the number of blocks fitted (and any other hyperparameters like the order of the underlying Markov chain), we will use the  Bayesian Information Criterion (BIC) \citep{schwarz_estimating_1978} and the Hierarchical BIC \citep{zhao_mixture_2015}. For a discussion of what different kind of criterions represent when picking the number of communities, we refer the reader to \citet{peixoto_descriptive_2023}.

We simulated a dynamic network of \(n=99\) nodes over \(T=960\) time-steps (15-minute intervals for 10 days) from a 3-block BALARM model \citep{suveges_networks_2023}, in which each of the six block-pair edge processes is a second-order autoregressive Markov chain with a daily cosine-sine component (period = 96).  This setup is similar to the one we will encounter with the contact network in \Cref{subsec:hospital}. After randomly permuting node labels, we fitted both first- and second-order edge models for \(k=2,\dots,8\) and used BIC/HBIC to select \(k\) and the Markov order. As shown in Figure~\ref{fig:sim_BALARM}, only the correct second-order specification attains its minimum at \(k=3\), exactly recovering the true block structure, while the first-order fit yields higher scores across all \(k\) but still displays an elbow at \(k=3\).

\begin{figure}[h!]
    \begin{center}
        \includegraphics[width=\textwidth]{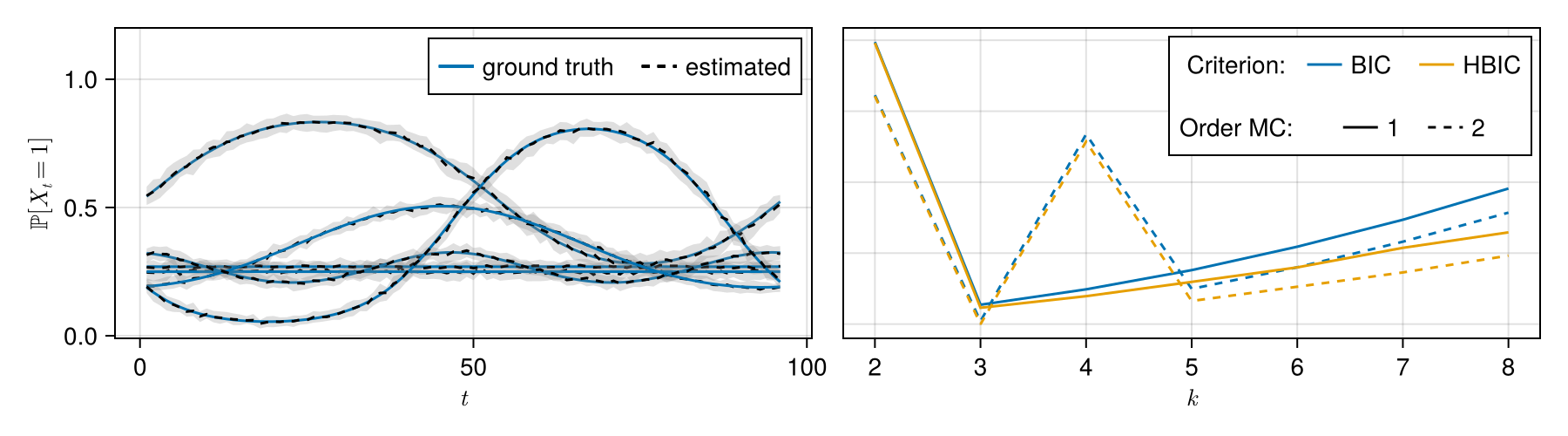}
        \caption{On the left, fitted time varying probability of edge activation for the model picked via the right-hand side plot. We can see that the selected model accurately captures the temporal dynamics of the underlying process. On the right, BIC and HBIC for our simulated data, where the solid line corresponds to the first-order Markov chain and the dashed line to the second-order Markov chain. The minimum of both criteria is at $k=3$, which is the true number of blocks.}
        \label{fig:sim_BALARM}
    \end{center}
\end{figure}

\section{Hospital Ward Dynamic Contact Network}
\label{subsec:hospital}

We apply our method to a high-resolution contact network dataset from a short-stay hospital ward, originally collected by the SocioPatterns collaboration\footnote{\url{http://www.sociopatterns.org/}}. The data captures face-to-face interactions among patients and staff over a four-day period in December 2010 in a geriatric unit of a French hospital in Lyon. Contacts were recorded using wearable proximity sensors with 20-second temporal resolution, yielding a detailed timeline of interactions involving 46 healthcare workers (HCW) and 29 patients (PAT) \citep{vanhems_estimating_2013}. The dataset records every individual in the ward, spanning nurses (NUR), medical doctors (MED), and administrative staff (ADM).
For our analysis, we aggregate the contact data into 15-minute intervals, placing an undirected edge between two individuals if and only if they experienced at least one face-to-face interaction during that period. We restrict attention to pairs who interacted at least once, discarding those with no observed contact.

\Cref{fig:hospital_network} shows how the hospital contact data exhibits clear temporal rhythms (with interactions peaking during daytime hours and repeating daily) and strong heterogeneity in connectivity (a small fraction of individuals accounts for a large share of contacts). The analyses of this dataset by \citet{suveges_networks_2023} have revealed heterogeneous contact duration, role-based mixing patterns (e.g., MED-ADM vs.\ NUR-PAT) resulting in $6$ different contact behaviors (or edge groups). To compare our method with theirs, we fit a $3$ node communities decorated SBM, which results in $6$ different edge groups. The edge process is a Markov chain of order $1$ with a daily periodicity (as described in \cref{subsec:m_markov_chain_and_periodic} with   $P=96$).

\begin{figure}[h!]
    \centering
    \includegraphics[width=\textwidth, trim={0.65cm 0cm 0.4cm 0cm},clip]{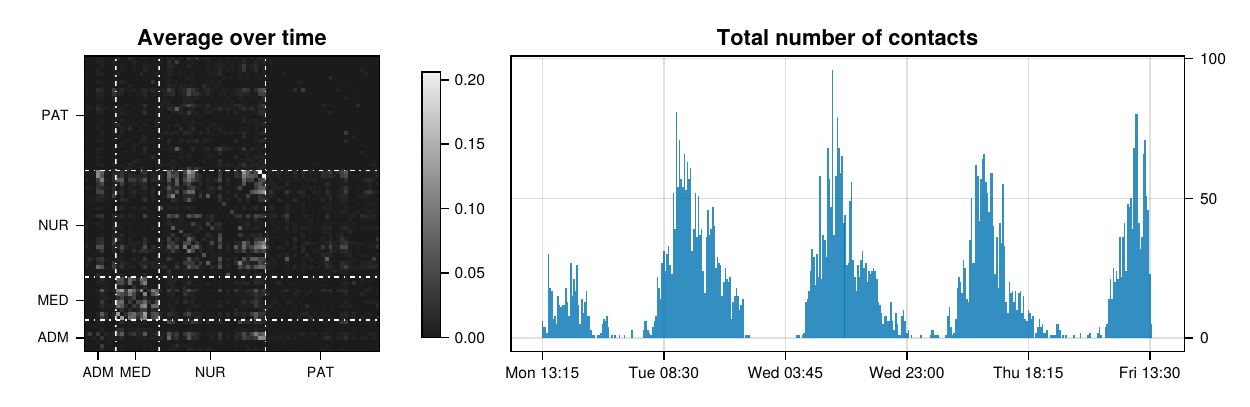}
    \caption{Role-specific contact rates and overall contact volume over time in the hospital ward after the 15-minutes aggregation.
    (Left) Average number of contacts by participant aggregated over the entire dataset: administrative staff (ADM), medical doctors (MED), nurses (NUR), and patients (PAT). (Right) Time series of the total number of contacts across the four-day observation period. The pronounced peaks during daytime hours and troughs at night reflect the ward's daily activity rhythms.}
    \label{fig:hospital_network}
\end{figure}

\begin{figure}[h!]
    \centering
    \includegraphics[width=\textwidth]{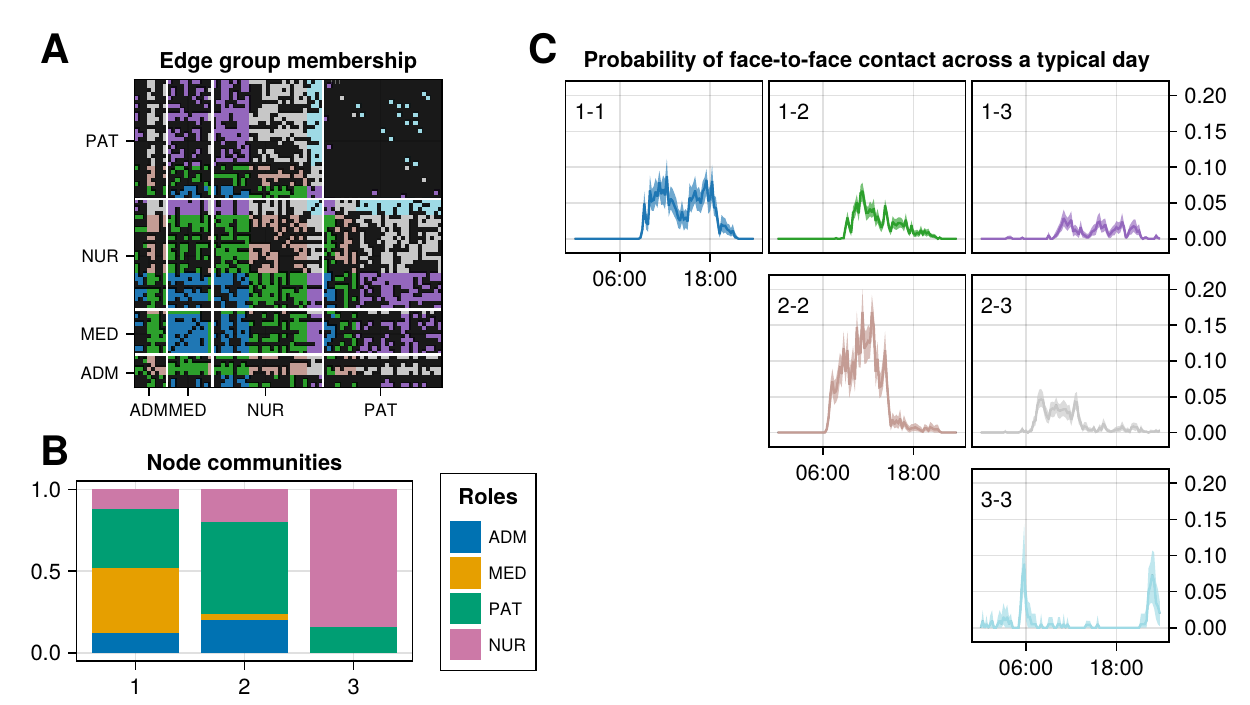}
    \caption{Result of fitting a decorated SBM  with $3$ blocks to the hospital ward contact data with daily periods. Panel A represents each different type of connectivity, where black marks indicate that no contact occurred. Panel B shows the composition of the $3$ node communities in terms of roles. Panel C shows the fitted time varying probability of edge activation and prediction interval obtained via bootstrap with colors matching the edge types in Panel A. The pair of numbers within each subplot indicates which between group connection is being shown.}
    \label{fig:hospital_experiment}
\end{figure}

\Cref{fig:hospital_experiment} presents our inferred block memberships in panels A and B alongside the estimated contact probabilities for each within-block pairing in panel C.  The blue curve (Block 1-1, administrative/medical staff) rises sharply just before 10 AM, peaks around midday, and then declines through the afternoon, closely mirroring the high-probability, morning-peaked (HPM) cluster identified by \citet{suveges_networks_2023}.  The brown curve (Block 2-2, nurse-nurse interactions) shows a sustained high level over mid-day, corresponding to their high-probability, day-long (HPD) cluster.  In contrast, the teal curve (Block 3-3, patient-nurse and patient-patient ties) exhibits a clear bimodal rhythm: an early peak just after 6 AM probably associated with morning rounds and a second peak in the early evening indicating shift handovers.  This resembles the moderate-probability, day-long (MPD) cluster of \citet{suveges_networks_2023} but with a more pronounced evening resurgence. This difference can be attributed to the smooth form of the time-varying probabilities of \citet{suveges_networks_2023}, whereas our method can freely capture any jumps in the time-varying probabilities.  Although our decorated-graphon model uses only three latent node communities, it nonetheless recovers the principal temporal motifs of all six BALARM clusters from \citet{suveges_networks_2023}, while providing finer intraday detail.

\section{Discussion}
\label{sec:extensions_dynamic}
\label{sec:conclusion}

Modeling dynamic networks usually requires a choice between node exchangeability and rich edge dynamics, and convergence guarantees, when they exist, are derived separately for each model. Decorated graphons avoid this choice: any node-exchangeable graph-valued process admits such a representation, so the relevant questions become how well the edge-level law can be estimated and how well a block structure approximates the latent graphon. Our two-stage estimator treats these two questions separately. Because the network stage requires only that the edge-level summaries be accurate enough (\Cref{assumption:edge_process_est}), improvements in modeling individual edge processes feed directly into better graphon estimates without changing the network-level theory. The bounds in \Cref{theorem:abstract_result_conv,theorem:abstract_result_blockmodel} separate a spatial approximation term, fixed by the block structure and bounded below by an irreducible floor (\cref{eq:minimax_floor}), from an edge-estimation term that captures the temporal modeling. Only the second term improves with richer data and better edge models.

The cost of this tractability is the conditional independence of edges given the latent variables (\Cref{assumption:exchangeability}). Triadic closure and feedback across edges therefore lie outside the model, unlike the autoregressive and feedback specifications of \citet{chang_autoregressive_2024,mantziou_gnaredge_2023}. We accept this restriction because ruling out direct inter-edge dependence is what gives the model a well-defined nonparametric limit and convergence guarantees, which the richer dynamic-network models do not currently provide. The estimator is thus best understood as a nonparametric baseline for dynamic networks.

A second limitation is statistical. When $T$ is small or many node pairs rarely interact, the edge-wise summaries $\est_{ij}$ are noisy. Pairs that never interact remain informative, encoding a low activation probability, and can be assigned a degenerate law (similar to a cemetery point as in \citet{abraham_probabilitygraphons_2025}) rather than discarded; our bounds hold uniformly over edges and so cover this case. The more consequential question, raised in \Cref{remark:statistical_efficiency}, is whether to pool. In the block-model regime the two-stage and pooled estimators are asymptotically comparable, so any gain from pooling there is a finite-sample effect, and it is largest precisely when interactions are sparse. Pooling edge parameters within each inferred block pair would reduce variance, at the cost of coupling the edge estimates to the clustering, so that a single misassigned node contaminates an entire block pair. Our two-stage design avoids this coupling but does not exploit within-block homogeneity.

Finally, the modularity of the estimator allows several extensions. Nothing in the network-stage analysis requires $\est_{ij}$ to target the full edge law: \Cref{assumption:edge_process_est} asks only that $\est_{ij}$ be a sub-Gaussian, near-unbiased $\mathcal{X}$-valued summary of the edge process. Any well-behaved summary statistic, be it a finite-dimensional parameter or a vector of moments can therefore be substituted without altering \Cref{theorem:abstract_result_conv,theorem:abstract_result_blockmodel}. The static edge-attributed case is immediate: taking $\est_{ij} = A_{ij}$ and verifying sub-Gaussianity recovers $\mathbb{E}[A_{ij}]$ from \Cref{theorem:abstract_result_conv}, reproducing the graphon estimator of \citet{donier-meroz_graphon_2023}. The main open direction is continuous time. Several models exist there, based on point processes \citep{kreiss_nonparametric_2019,arastuie_chip_2020,modell_intensity_2023,passino_mutually_2023,zhang_semiparametric_2024} or functional parameters \citep{pensky_dynamic_2019}, but a nonparametric limit theory for these processes is still missing, and the difficulty is not only technical: it is not known whether the decorated graphon representation, established for static weighted graphs \citep{abraham_probabilitygraphons_2025,kunszenti-kovacs_multigraph_2022}, extends to continuous-time edge processes. If it does, the estimation strategy developed here would carry over with guarantees of the same form as \Cref{theorem:abstract_result_conv}, again provided edges interact only through their latent variables.

\section*{Acknowledgments}
The authors gratefully acknowledge the European Research Council (ERC) under Grant CoG 2015-682172NETS, within the Seventh European Union Framework Programme.

\spacingset{1}

\setlength{\bibitemsep}{1em}
\printbibliography

\spacingset{1}

\newpage
\setcounter{page}{1}
\begin{center}
    {\LARGE\bf Supplement to \\``Decorated graphons for temporal network estimation''}\\
    \medskip
    {Charles Dufour, Sofia Olhede}
\end{center}
\medskip

\begin{appendix}
\setcounter{equation}{0}
\renewcommand\theequation{A.\arabic{equation}}

\startcontents[sections]
\printcontents[sections]{l}{1}{\setcounter{tocdepth}{2}}

This supplemental material provides theoretical foundations and implementation details for the paper ``Decorated graphons for temporal network estimation''. \Cref{sec:proofs} establishes convergence rates for the least-squares graphon estimator, with proofs for both the block model case (\Cref{subsec:block_model_conv_proof}) and the general Hölder-smooth case (\Cref{sec:holder_convergence}), and \Cref{subsec:bias_estimation} deals explicitly with the bias term. \Cref{sec:appendix-examples} presents proofs for the Markov chain model (\Cref{subsec:concentration_MC}), demonstrating how the general framework applies to specific temporal edge processes. We also provide simulation study details in \Cref{sec:simulation_details}, describing both block model and smooth graphon experimental setups. Finally, \Cref{appendix:technical_extensions} addresses additional technical considerations and comments. The Julia package \texttt{NetworkHistogram.jl} \citep{dufour_networkhistogramjl_2023} implements the methods described in the paper and is available at \url{https://github.com/SDS-EPFL/NetworkHistogram.jl}.

\section{Proofs of probability graphon estimation}
\label{sec:proofs}
We will assume a very general setting, and then show that the specific models presented in the main body of the paper fit in this setting. Let \(\mathcal{K}\) be a compact subset of a Polish space. Let \(\mathcal{P}\left(\mathcal{K}\right)\) denotes the set of probability measures on \(\mathcal{K}\). We will consider an identifiable parametric family of \(\mathcal{P}\left(\mathcal{K}\right)\) that we denote \(\mathcal{P}_{\mathcal{X}}\left(\mathcal{K}\right)\), where each element of \(\mathcal{P}_{\mathcal{X}}\left(\mathcal{K}\right)\) is parametrized by a \(\theta \in \mathcal{X}\), where \(\mathcal{X}\) is a convex subset of a real Hilbert space, with inner product \(\langle\cdot,\cdot\rangle_{\mathcal{X}}\) and induced norm \(\|\cdot\|_\mathcal{X}\).

\begin{remark}
   It would suffice to require that a bijection exists between the set of parameters \(\mathcal{X}\) and the set of probability measures \(\mathcal{P}_{\mathcal{X}}\left(\mathcal{K}\right)\) of interest, allowing for a nonparametric framework, where the parameters $\theta \in \mathcal{X}$ do not have finite dimension. However, since this will not change the results and our proof techniques, we will leave it as a parametrized family for ease of exposition.
\end{remark}

Let \(W^*\) be a \(\mathcal{K}\)-decorated graphon such that its image is in \(\mathcal{P}_{\mathcal{X}}\left(\mathcal{K}\right)\). We suppose we observe \(\{A_{ij}\} \in \mathcal{K}^{n\times n}\) generated from \(W\) with latents \(\{\xi_i\}\). Additionally, we suppose that there exists a function
\begin{equation}
  \Est  : \mathcal{K} \mapsto \mathcal{X} \quad \text{such that} \quad \est_{ij} := \Est(A_{ij})
\end{equation}
is an estimator of \(\theta_{ij} := W(\xi_i,\xi_j)\). This function \(\Est\) is estimating the underlying distribution of \(A_{ij}\) for a fixed pair \(i,j\) by only using that realization and not pulling information from other edges.

We remind the reader of the following assumptions on \(\Est\):
\assumptionEdgeEst*

\begin{remark}
    \label{remark:examples}
    Here are some examples of the settings we consider:
    \begin{itemize}
        \tightlist
        \item  \(\mathcal{K} = \{0,1\}\) and \(A_{ij}\) represents the usual edge indicator in a simple graph. We then take \(\Est(A_{ij}) = A_{ij}\).
        \item \(\mathcal{K} = \{0,1\}^L\), \(A_{ij}\) is a multivariate Bernoulli with independent entries and common mean \(W^*(\xi_i,\xi_j) \in [0,1]\). We can then take \(\Est(A_{ij})\) to be the empirical mean of \(A_{ij}\).
        \item \(\mathcal{K} = \{0,1\}^T\), \(A_{ij}\) is a strongly stationary binary time series generated by a \(M\)-Markov chain with transition probabilities \(W^*(\xi_i,\xi_j) \in [0,1]^{2^M}\). We can then take \(\Est(A_{ij})\) to be the maximum likelihood estimator of the transition probabilities.
    \end{itemize}
\end{remark}

We will start by estimating the matrix of parameters \(\theta\) with a \(k\) block decorated graphon by minimizing the loss function
\begin{equation}
    \mathcal{L}(Q,z) = \sum_{i,j} \|Q_{z(i)z(j)} - \est_{ij} \|_\mathcal{X}^2,
\end{equation}
where \(Q \in \mathcal{X}_{sym}^{k\times k}\) is a symmetric matrix of parameters, and \(z:[n] \mapsto [k]\) is a function that assigns each node to a group. Our least squares estimator is then given by
\begin{equation}
    \label{eq:ols_def}
    \hat{Q}, \hat{z} = \underset{Q \in \mathcal{X}_{sym}^{k\times k}, z \in \mathcal{Z}_{n,k}}{\operatorname{argmin}} \sum_{i,j} \|Q_{z(i)z(j)} - \est_ {ij} \|_\mathcal{X}^2,
\end{equation}
where \(\mathcal{Z}_{n,k}\) is the set of surjective functions from \([n]\) to \([k]\) representing all possible node groupings.

We will be interested in quantifying the distance from our estimator to the true parameter. Given \(\hat{Q},\hat{z}\) defined in \cref{eq:ols_def}, we define
\begin{equation}
    \hat{\theta}_{ij} = \hat{Q}_{\hat{z}(i)\hat{z}(j)},
\end{equation}
and measure its convergence to \(\theta_{ij}\) in terms of the loss function
\begin{equation}
   \|\hat{\theta}-\theta\|^2 := \sum_{i,j} \|\hat{\theta}_{ij} - \theta_{ij} \|_\mathcal{X}^2.
\end{equation}

\begin{remark}
We introduce some shorthand notations: we write \[ \|\hat{\theta}-\theta\|^2 := \sum_{i,j} \|\hat{\theta}_{ij} - \theta_{ij} \|_\mathcal{X}^2, \quad \text{and similarly} \quad  \langle \hat{\theta},\theta \rangle :=  \sum_{i,j} \langle \hat{\theta}_{ij}, \theta_{ij} \rangle_{\mathcal{X}}. \]
\end{remark}

We first establish the theoretical results when the bias of \(\Est\) is $0$ ($\bias=0$) in \Cref{assumption:edge_process_est}, and then extend the results to the case where the bias is non-zero in \Cref{subsec:bias_estimation}. The proof structure for our main results when $\bias=0$ is similar to the one in \citet{gao_rateoptimal_2015}, but our framework removes the explicit dependence on the Bernoulli distribution of the edges. We will first focus on the case where \(W\) is a block model, and then extend the results to the general case.

\subsection{Block Model with Unbiased Edge-wise Estimators}
\label{subsec:block_model_conv_proof}

If we assume that \(W\) is a block model,  \(\theta\) is of the form \(\theta_{ij} = Q^*_{z^*(i)z^*{j}}\) for some \(Q^* \in \mathcal{X}_{sym}^{k\times k}\) and \(z^*\in\mathcal{Z}_{n,k}\). We define \[\tilde{\theta}_{ij} = \tilde{Q}_{\hat{z}(i)\hat{z}(j)}, \text{ with } \tilde{Q}_{ab} = \bar{\theta}_{ab}(\hat{z}) =  \frac{1}{|\hat{z}^{-1}(a)||\hat{z}^{-1}(b)|}\sum_{i \in \hat{z}^{-1}(a)}\sum_{j \in \hat{z}^{-1}(b)}\theta_{ij}.\]
To simplify the proofs, we introduce the following notations for any \(\eta \in \mathcal{X}^{n\times n}\) and \(z \in \mathcal{Z}_{n,k}\):

\[n_a = |z^{-1}(a)| = |\{i: z(i) = a\}|, \quad \text{ and } \quad   \bar{\eta}_{ab}(z) := \frac{1}{n_a n_b}\sum_{i \in z^{-1}(a)}\sum_{j \in z^{-1}(b)}\eta_{ij}.\]

By definition of our estimator, we have that \(\|\hat{\theta}-\est\| \leq \|\theta - \est\|\), and since \(\|\hat{\theta}-\est\|^2 = \|\hat{\theta}-\theta\|^2 + 2 \langle\hat{\theta}-\theta,\theta-\est\rangle + \|\theta-\est\|^2\) by sesquilinearity of the inner product, we have that
\begin{align}
    \frac{1}{2}\|\hat{\theta}-\theta\|^2 &\leq \langle \hat{\theta}-\theta, \est - \theta \rangle \\
    &\leq \|\hat{\theta}-\tilde{\theta}\| \left|\left\langle \frac{\hat{\theta}-\tilde{\theta}}{\|\hat{\theta}-\tilde{\theta}\|}, \est - \theta \right\rangle\right| +  \left(\|\hat{\theta}-\tilde{\theta}\| + \|\hat{\theta}-\theta\| \right)\left|\left\langle \frac{\tilde{\theta}-\theta}{\|\tilde{\theta}-\theta\|}, \est - \theta \right\rangle\right|, \label{eq:UB_loss}
\end{align}
similarly as in \citet{gao_rateoptimal_2015}. We will bound the above using concentration properties of \(\Est\); this will allow us to infer the necessary conditions for the convergence of \(\hat{\theta}\) to \(\theta\), depending on how good \(\est_ {ij}\) is at estimating \(\theta_{ij}\). We will bound each of the following terms with high probability:
\begin{equation}
    \label{eq:terms_UB}
    \underbrace{\|\hat{\theta}-\tilde{\theta}\| \vphantom{\left|\left\langle \frac{\hat{\theta}-\tilde{\theta}}{\|\hat{\theta}-\tilde{\theta}\|}\right\rangle\right|}}_{\RNum{1}:\ \text{\Cref{lemma:concentration_1}}}, \quad
    \underbrace{\left|\left\langle \frac{\tilde{\theta}-\theta}{\|\tilde{\theta}-\theta\|}, \est - \theta \right\rangle\right|}_{\RNum{2}:\ \text{\Cref{lemma:concentration_2}}}, \quad
    \underbrace{\left|\left\langle \frac{\hat{\theta}-\tilde{\theta}}{\|\hat{\theta}-\tilde{\theta}\|}, \est - \theta \right\rangle\right|}_{\RNum{3}:\ \text{\Cref{lemma:concentration_3}}}.
\end{equation}

\theoremAbstractResultBlockmodel*

\Cref{theorem:abstract_result_blockmodel} is similar to the result of \citet{gao_rateoptimal_2015}. The main difference lies in the proper decoupling of
the nature of the estimator \(\Est\) and the structure of the model. This allows our theorem to be applied to a wide range of models, as long as the estimator satisfies \Cref{assumption:edge_process_est}, and not only to the binary case as explored previously in the literature \citep{gao_rateoptimal_2015,klopp_oracle_2017,gaucher_maximum_2021,verdeyme_hybrid_2024}.

\begin{proof}[Proof of\ \ \Cref{theorem:abstract_result_blockmodel} when \(\bias=0\)]
Using \Cref{lemma:concentration_1,lemma:concentration_2,lemma:concentration_3}, we can bound the terms \(\RNum{1},\RNum{2}\), and \(\RNum{3}\) in \cref{eq:terms_UB} with high probability. Combining these bounds with \cref{eq:UB_loss}, we find that
\begin{equation}
    \|\hat{\theta}-\theta\|^2 \leq 2 C\|\hat{\theta}-\theta\| \cpsi\sqrt{n \log k}+4 C^2\cpsi^2\left(k^2+n \log k\right),
\end{equation}
with probability at least \(1-\exp\left(-C' n\log k\right)\). Solving the quadratic equation above, we get
\begin{equation}
    \begin{aligned}
        \|\hat{\theta}-\theta\| & \leq \left(C\cpsi\sqrt{n\log k} + C\cpsi\sqrt{n\log k + 4\left(k^2+n \log k\right) }\right) \\
        & \leq C_1 \cpsi\sqrt{k^2+n \log k},
    \end{aligned}
\end{equation}
for some constant \(C_1>0\), with probability at least \(1-\exp\left(-C' n\log k\right)\), which concludes the proof.
\end{proof}

\begin{lemma}
    \label{lemma:concentration_1}
   For any constant \(C^{\prime}>0\), there exists a constant \(C>0\) only depending on \(C^{\prime}\), such that
    $$
    \|\hat{\theta}-\tilde{\theta}\| \leq C \cpsi\sqrt{k^2+n \log k}
    $$
    with probability at least \(1-\exp \left(-C^{\prime} n \log k\right)\).
\end{lemma}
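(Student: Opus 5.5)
The plan is to follow the argument of \citet{gao_rateoptimal_2015}, replacing their scalar Bernoulli concentration with a Hilbert-space sub-Gaussian bound, and to use $\bias=0$ (so $\bbE[\est_{ij}]=\theta_{ij}$) throughout.

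\textbf{Step 1: reduce to a maximal block-average statistic.} For a fixed assignment $z$, minimizing \cref{eq:ols_def} over $Q$ is a separate least-squares problem for each block pair in the Hilbert space $\mathcal{X}$. Its solution is therefore the block mean, $\hat{Q}_{ab}=\bar{\est}_{ab}(\hat{z})$. This lies in $\mathcal{X}$ because $\mathcal{X}$ is convex, and it is symmetric because $\est$ is. Consequently $\hat{\theta}_{ij}-\tilde{\theta}_{ij}=\bar{E}_{\hat{z}(i)\hat{z}(j)}(\hat{z})$, where $E_{ij}:=\est_{ij}-\theta_{ij}$ and $\bar{E}_{ab}(z)$ is its block average over $z^{-1}(a)\times z^{-1}(b)$. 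Hence
\[
\|\hat{\theta}-\tilde{\theta}\|^2=\sum_{a,b\in[k]} n_a n_b\,\|\bar{E}_{ab}(\hat{z})\|_{\mathcal{X}}^2\le \max_{z\in\mathcal{Z}_{n,k}}\sum_{a,b} \Big\|\frac{1}{\sqrt{n_an_b}}\sum_{i\in z^{-1}(a),\,j\in z^{-1}(b)}E_{ij}\Big\|_{\mathcal{X}}^2 .
\]
Passing to the maximum over deterministic $z$ removes the dependence of $\hat{z}$ on the data.

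\textbf{Step 2: a fixed $z$.} The $E_{ij}$ with $i<j$ are independent and centered, with $\|\|E_{ij}\|_{\mathcal{X}}\|_{\psi_2}\le\cpsi$; the entries with $i>j$ are copies by symmetry, and the diagonal terms are handled by convention. I rewrite each block sum over unordered pairs. For $a\neq b$ the pairs are distinct, while for $a=b$ each pair is counted twice, which costs only a constant. Each normalized block sum $S_{ab}=(n_an_b)^{-1/2}\sum E_{ij}$ is then a normalized sum of independent centered $\mathcal{X}$-valued variables. A Hilbert-space Hoeffding inequality gives $\|\|S_{ab}\|_{\mathcal{X}}\|_{\psi_2}\le C_0\cpsi$. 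For this I would use Pinelis' inequality for martingales in $2$-smooth Banach spaces, or equivalently symmetrization plus the scalar sub-Gaussian tail applied to $\|\cdot\|$ of a Hilbert sum. The squares $\|S_{ab}\|^2$ for $a\le b$ are then independent sub-exponential variables with $\psi_1$-norm at most $C_0^2\cpsi^2$. Bernstein's inequality (mean at most $C\cpsi^2$ each, $k^2$ terms) yields, for every $t>0$,
\[
\bbP\Big(\sum_{a,b}\|S_{ab}\|^2> C_1\cpsi^2(k^2+t)\Big)\le e^{-t}.
\]

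\textbf{Step 3: union bound.} Since $|\mathcal{Z}_{n,k}|\le k^n=e^{n\log k}$, I take $t=(C'+1)n\log k$. The failure probability is then at most $e^{n\log k-t}=e^{-C'n\log k}$. Taking square roots gives $\|\hat{\theta}-\tilde{\theta}\|\le C\cpsi\sqrt{k^2+n\log k}$, where $C$ depends only on $C'$ and the absolute constants $C_0,C_1$.

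The main obstacle is Step 2, specifically obtaining a dimension-free sub-Gaussian bound for sums of vector-valued variables, since $\mathcal{X}$ may be infinite dimensional. I would rely on the Hilbert-space (2-smooth) Pinelis/Hoeffding inequality, whose constant is absolute. The remaining care concerns the symmetric bookkeeping: pairs counted twice and the diagonal entries. These only affect constants.
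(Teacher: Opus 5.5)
Your proposal is correct and has the same architecture as the paper's proof. You write $\hat\theta-\tilde\theta$ blockwise as averages of $E_{ij}=\est_{ij}-\theta_{ij}$, pass to a maximum over deterministic $z$, and show each $V_{ab}(z)=n_an_b\|\bar E_{ab}(z)\|_{\mathcal X}^2$ is sub-exponential at scale $\cpsi^2$. You then apply Bernstein over block pairs and take a union bound over the at most $k^n$ assignments. The paper separates the mean of $\sum_{a,b}V_{ab}$, bounded by orthogonality of independent centered summands, from its fluctuation; your single Bernstein step is equivalent.

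You genuinely differ in the step showing that $S_{ab}=(n_an_b)^{-1/2}\sum E_{ij}$ satisfies $\|\|S_{ab}\|_{\mathcal X}\|_{\psi_2}\lesssim\cpsi$. The paper bounds $\|\sum E_{ij}\|_{\mathcal X}\le\sum\|E_{ij}\|_{\mathcal X}$ and then asserts a tail $\exp(-ct/\cpsi^2)$ for the average of the norms. That cannot hold uniformly in the block size. The norms are not centered, so $(n_an_b)^{-1/2}\sum\|E_{ij}\|_{\mathcal X}$ concentrates around roughly $\sqrt{n_an_b}\,\bbE\|E_{ij}\|_{\mathcal X}$. The tail argument also needs the cancellation between independent centered summands, which the paper exploits only for the mean. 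Your dimension-free Hilbert-space Hoeffding bound supplies exactly that, so your route actually repairs this step. Restricting to $a\le b$ likewise gives Bernstein genuinely independent summands, a point the paper glosses over since $V_{ab}=V_{ba}$.

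To make Step 2 fully rigorous, note two limitations of the tools you cite. The standard Hoeffding-type form of Pinelis's inequality assumes bounded increments. Symmetrization leaves a random variance proxy $\sum\|E_{ij}\|_{\mathcal X}^2$. So neither applies verbatim when only $\|E_{ij}\|_{\mathcal X}$ is sub-Gaussian. One clean route uses Pinelis's Hilbert-space moment bound for independent centered summands $d_j$: $\bbE\cosh(s\|\sum_j d_j\|)\le\prod_j\bigl(1+\bbE(e^{s\|d_j\|}-1-s\|d_j\|)\bigr)$. For $s\lesssim 1/\cpsi$ this gives the sub-Gaussian tail for deviations of $\|\sum E_{ij}\|_{\mathcal X}$ up to order $n_an_b\cpsi$. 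Beyond that range, the triangle inequality plus scalar Hoeffding for the centered norms $\|E_{ij}\|_{\mathcal X}-\bbE\|E_{ij}\|_{\mathcal X}$ gives the same tail. Ironically, that extreme regime is the only place the paper's triangle-inequality idea is useful.
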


\begin{proof}[Proof of\ \ \Cref{lemma:concentration_1}]
    \begin{equation}
        \begin{aligned}
        \|\hat{\theta}-\tilde{\theta}\|^2 &= \sum_{i,j}\|\hat{\theta}_{ij}-\tilde{\theta}_{ij}\|_{\mathcal{X}}^2 = \sum_{i,j}\left\|\hat{Q}_{\hat{z}(i)\hat{z}(j)}-\tilde{Q}_{\hat{z}(i)\hat{z}(j)}\right\|_{\mathcal{X}}^2 \\
        &= \sum_{a,b \in [k]}n_a n_b\left\|\bar{\est}_{ab}(\hat{z})-\bar{\theta}_{ab}(\hat{z})\right\|_{\mathcal{X}}^2.
        \end{aligned}
    \end{equation}
    Let \(V_{ab}(z) := n_a n_b\left\|\bar{\est}_{ab}(\hat{z})-\bar{\theta}_{ab}(\hat{z})\right\|_{\mathcal{X}}^2\), we then have
    \begin{equation}
        \label{eq:V_ab}
        \begin{aligned}
            \|\hat{\theta}-\tilde{\theta}\|^2 & \leq \max_{z \in \mathcal{Z}_{n,k}} \sum_{a,b \in [k]} \mathbb{E}\left[V_{ab}(z)\right] +  \max_{z \in \mathcal{Z}_{n,k}} \sum_{a,b \in [k]}\left( V_{ab}(z) - \mathbb{E}\left[V_{ab}(z) \right]\right).
        \end{aligned}
    \end{equation}
    We bound each term separately. For the first term, we have for \(a\neq b\)
    \begin{equation}
        \begin{aligned}
            \mathbb{E}\left[V_{ab}(z)\right] &=  \frac{1}{n_a n_b}\mathbb{E}\left[\left\|\sum_{i \in z^{-1}(a)}\sum_{j \in z^{-1}(b)}\left(\est_ {ij} - \theta_{ij}\right) \right\|_{\mathcal{X}}^2\right] \\
            & = \frac{1}{n_a n_b}\sum_{i \in z^{-1}(a)}\sum_{j \in z^{-1}(b)}\mathbb{E}\left[\left\|\est_ {ij} - \theta_{ij} \right\|_{\mathcal{X}}^2\right] \\
            & \leq \frac{1}{n_a n_b}\sum_{i \in z^{-1}(a)}\sum_{j \in z^{-1}(b)}2\|\est_{ij}-\theta_{ij}\|_{\psi_2}^2 \\
            &\leq 2\cpsi^2,
        \end{aligned}
    \end{equation}
    where we used the conditional independence of \(\est_{ij}\) in the second line and that \(\bias=0\) implies \(\mathbb{E}[\est_{ij}] = \theta_{ij}\) (see \Cref{subsec:bias_estimation} for the adaptation \(\bias\neq 0\)). For \(a=b\), we obtain similar conclusions. We then have that
    \begin{equation}
        \max_{z \in \mathcal{Z}_{n,k}} \sum_{a,b \in [k]} \mathbb{E}\left[V_{ab}(z)\right] \leq C_1 \cpsi k^2,
    \end{equation}
    for some \(C_1>0\). For the second term, using that \(\|\est_{ij}-\theta_{ij}\|_\mathcal{X}\) is a sub-Gaussian random variable and \citet[Prop. 5.10]{vershynin_introduction_2011}, we obtain that \(V_{ab}(z)\) is sub-exponential:
    \begin{equation}
        \begin{aligned}
            \mathbb{P}\left[V_{ab}(z) > t\right] & = \mathbb{P}\left[ \frac{1}{\sqrt{n_a n_b}}\left\|\sum_{i \in z^{-1}(a)}\sum_{j \in z^{-1}(b)}\left(\est_ {ij} - \theta_{ij}\right) \right\|_{\mathcal{X}} > \sqrt{t} \right] \\
            & \leq \mathbb{P}\left[ \frac{1}{\sqrt{n_a n_b}}\sum_{i \in z^{-1}(a)}\sum_{j \in z^{-1}(b)}\left\|\est_ {ij} - \theta_{ij}\right\|_{\mathcal{X}} > \sqrt{t} \right] \\
            & \leq  \mathbb{P}\left[ \frac{1}{n_a n_b}\sum_{i \in z^{-1}(a)}\sum_{j \in z^{-1}(b)}\left\|\est_ {ij} - \theta_{ij}\right\|_{\mathcal{X}} > \sqrt{\frac{t}{n_a n_b}} \right] \\
            & \leq \exp\left(-\frac{ct}{\cpsi^2}\right),
        \end{aligned}
    \end{equation}
    where \(c>0\) is an absolute constant. Using \citet[Prop 5.16]{vershynin_introduction_2011} we obtain
    \begin{equation}
        \mathbb{P}\left[ \sum_{a,b \in [k]}\left( V_{ab}(z) - \mathbb{E}\left[V_{ab}(z) \right]\right)> t\right] \leq \exp\left(-C_2 \min \left\{\frac{t^2}{k^2\cpsi^4},\frac{t}{\cpsi^2}\right\}\right),
    \end{equation}
    for some absolute constant \(C_2>0\). Using a union bound argument, we get
    \begin{equation}
        \mathbb{P}\left[  \max_{z \in \mathcal{Z}_{n,k}} \sum_{a,b \in [k]}\left( V_{ab}(z) - \mathbb{E}\left[V_{ab}(z) \right]\right)> t\right] \leq \exp\left(-C_2 \min \left\{\frac{t^2}{k^2\cpsi^4},\frac{t}{\cpsi^2}\right\}+ n \log k \right).
    \end{equation}
    Thus, for any \(C_3>0\), there exists \(C_4>0\) such that
    \begin{equation}
        \max_{z \in \mathcal{Z}_{n,k}} \sum_{a,b \in [k]}\left( V_{ab}(z) - \mathbb{E}\left[V_{ab}(z) \right]\right) \leq C_3 \left( \cpsi^2n \log k + \sqrt{\cpsi^4nk^2\log k }\right)
    \end{equation}
    with probability at least \(1-\exp \left(-C_4 n\log k\right)\). Plugging these results back in the original inequality \cref{eq:V_ab}, we have
    \begin{equation}
        \begin{aligned}
            \|\hat{\theta}-\tilde{\theta}\|^2 & \leq C_1 \cpsi^2 k^2 + C_3 \left( \cpsi^2n \log k + \sqrt{\cpsi^4nk^2\log k }\right) \\
            & \leq (C_1+2C_3)\cpsi^2 \left(k^2 + n\log k \right)
        \end{aligned}
    \end{equation}
    with probability at least \(1-\exp \left(-C_4 n\log k\right)\), which concludes the proof.
\end{proof}

\begin{lemma}
    \label{lemma:concentration_2}
   For any constant \(C^{\prime}>0\), there exists a constant \(C>0\) only depending on \(C^{\prime}\), such that
    \[
    \left|\left\langle\frac{\tilde{\theta}-\theta}{\|\tilde{\theta}-\theta\|}, \est- \theta\right\rangle\right| \leq C \cpsi\sqrt{ n \log k}
    \]
    with probability at least \(1-\exp \left(-C^{\prime} n \log k\right)\).
\end{lemma}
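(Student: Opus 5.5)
The plan is to fix the clustering, use that the direction $\tilde\theta-\theta$ is then deterministic, prove a sub-Gaussian tail for the resulting linear functional of the noise $\est-\theta$, and finish with a union bound over all clusterings. Work conditionally on the latent structure, so that $\theta$ is a fixed block-structured matrix, and set $\bias=0$ as in the surrounding proof, so $\bbE[\est_{ij}]=\theta_{ij}$.

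\textbf{Fixed clustering.} For $z\in\mathcal{Z}_{n,k}$, let $\tilde\theta(z)$ have entries $\tilde\theta(z)_{ij}=\bar\theta_{z(i)z(j)}(z)$. Then $\tilde\theta=\tilde\theta(\hat z)$, and $\tilde\theta(z)$ is a deterministic function of $z$ and $\theta$ only. When $\tilde\theta(z)\neq\theta$, define the unit-norm direction $u(z)=(\tilde\theta(z)-\theta)/\|\tilde\theta(z)-\theta\|$. When $\tilde\theta(z)=\theta$, interpret the term as $0$. Hence
\[
\left|\left\langle\frac{\tilde{\theta}-\theta}{\|\tilde{\theta}-\theta\|}, \est- \theta\right\rangle\right| \leq \max_{z\in\mathcal{Z}_{n,k}} \left|\langle u(z),\est-\theta\rangle\right|,
\]
and this removes the dependence of the direction on the data through $\hat z$.

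\textbf{Tail bound for fixed $z$.} Write $S(z)=\langle u(z),\est-\theta\rangle=\sum_{i,j}\langle u_{ij},\est_{ij}-\theta_{ij}\rangle_{\mathcal X}$.
\begin{enumerate}
\item Group the terms into independent scalars. Use symmetry $\est_{ij}=\est_{ji}$, $u_{ij}=u_{ji}$, and conditional independence of the edge processes across unordered pairs. This gives $S(z)=\sum_{i\le j} Y_{ij}$ with $Y_{ij}=c_{ij}\langle u_{ij},\est_{ij}-\theta_{ij}\rangle_{\mathcal X}$ and $c_{ij}\in\{1,2\}$. The $Y_{ij}$ are independent, real valued and mean zero.
\item Bound each summand. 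By Cauchy--Schwarz, $|Y_{ij}|\le 2\|u_{ij}\|_{\mathcal X}\,\|\est_{ij}-\theta_{ij}\|_{\mathcal X}$. By \Cref{assumption:edge_process_est}, $\|Y_{ij}\|_{\psi_2}\le 2\cpsi\|u_{ij}\|_{\mathcal X}$.
\item Sum. The general Hoeffding inequality for independent centered sub-Gaussian scalars (e.g.\ \citet[Prop.~5.10]{vershynin_introduction_2011}) gives
\[
\bbP\left[|S(z)|>t\right]\le e\exp\left(-\frac{ct^2}{\cpsi^2\sum_{i\le j}\|u_{ij}\|^2_{\mathcal X}}\right)\le e\exp\left(-\frac{ct^2}{\cpsi^2}\right),
\]
because $\sum_{i\le j}\|u_{ij}\|^2_{\mathcal X}\le\|u\|^2=1$.
\end{enumerate}

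\textbf{Union bound.} Since $|\mathcal{Z}_{n,k}|\le k^n=e^{n\log k}$,
\[
\bbP\left[\max_{z}|S(z)|>t\right]\le \exp\left(1+n\log k-\frac{ct^2}{\cpsi^2}\right).
\]
Take $t=C\cpsi\sqrt{n\log k}$ with $cC^2\ge 1+C'+1/(n\log k)$; for $k\ge2$ one can choose $C$ depending only on $C'$. The probability is then at most $\exp(-C'n\log k)$. The degenerate case $k=1$ has a single clustering and is handled directly.

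\textbf{Main obstacle.} The one delicate point is the data dependence of $\tilde\theta$ through $\hat z$. Replacing $\hat z$ by a supremum over all $z$ resolves it, and this costs exactly the $n\log k$ entropy term in the bound. The rest is bookkeeping: the symmetric double counting only changes constants, and the diagonal terms $i=j$ enter at most once each.
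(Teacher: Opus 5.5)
Your proposal is correct and follows the same route as the paper. You fix $z$ so that the unit direction $(\bar\theta(z)-\theta)/\|\bar\theta(z)-\theta\|$ is deterministic, derive a sub-Gaussian tail of order $\exp(-ct^2/\cpsi^2)$ for its inner product with $\est-\theta$, and take a union bound over the at most $k^n$ clusterings with $t=C\cpsi\sqrt{n\log k}$. Your bookkeeping is actually tighter than the paper's: you pair the identical $(i,j)$ and $(j,i)$ summands to get genuinely independent terms, you note that these are centered because $\bias=0$, and you justify the variance proxy through $\sum_{i\le j}\|u_{ij}\|_{\mathcal X}^2\le 1$ rather than only the entrywise bound $\|(u_z)_{ij}\|\le 1$.
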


\begin{proof}[Proof of\ \ \Cref{lemma:concentration_2}]
\label{proof:concentration_2}
    We bound the inner product using a covering argument over the set of possible clusterings $\mathcal{Z}_{n,k}$. Note that $\tilde{\theta}$ is determined entirely by the assignment function $\hat{z}$ and the true parameters $\theta$. For any fixed assignment $z \in \mathcal{Z}_{n,k}$, let $u_z = \frac{\bar{\theta}(z) - \theta}{\|\bar{\theta}(z) - \theta\|}$ be the normalized direction of the bias. The term of interest is bounded by
    \[
    \sup_{z \in \mathcal{Z}_{n,k}} \left|\langle u_z, \est - \theta \rangle\right|.
    \]
    For a fixed $z$, we have
    $$ \langle(u_z)_{ij}, (\est_{ij} - \theta_{ij})\rangle \leq \|(u_z)_{ij}\|\|\est_{ij} - \theta_{ij}\|,$$
    and since by assumption $\|\est_{ij}-\theta_{ij}\|$ is sub-Gaussian, $\langle u_z, \est - \theta \rangle$ is a sum of independent sub-Gaussian random variables with variance proxy at most $\cpsi^2$ (since $\|(u_z)_{ij}\| \leq 1$). Using the standard sub-Gaussian tail bound:
    \[
    \mathbb{P}\left[ \left|\langle u_z, \est - \theta \rangle\right| > t \right] \leq 2 \exp\left(-\frac{c t^2}{\cpsi^2}\right).
    \]
    The size of the set $\mathcal{Z}_{n,k}$ is bounded by $k^n$. Applying a union bound:
    \[
    \mathbb{P}\left[ \sup_{z \in \mathcal{Z}_{n,k}} \left|\langle u_z, \est - \theta \rangle\right| > t \right] \leq 2 k^n \exp\left(-\frac{c t^2}{\cpsi^2}\right) = 2 \exp\left(n \log k -\frac{c t^2}{\cpsi^2}\right).
    \]
    Setting $t = C \cpsi \sqrt{n \log k}$ for a sufficiently large constant $C$ ensures the probability is bounded by $\exp(-C' n \log k)$, concluding the proof.
\end{proof}

\begin{lemma}
    \label{lemma:concentration_3}
   For any constant \(C^{\prime}>0\), there exists a constant \(C>0\) only depending on \(C^{\prime}\), such that
    \[
    \left|\left\langle\frac{\hat{\theta}-\tilde{\theta}}{\|\hat{\theta}-\tilde{\theta}\|}, \est- \theta\right\rangle\right| \leq C \cpsi\sqrt{k^2 + n \log k}
    \]
    with probability at least \(1-\exp \left(-C^{\prime} n \log k\right)\).
\end{lemma}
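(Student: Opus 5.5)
The plan is to follow the analogous step in \citet{gao_rateoptimal_2015}: reduce the supremum over a random direction to a supremum over a structured, finite-dimensional family, then use sub-Gaussian concentration together with a union bound over $\mathcal{Z}_{n,k}$. First observe that $\hat{\theta}-\tilde{\theta}$ is blockwise constant with respect to $\hat z$. Indeed, $(\hat{\theta}-\tilde{\theta})_{ij} = \hat{Q}_{\hat z(i)\hat z(j)} - \bar{\theta}_{\hat z(i)\hat z(j)}(\hat z)$. Hence, for a fixed $z$, the normalized direction lies in the unit sphere of the subspace
\[
S_z := \{\eta \in \mathcal{X}^{n\times n}_{sym} : \eta_{ij} = R_{z(i)z(j)},\ R \in \mathcal{X}^{k\times k}_{sym}\}.
\]
The term $\RNum{3}$ is therefore bounded by $\sup_{z\in\mathcal{Z}_{n,k}} \sup_{v \in S_z,\|v\|=1} |\langle v, \est-\theta\rangle|$.

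For fixed $z$, I compute the inner sup explicitly instead of using a net. Writing $v_{ij}=R_{z(i)z(j)}$, we get
\[
\langle v,\est-\theta\rangle = \sum_{a,b} \Big\langle R_{ab}, \sum_{i\in z^{-1}(a),\, j\in z^{-1}(b)} (\est_{ij}-\theta_{ij})\Big\rangle_{\mathcal{X}}
\]
and $\|v\|^2=\sum_{a,b} n_a n_b\|R_{ab}\|^2_{\mathcal{X}}$. By Cauchy--Schwarz in the Hilbert space $\mathcal{X}^{k\times k}$, the supremum equals
\[
\Big(\sum_{a,b} n_a n_b \,\|\bar{\est}_{ab}(z)-\bar{\theta}_{ab}(z)\|^2_{\mathcal{X}}\Big)^{1/2} = \Big(\sum_{a,b} V_{ab}(z)\Big)^{1/2}.
\]
This holds up to the symmetry constraint, which changes only constants. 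The quantity is exactly the one controlled in the proof of \Cref{lemma:concentration_1}. So the remaining step is to reuse that argument: bound $\max_z\sum_{a,b}\mathbb{E}V_{ab}(z)\le C_1\cpsi^2k^2$, using the independence of the $\est_{ij}$ across pairs and $\bias=0$. Then bound the centered fluctuation $\max_z\sum_{a,b}(V_{ab}(z)-\mathbb{E}V_{ab}(z))$ by Bernstein's inequality for the sub-exponential $V_{ab}(z)$, combined with a union bound over the at most $k^n$ assignments. With probability at least $1-\exp(-C'n\log k)$ this gives $\max_z \sum_{a,b}V_{ab}(z)\le C\cpsi^2(k^2+n\log k)$. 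Taking square roots yields the claim.

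The main obstacle is the sub-exponential tail of $V_{ab}(z)$. For $\mathcal{X}$-valued (possibly infinite-dimensional) summands, the norm of a sum of independent centered vectors with sub-Gaussian norms needs a vector-valued concentration inequality, not the scalar one. The crude triangle-inequality route in \Cref{lemma:concentration_1} loses the $1/\sqrt{n_an_b}$ scaling. I would first try a Hilbert-space Hoeffding/Pinelis-type inequality: for independent centered $X_\ell$ with $\|\,\|X_\ell\|_{\mathcal{X}}\|_{\psi_2}\le\cpsi$, it gives
\[
\big\|\,\|\textstyle\sum_\ell X_\ell\|_{\mathcal{X}}\big\|_{\psi_2}\lesssim \cpsi\sqrt{N}.
\]
This makes $V_{ab}(z)$ sub-exponential with parameter $\lesssim\cpsi^2$ uniformly in block sizes. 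If that fails in full generality, I would fall back on the finite-dimensional case relevant for the examples, where a coordinatewise argument costs only a dimension-dependent constant. A minor point is the case $\hat\theta=\tilde\theta$, where the direction is undefined and the term can be set to $0$.
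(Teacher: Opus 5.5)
Your argument is correct. Its reduction is the paper's, in a slightly more general form.

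\textbf{The reduction.} The paper uses $\hat Q_{ab}=\bar{\est}_{ab}(\hat z)$ to get the exact identity $\langle \hat\theta-\tilde\theta,\est-\theta\rangle=\sum_{a,b}n_an_b\|\bar{\est}_{ab}(\hat z)-\bar\theta_{ab}(\hat z)\|_{\mathcal X}^2=\|\hat\theta-\tilde\theta\|^2$. So term $\RNum{3}$ equals term $\RNum{1}$, and \Cref{lemma:concentration_1} is quoted as is. Your Cauchy--Schwarz supremum over block-constant arrays arrives at the same quantity $\bigl(\sum_{a,b}V_{ab}(\hat z)\bigr)^{1/2}$; the supremum is attained exactly in the direction of $\hat\theta-\tilde\theta$. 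What your version buys is that it uses only that $\hat\theta-\tilde\theta$ is block-constant, not the closed form of $\hat Q$. One small fix: define $S_z$ with $R$ in the ambient Hilbert space, since $\mathcal X$ is only convex and is not a linear space.

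\textbf{Where you depart from the paper.} The real difference is in what you take from \Cref{lemma:concentration_1}, and your diagnosis there is right. The paper bounds $\mathbb P[V_{ab}(z)>t]$ by $\mathbb P\bigl[\frac{1}{n_an_b}\sum_{i,j}\|\est_{ij}-\theta_{ij}\|_{\mathcal X}>\sqrt{t/(n_an_b)}\bigr]$ and then claims $\exp(-ct/\cpsi^2)$. But that average of nonnegative terms concentrates near $\mathbb E\|\est_{ij}-\theta_{ij}\|_{\mathcal X}$, which is of order $\cpsi$ (e.g.\ for the Bernoulli sample mean with $\theta_{ij}=1/2$). Meanwhile the threshold tends to $0$ as $n_an_b$ grows with $t/\cpsi^2$ fixed. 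So that step does not hold as written.

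\textbf{Your repair.} The Hilbert-space inequality you propose does close the gap. Take independent centered $X_\ell$ with $\bigl\|\|X_\ell\|_{\mathcal X}\bigr\|_{\psi_2}\le\cpsi$.
\begin{itemize}
\item For $\lambda\lesssim 1/\cpsi$, Pinelis's exponential inequality gives $\mathbb E\cosh\bigl(\lambda\|\sum_{\ell\le N}X_\ell\|_{\mathcal X}\bigr)\le\exp(CN\lambda^2\cpsi^2)$. This yields the sub-Gaussian tail at levels $t\lesssim N\cpsi$.
\item Beyond that level, the triangle inequality combined with scalar Hoeffding for $\sum_\ell(\|X_\ell\|_{\mathcal X}-\mathbb E\|X_\ell\|_{\mathcal X})$ suffices.
\end{itemize}
Hence $\|V_{ab}(z)\|_{\psi_1}\lesssim\cpsi^2$ uniformly in the block sizes, and your Bernstein plus $k^n$ union-bound step goes through.

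Prefer this to your coordinatewise fallback. That route would inflate $\cpsi^2$ by the dimension of $\mathcal X$ (e.g.\ $d^2$ in the Markov chain example), so it would not give a constant depending only on $C'$ as the statement asserts.
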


\begin{proof}[Proof of\ \ \Cref{lemma:concentration_3}]
    We explicitly compute the inner product. Recall that $\hat{\theta}_{ij} = \bar{\est}_{ab}(\hat{z})$ and $\tilde{\theta}_{ij} = \bar{\theta}_{ab}(\hat{z})$ for $i \in \hat{z}^{-1}(a)$ and $j \in \hat{z}^{-1}(b)$. Therefore:
    \begin{align*}
        \langle \hat{\theta}-\tilde{\theta}, \est- \theta \rangle &= \sum_{a,b \in [k]} \sum_{i \in \hat{z}^{-1}(a)} \sum_{j \in \hat{z}^{-1}(b)} \langle \bar{\est}_{ab}(\hat{z}) - \bar{\theta}_{ab}(\hat{z}), \est_{ij} - \theta_{ij} \rangle_{\mathcal{X}} \\
        &= \sum_{a,b \in [k]} \left\langle \bar{\est}_{ab}(\hat{z}) - \bar{\theta}_{ab}(\hat{z}), \sum_{i \in \hat{z}^{-1}(a)} \sum_{j \in \hat{z}^{-1}(b)} (\est_{ij} - \theta_{ij}) \right\rangle_{\mathcal{X}} \\
        &= \sum_{a,b \in [k]} n_a n_b \left\langle \bar{\est}_{ab}(\hat{z}) - \bar{\theta}_{ab}(\hat{z}), \bar{\est}_{ab}(\hat{z}) - \bar{\theta}_{ab}(\hat{z}) \right\rangle_{\mathcal{X}} \\
        &= \sum_{a,b \in [k]} n_a n_b \|\bar{\est}_{ab}(\hat{z}) - \bar{\theta}_{ab}(\hat{z})\|_{\mathcal{X}}^2 \\
        &= \|\hat{\theta} - \tilde{\theta}\|^2.
    \end{align*}
    Thus, the term of interest simplifies to:
    \[
        \left|\left\langle\frac{\hat{\theta}-\tilde{\theta}}{\|\hat{\theta}-\tilde{\theta}\|}, \est- \theta\right\rangle\right| = \frac{\|\hat{\theta}-\tilde{\theta}\|^2}{\|\hat{\theta}-\tilde{\theta}\|} = \|\hat{\theta}-\tilde{\theta}\|.
    \]
    The result then follows from \Cref{lemma:concentration_1}.
\end{proof}

\subsection{Hölder-Smooth Model with Unbiased Edge-wise Estimators}
\label{sec:holder_convergence}

We now suppose that \(W\) is Hölder-smooth, meaning that there exists constants  \(\alpha, M>0\) such that that for any \((x,y),(u,v) \in [0,1]^2\), we have
\begin{equation}
    \label{eq:holder_smooth_def}
    \|W(x,y)-W(u,v)\|_\mathcal{X} \leq M\left(|x-u| + |y-v|\right)^\alpha.
\end{equation}
We use the notation \(W \in \mathcal{H}_{\alpha}(M)\) to denote that \(W\) is \(\alpha\)-Hölder smooth with constant \(M\). In this section, we suppose that for any \(i\neq j\), \(A_{ij}\) is sampled from a distribution parametrized by \(\theta_{ij}\), and that \(\theta_{ij} = W(\xi_i,\xi_j)\) for some iid latent variables \(\xi_i \sim  \UnifDist\). We will also assume that \(\Est\) satisfies \Cref{assumption:edge_process_est}.

\subsubsection{Oracle}

We first show the existence of a good approximation of \(\theta\) by an oracle \(\theta^*\):

\begin{lemma}
    \label{lemma:oracle_approx}
    If \(\ \theta_{ij} = W(\xi_i,\xi_j)\) for some \(W \in \mathcal{H}_{\alpha}(M)\), then there exists \(z^* \in \mathcal{Z}_{n,k}\), such that $$
\frac{1}{n^2} \sum_{a, b \in[k]} \sum_{\left\{i \neq j: z^*(i)=a, z^*(j)=b\right\}}\left\|\theta_{i j}-\bar{\theta}_{a b}\left(z^*\right)\right\|_\mathcal{X}^2 \leq C M^2\left(\frac{1}{k^2}\right)^{\alpha \wedge 1},
$$
for some universal constant \(C>0\).
\end{lemma}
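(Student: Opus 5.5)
The plan is the oracle construction of \citet[Lemma 2.1]{gao_rateoptimal_2015}, adapted from $\mathbb{R}$ to the Hilbert space $\mathcal{X}$. Since the statement is deterministic given $\{\xi_i\}$, I will build $z^*$ from the ranks of the latent variables. Sort the nodes so that $\xi_{\sigma(1)}\le \cdots \le \xi_{\sigma(n)}$. Assign consecutive blocks of about $n/k$ ranks to the $k$ groups, so each group has size $\lfloor n/k\rfloor$ or $\lceil n/k\rceil$; this gives a surjective $z^*\in\mathcal{Z}_{n,k}$ whenever $k\le n$.

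The key reduction is that $\bar\theta_{ab}(z^*)$ is a convex combination of the $\theta_{i'j'}$ with $z^*(i')=a$ and $z^*(j')=b$, so that
\[
\|\theta_{ij}-\bar\theta_{ab}(z^*)\|_\mathcal{X}^2 \le \frac{1}{n_an_b}\sum_{i',j'}\|\theta_{ij}-\theta_{i'j'}\|_\mathcal{X}^2 .
\]
This step uses only convexity of the squared Hilbert norm (Jensen), which is why the argument carries over from the scalar case. By \cref{eq:holder_smooth_def}, each term is at most $M^2(|\xi_i-\xi_{i'}|+|\xi_j-\xi_{j'}|)^{2\alpha}$. The diagonal $i=j$ is excluded by the statement; the averages may still include diagonal pairs, but it is cleaner to treat them as a negligible $O(1/n)$ fraction or to assume $\theta_{ii}$ is defined by $W$ as well.

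It then remains to bound the sum over $a,b$ of these within-block Hölder oscillations, that is, the average of $(|\xi_i-\xi_{i'}|+|\xi_j-\xi_{j'}|)^{2\alpha}$ over same-group pairs. There are two cases.

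\emph{Case $\alpha\le 1$.} Use $(x+y)^{2\alpha}\le 2^{2\alpha}(x^{2\alpha}+y^{2\alpha})$, which reduces the problem to one-dimensional spreads. Let $D_a=\max_{i\in z^{*-1}(a)}\xi_i-\min_{i\in z^{*-1}(a)}\xi_i$ denote the spread of group $a$. Then the total is at most $C M^2 \sum_{a} \frac{n_a}{n} D_a^{2\alpha}$. Because groups are consecutive in rank, the intervals are disjoint and $\sum_a D_a\le 1$. By Jensen/concavity of $x\mapsto x^{2\alpha}$ if $2\alpha\le 1$, or by $D_a\le 1$ together with Hölder otherwise, $\frac1k\sum_a D_a^{2\alpha}\le (\frac1k\sum_a D_a)^{2\alpha}\le k^{-2\alpha}$ when $2\alpha\le1$. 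For $1<2\alpha\le 2$ this needs care: a single group could have spread of order $1$.

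\emph{Case $\alpha>1$.} Reduce to $\alpha=1$ via $\mathcal{H}_\alpha(M)\subset\mathcal{H}_1(M')$, noting that Hölder with $\alpha>1$ forces $W$ to be constant. This is why the bound has exponent $\alpha\wedge 1$.

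The main obstacle is exactly this control of within-block spreads when $2\alpha>1$, where concavity is lost. Here I would follow Gao et al. and let the bound depend on the ordered uniform sample, for example by partitioning by the deterministic intervals $[(a-1)/k,a/k)$, so that $D_a\le 1/k$ exactly. The catch is that group sizes then become random and some groups may be empty, which breaks surjectivity. One remedy is to work with empty groups permitted (harmless, since they contribute nothing) and argue that the least-squares value over surjective labelings is no larger. The cleaner remedy is to use rank-based groups and control $D_a$ through the spacings of uniform order statistics, with $D_a\lesssim \log n/k$ uniformly with high probability. The latter introduces a log factor and a probability statement, so I would prefer the fixed-interval construction, yielding deterministically $\|\theta_{ij}-\bar\theta_{ab}\|^2\le M^2(2/k)^{2(\alpha\wedge1)}$ and hence the claim with $C=4$.
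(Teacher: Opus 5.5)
Your final choice (the fixed intervals $U_a=[(a-1)/k,a/k)$, with $\|\theta_{ij}-\bar\theta_{ab}(z^*)\|_{\mathcal{X}}$ controlled by an average of H\"older increments each at most $M(2/k)^{\alpha\wedge 1}$) is exactly the paper's proof adapted from Gao et al.; the only cosmetic difference is that you apply Jensen to the squared norm where the paper uses the triangle inequality and then squares, and the rank-based detour can be dropped. The paper does not address surjectivity (its main text takes $\mathcal{Z}_{n,k}$ to be all maps $[n]\to[k]$), but you can secure it directly when $k\le n$ by splitting nonempty groups to occupy the empty labels, since every refined block still lies inside a single interval of length $1/k$ and the same bound holds.
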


\begin{proof}[Proof of \cref{lemma:oracle_approx}]
    The proof is the same as the proof of \citet[Lemma 2.1]{gao_rateoptimal_2015}. We define \(z^*:[n] \rightarrow[k]\) by
    \[
    \left(z^*\right)^{-1}(a)=\left\{i \in[n]: \xi_i \in U_a := \left[\frac{a-1}{k},\frac{a}{k}\right).\right\}
    \]
    We use the notation \(n_a^*=\left|\left(z^*\right)^{-1}(a)\right|\) for each \(a \in[k]\) and \(Z_{a b}^*=\left\{(u, v): z^*(u)=a, z^*(v)=b\right\}\) for \(a, b \in[k]\). By such construction of \(z^*\), for \(i, j\) such that \(\xi_i \in U_a, \xi_j \in U_b\) with \(a \neq b\), we have
    \[
    \begin{aligned}
    \left\|W\left(\xi_i, \xi_j\right)-\bar{\theta}_{a b}\left(z^*\right)\right\|_\mathcal{X} = & \left\|W\left(\xi_i, \xi_j\right)-\frac{1}{n_a^* n_b^*} \sum_{(u, v) \in Z_{a b}^*} W\left(\xi_u, \xi_v\right)\right\|_{\mathcal{X}} \\
    \leq & \frac{1}{n_a^* n_b^*} \sum_{(u, v) \in Z_{a b}^*}\left\|W\left(\xi_i, \xi_j\right)-W\left(\xi_u, \xi_v\right)\right\|_{\mathcal{X}} \\
    \leq & \frac{1}{n_a^* n_b^*} \sum_{(u, v) \in Z_{a b}^*} M\left(\left|\xi_i-\xi_u\right|+\left|\xi_j-\xi_v\right|\right)^{\alpha \wedge 1} \\
    \leq & C_1 M k^{-(\alpha \wedge 1)},
    \end{aligned}
    \]
    for some \(C_1>0\). The second inequality above due to \cref{eq:holder_smooth_def}. Similar results also hold for the case \(a=b\). Summing over \(i,j \in[n]\), the proof is complete.
\end{proof}

\begin{remark}
    The oracle approximation error depends only on the Hölder smoothness of $W$ and the number of blocks $k$, not on the nature of the decorations or the temporal sample size $T$. This is because the oracle $\theta^*$ is obtained by averaging the true parameters $\theta_{ij}$ within each block; a purely spatial operation on the latent space $[0,1]^2$. Its accuracy is governed entirely by how well a piecewise-constant function on $k^2$ blocks approximates $W$. In particular, increasing the observation window $T$ cannot reduce the oracle error, which highlights the fundamental role of the network size $n$ (through the choice of $k$) in controlling the nonparametric approximation.
\end{remark}

\subsubsection{Main result}
\label{subsec:chapter_B_abstract_result_conv}
\theoremAbstractResultConv*

The proof follows similarly to the block model case in \cref{subsec:block_model_conv_proof}, but we need to redefine \(z^*\) and \(Q^*\) as the one defining the oracle, implying a good approximation of \(\theta\) by an oracle \(\theta^*\). Using the similar argument as outlined in the beginning of \cref{subsec:block_model_conv_proof}, we get
\begin{equation}
    \label{eq:holder_UB_loss}
    \begin{aligned}
    \frac{1}{2}\|\hat{\theta}-\theta^*\|^2 \leq & \left\langle\hat{\theta}-\theta^*, \est- \theta^*\right\rangle \\
    = & \langle\hat{\theta}-\tilde{\theta}, \est- \theta\rangle+\left\langle\tilde{\theta}-\theta^*, \est- \theta\right\rangle+\left\langle\hat{\theta}-\theta^*, \theta-\theta^*\right\rangle \\
    \leq & \|\hat{\theta}-\tilde{\theta}\|\left|\left\langle\frac{\hat{\theta}-\tilde{\theta}}{\|\hat{\theta}-\tilde{\theta}\|}, \est- \theta\right\rangle\right|+\left(\|\tilde{\theta}-\hat{\theta}\|+\|\hat{\theta}-\theta^*\|\right)\left|\left\langle\frac{\tilde{\theta}-\theta^*}{\|\tilde{\theta}-\theta^*\|}, \est- \theta\right\rangle\right| \\
    &  + \|\hat{\theta}-\theta^*\|\|\theta-\theta^*\|.
    \end{aligned}
\end{equation}
All the terms above are very similar to the ones in the block model case, apart from \(\left|\left\langle\frac{\tilde{\theta}-\theta^*}{\|\tilde{\theta}-\theta^*\|}, \est- \theta\right\rangle\right|\).

\begin{lemma}
    \label{lemma:concentration_4}
    For any constant \(C^{\prime}>0\), there exists a constant \(C>0\) only depending on \(C^{\prime}\), such that
    \[
    \left|\left\langle\frac{\tilde{\theta}-\theta^*}{\left\|\tilde{\theta}-\theta^*\right\|}, \est- \theta\right\rangle\right| \leq C \cpsi\sqrt{n \log k}
    \]
    with probability at least \(1-\exp \left(-C^{\prime} n \log k\right)\).
\end{lemma}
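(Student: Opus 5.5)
The plan mirrors the proof of \Cref{lemma:concentration_2}: a union bound over all clusterings, applied to a family of deterministic unit directions. The key point is that $\tilde{\theta}-\theta^*$ depends on the data only through $\hat{z}$. Here $\tilde{\theta}_{ij}=\bar{\theta}_{\hat{z}(i)\hat{z}(j)}(\hat{z})$, and $\theta^*_{ij}=\bar{\theta}_{z^*(i)z^*(j)}(z^*)$ is the oracle from \Cref{lemma:oracle_approx}.

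Throughout I condition on the latent variables $\{\xi_i\}$. Then $\theta$ and $z^*$ are fixed, and the $\est_{ij}-\theta_{ij}$ are independent across pairs, centred (since $\bias=0$), and satisfy $\|\est_{ij}-\theta_{ij}\|_{\psi_2}\le\cpsi$. For each $z\in\mathcal{Z}_{n,k}$ define the deterministic direction
\[
u_z=\frac{\bar{\theta}(z)-\theta^*}{\|\bar{\theta}(z)-\theta^*\|},
\]
with $u_z:=0$ if the denominator vanishes, in which case the bound is trivial. The quantity in the lemma is then at most $\sup_{z\in\mathcal{Z}_{n,k}}|\langle u_z,\est-\theta\rangle|$.

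The steps are as follows.

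\textbf{Fixed $z$.} For fixed $z$, $\langle u_z,\est-\theta\rangle=\sum_{i,j}\langle (u_z)_{ij},\est_{ij}-\theta_{ij}\rangle_{\mathcal{X}}$ is a sum of independent, mean-zero scalar variables. Each term is bounded in absolute value by $\|(u_z)_{ij}\|_{\mathcal{X}}\,\|\est_{ij}-\theta_{ij}\|_{\mathcal{X}}$, so its $\psi_2$ norm is at most $\cpsi\|(u_z)_{ij}\|_{\mathcal{X}}$. Because $\sum_{ij}\|(u_z)_{ij}\|^2=1$, Hoeffding's inequality for sub-Gaussian sums gives $\mathbb{P}(|\langle u_z,\est-\theta\rangle|>t)\le 2\exp(-ct^2/\cpsi^2)$.

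\textbf{Symmetry.} Pairs $(i,j)$ and $(j,i)$ carry the same variable. I would either restrict to $i<j$ and double the contribution, or treat the diagonal separately; both cost only a constant factor.

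\textbf{Union bound.} Taking a union bound over $|\mathcal{Z}_{n,k}|\le k^n$ assignments and choosing $t=C\cpsi\sqrt{n\log k}$ with $C^2c\ge 1+C'$ gives failure probability at most $2\exp(-C'n\log k)$. Since this holds conditionally for every realisation of $\{\xi_i\}$, it also holds unconditionally.

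\textbf{Main obstacle.} The main care point is the scalar sub-Gaussianity in the first step. Only the norm $\|\est_{ij}-\theta_{ij}\|_{\mathcal{X}}$ is assumed sub-Gaussian, not the projection, so the projection's mean-zero property has to come from unbiasedness, and its $\psi_2$ bound from domination by the norm. Centering a dominated variable changes its $\psi_2$ norm only by an absolute constant, so this goes through. The remaining subtlety is that $z^*$ depends on $\xi$, which the conditioning removes.
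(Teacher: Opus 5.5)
Your proposal is correct and follows the paper's route: the paper proves this lemma by repeating the argument of \Cref{lemma:concentration_2}, namely a sub-Gaussian tail bound for $\langle u_z,\est-\theta\rangle$ at a fixed clustering $z$, followed by a union bound over the at most $k^n$ assignments with $t=C\cpsi\sqrt{n\log k}$. Your extra care makes explicit steps the paper leaves implicit: you condition on $\{\xi_i\}$ so that $z^*$ is fixed, you get the variance proxy from $\sum_{i,j}\|(u_z)_{ij}\|_{\mathcal{X}}^2=1$ rather than only from $\|(u_z)_{ij}\|_{\mathcal{X}}\le 1$, and you handle the symmetric pairs.
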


\begin{proof}[Proof of\ \ \Cref{lemma:concentration_4}]
    This proof is identical to the proof of \Cref{lemma:concentration_2}.
\end{proof}

\begin{proof}[Proof of\ \ \Cref{theorem:abstract_result_conv} when \(\bias=0\)]
    To better organize \cref{eq:holder_UB_loss}, we follow \citet{gao_rateoptimal_2015} and introduce the notation
    \[
    \begin{gathered}
    L=\|\hat{\theta}-\theta^*\|, \quad R=\|\tilde{\theta}-\hat{\theta}\|, \quad B=\left\|\theta-\theta^*\right\| \\
    E=\left|\left\langle\frac{\hat{\theta}-\tilde{\theta}}{\|\hat{\theta}-\tilde{\theta}\|}, \est-\theta\right\rangle\right|, \quad F=\left|\left\langle\frac{\tilde{\theta}-\theta^*}{\|\tilde{\theta}-\theta^*\|}, \est-\theta\right\rangle\right|.
    \end{gathered}
    \]
    Then, from \cref{eq:holder_UB_loss}, we have

    \[
    L^2 \leq 2 R E+2(L+R) F+2 L B.
    \]
    By solving this quadratic inequality of \(L\), we can get
    \begin{equation}
        \label{eq:quadratic_inequality}
        L^2 \leq \max \left\{16(F+B)^2, 4 R(E+F)\right\}.
    \end{equation}

    By \Cref{lemma:concentration_1,lemma:concentration_2,lemma:concentration_3,lemma:concentration_4,lemma:oracle_approx}, for any constant $C^{\prime}>0$, there exist constants $C$ only depending on $C^{\prime}, M$, such that
    \begin{equation}
        \label{eq:UB_quadratic_terms}
        \begin{aligned}
            & B^2 \leq C n^2\left(\frac{1}{k^2}\right)^{\alpha \wedge 1}, \quad  \quad  && F^2 \leq C \cpsi^2n \log k, \\
            & R^2 \leq C\cpsi^2 \left(k^2+n \log k\right), && E^2 \leq C\cpsi^2\left(k^2+n \log k\right),
        \end{aligned}
    \end{equation}
    with probability at least $1-\exp \left(-C^{\prime} n\right)$. By \Cref{eq:quadratic_inequality}, we have
    \[L^2 \leq C_1\left(n^2\left(\frac{1}{k^2}\right)^{\alpha \wedge 1} + \cpsi^2\left(k^2+n \log k\right)\right)\]
    with probability at least \(1-\exp \left(-C^{\prime} n\log n\right)\) for some constant $C_1$. Hence, there is some constant $C_2$ such that
    \begin{equation}
        \begin{aligned}
            \frac{1}{n^2}\sum_{i,j \in [n]}\|\hat{\theta}_{ij}-\theta_{ij}\|_\mathcal{X}^2 &\leq \frac{2}{n^2}(L^2 + B^2) \\
            & \leq C_2\left(\left(\frac{1}{k^2}\right)^{\alpha \wedge 1} + \cpsi^2\left(\frac{k^2}{n^2} + \frac{\log k}{n}\right)\right)
        \end{aligned}
    \end{equation}
    with probability at least \(1-\exp \left(-C^{\prime} n\log n\right)\). When \(\alpha \geq 1\), picking \(k=\lceil\sqrt{n} \rceil\) gives the desired result. When \(\alpha < 1\), we pick \(k = \lceil n^{1/(\alpha+1)}\rceil\) to get the desired result.
\end{proof}

\subsection{Accounting for Bias in Edge-wise Estimators}
\label{subsec:bias_estimation}
We now deal with the case $\bias > 0$. Let $\Eest_{ij} = \bbE\left[\est_{ij}\right]$. We decompose the least-squares error (\cref{eq:mse_loss}) of our estimator as follows:
\begin{equation}
    \label{eq:impact_bias}
    \begin{aligned}
        \sum_{i,j=1}^{n}\left\|\hat{\theta}_{ij}-\theta_{ij}\right\|^2 &= \sum_{i,j=1}^{n}\left\|\hat{\theta}_{ij}-\Eest_{ij} + \Eest_{ij}-\theta_{ij}\right\|^2\\
        &\leq 2 \sum_{i,j=1}^{n}\left\|\hat{\theta}_{ij}-\Eest_{ij}\right\|^2 + 2 \sum_{i,j=1}^{n}\left\|\Eest_{ij}-\theta_{ij}\right\|^2\\
        & \leq 2 \sum_{i,j=1}^{n}\left\|\hat{\theta}_{ij}-\Eest_{ij}\right\|^2 + 2 n^2 \bias^2,
    \end{aligned}
\end{equation}
where we use the fact that \(\left\|\Eest_{ij}-\theta_{ij}\right\|_{\mathcal{X}} = \left\|\bbE\left[\est_{ij}\right]-\theta_{ij}\right\|_{\mathcal{X}} \leq \bias\) by \Cref{assumption:edge_process_est}, and thus we obtain
\begin{equation}
    \label{eq:upper_bound_loss_bias}
    \frac{1}{n^2}\sum_{i,j\in[n]}\|\hat{\theta}_{ij}-\theta_{ij}\|^2 \leq \frac{2}{n^2}\sum_{i,j=1}^{n}\left\|\hat{\theta}_{ij}-\Eest_{ij}\right\|^2 + 2\bias^2.
\end{equation}

The results from the block model case (\Cref{subsec:block_model_conv_proof}) hold by substituting $(\Eest_{ij})$ for $(\theta_{ij})$.

\begin{proof}[Proof of\ \ \Cref{theorem:abstract_result_blockmodel} when $\bias \neq 0$]
    From \Cref{eq:upper_bound_loss_bias}, we are left with bounding
    $$\frac{1}{n^2}\sum_{i,j=1}^{n}\left\|\hat{\theta}_{ij}-\Eest_{ij}\right\|^2.$$
    This can be done by following the same steps as in the proof of \Cref{theorem:abstract_result_blockmodel} when $\bias = 0$, but replacing $\theta$ with $\Eest$ and $\tilde{\theta}$ with $\tilde{\Eest}$ accordingly. The proofs of the corresponding versions of \Cref{lemma:concentration_1,lemma:concentration_2,lemma:concentration_3} follow mutatis mutandis.

    We then get that for any constant \(C^{\prime}>0\), there exists a constant \(C>0\) only depending on \(C^{\prime}\), such that
    \begin{equation}
        \begin{aligned}
            \frac{1}{n^2}\sum_{i,j\in[n]}\|\hat{\theta}_{ij}-\theta_{ij}\|^2 &\leq \frac{2}{n^2}\sum_{i,j=1}^{n}\left\|\hat{\theta}_{ij}-\Eest_{ij}\right\|^2 + 2\bias^2 \\
            &\leq C\left(\cpsi^2\left(\frac{k^2}{n^2}+\frac{\log k}{n}\right) + \bias^2\right)
        \end{aligned}
    \end{equation}
    with probability at least \(1-\exp \left(-C^{\prime} n\log k\right)\), concluding the proof.
\end{proof}

For the continuous case, we proceed similarly, but we need to make sure that we can get a good oracle approximation of $(\Eest_{ij})$ by a block model, and track how this bias is propagated trough \Cref{eq:holder_UB_loss}.

\begin{proof}[Proof of\ \ \Cref{theorem:abstract_result_conv} when $\bias \neq 0$]
   We proceed similarly to the case when $\bias = 0$. Replacing $\theta$ with $\Eest$ in \Cref{eq:holder_UB_loss}, we are left with
   \[ L=\|\hat{\Eest}-\Eest^*\|, \quad \text{and} \; B=\left\|\Eest-\Eest^*\right\|,\]
    and we now have for any constant $C^{\prime}>0$, there exist constants $C$ only depending on $C^{\prime}, M$, such that
    \begin{equation}
        \begin{aligned}
            \frac{1}{n^2}\sum_{i,j \in [n]}\|\hat{\theta}_{ij}-\theta_{ij}\|_\mathcal{X}^2 &\leq \frac{2}{n^2} \sum_{i,j=1}^{n}\left\|\hat{\theta}_{ij}-\Eest_{ij}\right\|^2 + 2 \bias^2  \\
            & \leq \frac{4}{n^2}(L^2 + B^2) + 2\bias^2,
        \end{aligned}
    \end{equation}
    with probability at least \(1-\exp \left(-C^{\prime} n\log n\right)\). The difference with the case $\bias = 0$ is that $B^2$ is now upper bounded by $Cn^2\left(\left(\frac{1}{k^2}\right)^{\alpha \wedge 1}+\bias^2\right)$, and thus the bias term $\bias^2$ will be added to the final rate of convergence; more precisely, there is some constant $C_2$ such that

    \begin{equation*}
        \begin{aligned}
            \frac{1}{n^2}\sum_{i,j \in [n]}\|\hat{\theta}_{ij}-\theta_{ij}\|_\mathcal{X}^2 &\leq C_2\left(\left(\frac{1}{k^2}\right)^{\alpha \wedge 1} + \cpsi^2\left(\frac{k^2}{n^2} + \frac{\log k}{n}\right) + \bias^2\right) + 2\bias^2\\
            & \leq (C_2 + 2)\left(\left(\frac{1}{k^2}\right)^{\alpha \wedge 1} + \cpsi^2\left(\frac{k^2}{n^2} + \frac{\log k}{n}\right) + \bias^2\right)
        \end{aligned}
    \end{equation*}
    with probability at least \(1-\exp \left(-C^{\prime} n\log n\right)\), concluding the proof.
\end{proof}

\begin{lemma}[Adaptation of \Cref{lemma:oracle_approx} when $\bias \neq 0$]
    \label{lemma:oracle_approx_bias}
    Let \(\ \theta_{ij} = W(\xi_i,\xi_j)\;\) for some \(W \in \mathcal{H}_{\alpha}(M)\), and we define $\Eest_{ij} = \bbE\left[\Est(A_{ij})\right]$. Then there exists \(z^* \in \mathcal{Z}_{n,k}\), such that $$
    \frac{1}{n^2} \sum_{a, b \in[k]} \sum_{\left\{i \neq j: z^*(i)=a, z^*(j)=b\right\}}\left\|\Eest_{i j}-\bar{\Eest}_{a b}\left(z^*\right)\right\|_\mathcal{X}^2 \leq C \left(M^2\left(\frac{1}{k^2}\right)^{\alpha \wedge 1}+\bias^2\right),
$$
for some universal constant \(C>0\).
\end{lemma}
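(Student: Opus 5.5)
We will use the same partition as in \Cref{lemma:oracle_approx}: $z^*$ assigns node $i$ to block $a$ whenever $\xi_i \in U_a = [\frac{a-1}{k},\frac{a}{k})$. The idea is to compare $\Eest$ with $\theta$ entrywise and use the triangle inequality, so that the bias only adds a term of order $\bias^2$ to the Hölder approximation error.

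First, fix $a,b$ and a pair $(i,j)$ with $z^*(i)=a$, $z^*(j)=b$, $i\neq j$. By linearity of the block average, $\bar{\Eest}_{ab}(z^*) - \bar{\theta}_{ab}(z^*) = \overline{(\Eest-\theta)}_{ab}(z^*)$. This is an average of vectors, each of norm at most $\bias$ by \Cref{assumption:edge_process_est}, so its norm is also at most $\bias$ by convexity of the norm. We then write
\[
\Eest_{ij}-\bar{\Eest}_{ab}(z^*) = (\Eest_{ij}-\theta_{ij}) + (\theta_{ij}-\bar{\theta}_{ab}(z^*)) + (\bar{\theta}_{ab}(z^*)-\bar{\Eest}_{ab}(z^*)).
\]
The first and third terms each have norm at most $\bias$. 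Applying $\|x+y+w\|^2\le 3(\|x\|^2+\|y\|^2+\|w\|^2)$ gives
\[
\|\Eest_{ij}-\bar{\Eest}_{ab}(z^*)\|_\mathcal{X}^2 \le 3\|\theta_{ij}-\bar{\theta}_{ab}(z^*)\|_\mathcal{X}^2 + 6\bias^2 .
\]

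Next, we sum over $a,b$ and the off-diagonal pairs in each block. There are at most $n^2$ such pairs, so after normalizing by $1/n^2$ the $\bias^2$ contribution is at most $6\bias^2$. The remaining term is exactly $3$ times the left-hand side of \Cref{lemma:oracle_approx}, which is bounded by $3CM^2 k^{-2(\alpha\wedge 1)}$. Adjusting the universal constant gives the claim.

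The one point to handle with care concerns the block averages. The averages $\bar{\cdot}_{ab}$ in the paper's definition include all pairs in $z^{-1}(a)\times z^{-1}(b)$, including diagonal ones when $a=b$, where $\theta_{ii}$ and $\Eest_{ii}$ are not naturally defined. We will adopt the same convention as in \Cref{lemma:oracle_approx}, so that the identity $\bar{\Eest}-\bar{\theta}=\overline{\Eest-\theta}$ involves only entries of norm at most $\bias$. Alternatively, we will note that diagonal entries can be set to satisfy the bias bound trivially. Beyond this, the argument is a direct triangle-inequality reduction to \Cref{lemma:oracle_approx}. It uses no probabilistic input, since $\Eest$ is deterministic given the latents $\{\xi_i\}$.
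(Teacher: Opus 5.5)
Your proof is correct and takes essentially the paper's route: insert $\theta$ via the triangle inequality so that each of the two bias terms costs at most $\bias$, leaving only the H\"older approximation error of \Cref{lemma:oracle_approx}. The difference is cosmetic. The paper inserts $\theta_{ij},\theta_{uv}$ inside the block average and redoes the pointwise H\"older bound, whereas you insert $\bar{\theta}_{ab}(z^*)$ and use \Cref{lemma:oracle_approx} as a black box, so your reduction works for any partition. Your diagonal concern is moot, since \Cref{assumption:edge_process_est} already imposes the bias bound for all $i,j\in[n]$.
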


\begin{proof}[Proof of\ \ \Cref{lemma:oracle_approx_bias}]
    Using the same construction of \(z^*\) as in the proof of \Cref{lemma:oracle_approx}, we have for \(i, j\) such that \(\xi_i \in U_a, \xi_j \in U_b\) with \(a \neq b\),
    \begin{equation*}
        \begin{aligned}
            \left\|\Eest_{ij}-\bar{\Eest}_{a b}\left(z^*\right)\right\|_\mathcal{X} = & \left\|\Eest_{ij}-\frac{1}{n_a^* n_b^*} \sum_{(u, v) \in Z_{a b}^*}\Eest_{uv}\right\|_{\mathcal{X}} \\
            \leq & \frac{1}{n_a^* n_b^*} \sum_{(u, v) \in Z_{a b}^*}\left\|\Eest_{ij} - \Eest_{uv}\right\|_{\mathcal{X}} \\
            \leq &  \frac{1}{n_a^* n_b^*} \sum_{(u, v) \in Z_{a b}^*}\left\|\Eest_{ij} - \theta_{ij} + \theta_{ij} - \theta_{uv} + \theta_{uv} - \Eest_{uv}\right\|_{\mathcal{X}} \\
            \leq &  \frac{1}{n_a^* n_b^*} \sum_{(u, v) \in Z_{a b}^*}\left(\left\|\Eest_{ij} - \theta_{ij}\right\|_{\mathcal{X}} + \left\|\theta_{ij} - \theta_{uv}\right\|_{\mathcal{X}} + \left\|\theta_{uv} - \Eest_{uv}\right\|_{\mathcal{X}}\right) \\
            \leq & \frac{1}{n_a^* n_b^*} \sum_{(u, v) \in Z_{a b}^*} M\left(\left|\xi_i-\xi_u\right|+\left|\xi_j-\xi_v\right|\right)^{\alpha \wedge 1} + 2\bias \\
            \leq & C_1 \left(k^{-(\alpha \wedge 1)} + \bias\right),
        \end{aligned}
    \end{equation*}
    for some \(C_1>0\). Squaring and summing over \(i,j \in[n]\), the proof is complete.
\end{proof}

\section{Proof of Examples in \texorpdfstring{\Cref{sec:a_examples}}{Section 2.4}}
\label{sec:appendix-examples}

\subsection{Markov Chain Case}
\label{subsec:concentration_MC}

\lemmaConcentrationMC*

\newcommand{\psimcsquare}{32 \frac{d\sqrt{\tau_{\min }}}{T \pi_*^{3 / 2}} \log \left(\frac{2 d}{\pi_*}\right)}

\begin{proof}[Proof of\ \ \Cref{lemma:concentration_MC}]

    We use the Frobenius norm $\|\cdot\|_F$ to compare $\hat{P}$ and $P$. We proceed in two steps: first we bound the bias and then derive the sub-Gaussian concentration of $\|\hat{P}-\bbE[\hat{P}]\|_{F}$.

    \paragraph{Step 1: Bias bound.}
    We consider $\bias=\left\|\bbE\left[\hat{P}\right]-P\right\|_{F}$. Using \Cref{lemma:bias_empirical_transition}, we have
    \begin{equation}
        \label{eq:bias_bound_MC}
        \begin{aligned}
        \bias^2 &= \sum_{u,v \in [d]} |\bbE[\hat{P}_{uv}] - P_{uv}|^2 \\
        & \leq \sum_{u,v \in [d]} \left(\frac{\tau_{\min}}{(T-1)\pi_*^2} + \sqrt{\frac{2}{\pi_*}}\exp\left(-\frac{T\pi_*^2}{\tau_{\min}}\right)\right)^2\\
        &\leq 2d^2 \left(\frac{\tau_{\min}^2}{(T-1)^2\pi_*^4} + \frac{2}{\pi_*}\exp\left(-\frac{2T\pi_*^2}{\tau_{\min}}\right)\right).
    \end{aligned}
    \end{equation}

    Using \Cref{lemma:bias_lambert_function_proof} and the lower bound for $T$ in \Cref{lemma:concentration_MC}, we can further simplify the bias bound as follows:
    \begin{equation}
        \label{eq:bias_bound_MC_simplified}
        T > \frac{32\,d\,\tau_{\min}^{3/2}}{\pi_*^{5/2}\,\log(2d/\pi_*)} \geq \frac{2\tau_{\min}}{\pi_*^2}\log\left(\frac{2}{\pi_*}\right) \implies \bias^2 \leq \frac{4d^2\tau_{\min}^2}{(T-1)^2\pi_*^4}.
    \end{equation}

    \paragraph{Step 2: Sub-Gaussian concentration.}
    We use \Cref{lemma:concentration_transition_matrix} to get that for $\varepsilon > 0$,
    \begin{equation*}
        \mathbb{P}\left[\|\hat{P}-P\|_{F} > \varepsilon\right]  \leq  \exp \left(-\varepsilon^2/\tilde{\psi}^2\right),
    \end{equation*}
    where $\tilde{\psi}^2 = \psimcsquare$. Since $\|P - \bbE[\hat{P}]\|_{F}$ is deterministic, its sub-Gaussian norm satisfies $\|P - \bbE[\hat{P}]\|_{\psi_2} = \|P - \bbE[\hat{P}]\|_{F} / \sqrt{\log 2}$  (see \citet[Remark 2.5.7]{vershynin_highdimensional_2018}). Applying the triangle inequality for sub-Gaussian norms gives
    \begin{equation*}
        \|\hat{P}-\bbE\left[\hat{P}\right]\|_{\psi_2} \leq \|\hat{P}-P\|_{\psi_2} + \frac{\left\|P-\bbE\left[\hat{P}\right]\right\|_{F}}{\sqrt{\log 2}} \leq \tilde{\psi} + \frac{\bias}{\sqrt{\log 2}}.
    \end{equation*}

    It remains to show that $\bias/\sqrt{\log 2} \leq \tilde{\psi}$.  By \cref{eq:bias_bound_MC_simplified}, this reduces to
    \begin{equation}
        \label{eq:bias_absorb_squared}
        \frac{4d^2\tau_{\min}^2}{(T-1)^2\pi_*^4\log2} \leq \frac{32\,d\sqrt{\tau_{\min}}}{T\pi_*^{3/2}}\log\!\left(\frac{2d}{\pi_*}\right).
    \end{equation}
    Rearranging \cref{eq:bias_absorb_squared} (all factors are positive) and using $T/(T-1)^2 \leq 4/T$ for $T \geq 2$, it is enough to show
    \[
        \frac{16\,d\,\tau_{\min}^{3/2}}{\pi_*^{5/2}\,\log 2} \cdot \frac{1}{T}
        \leq \log\!\left(\frac{2d}{\pi_*}\right),
    \]
    i.e.\
    \[
        T \geq \frac{16\,d\,\tau_{\min}^{3/2}}{\pi_*^{5/2}\,\log(2)\log(2d/\pi_*)},
    \]
    which is a consequence of the lower bound for $T$ in \Cref{lemma:concentration_MC}.  We thus obtain
    \begin{equation*}
        \|\hat{P}-\bbE\left[\hat{P}\right]\|_{\psi_2} \leq 2\tilde{\psi}.
    \end{equation*}

    Combining the two steps, we obtain $\cpsi^2 \leq C\tilde{\psi}^2$ and $\bias^2 \leq C' d^2 \pi_*^{-1}\exp(-2T\pi_*^2/\tau_{\min})$, which concludes the proof.
\end{proof}

\begin{lemma}
    \label{lemma:bias_empirical_transition}
     Using the notation of \Cref{lemma:concentration_MC}, for any $u,v \in [d]$, we have
    \begin{equation*}
        \bias_{uv} = \left|\bbE\left[\hat{P}_{uv}\right] - P_{uv}\right| \leq \frac{\tau_{\min}}{(T-1)\pi_*^2} + \sqrt{\frac{2}{\pi_*}}\exp\left(-\frac{T\pi_*^2}{\tau_{\min}}\right)=O\left(\frac{\tau_{\min}}{T\pi_*^2}\right).
    \end{equation*}
\end{lemma}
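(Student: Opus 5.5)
We prove \Cref{lemma:bias_empirical_transition} by separating, for a fixed pair $(u,v)$, the event on which state $u$ is visited often from its complement, and writing $\hat{P}_{uv}-P_{uv}$ as a martingale ratio. Write $N_u=\sum_{t=1}^{T-1}\indicator{X_t=u}$ and $N_{uv}=\sum_{t=1}^{T-1}\indicator{X_t=u,X_{t+1}=v}$. Set $M_{T-1}:=N_{uv}-P_{uv}N_u=\sum_{t=1}^{T-1}\indicator{X_t=u}\bigl(\indicator{X_{t+1}=v}-P_{uv}\bigr)$. By the Markov property this is a martingale with respect to the natural filtration, has mean zero, and has bounded increments. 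On $\{N_u>0\}$ we have $\hat{P}_{uv}-P_{uv}=M_{T-1}/N_u$. On $\{N_u=0\}$ we have $|\hat{P}_{uv}-P_{uv}|=|1/d-P_{uv}|\leq 1$.

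Introduce the good event $G=\{N_u\geq (T-1)\pi_u/2\}$. Decompose
\[
\bbE[\hat{P}_{uv}]-P_{uv}=\bbE\bigl[(M_{T-1}/N_u)\indicator{G}\bigr]+\bbE\bigl[(\hat{P}_{uv}-P_{uv})\indicator{G^c}\bigr].
\]
The second term is at most $\bbP[G^c]$ in absolute value. The plan is to bound $\bbP[G^c]$ by a Markov-chain Chernoff/Hoeffding bound for additive functionals, of the Paulin or Chung--Lam--Liu--Mitzenmacher type, in terms of the mixing time $\tau_{\min}$. Such a bound gives a tail of the form $\|\mu/\pi\|_{\pi}\exp(-c(T-1)\pi_u^2/\tau_{\min})$. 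The initial-distribution factor $\|\mu/\pi\|_\pi\leq \sqrt{1/\pi_*}$ produces the prefactor $\sqrt{2/\pi_*}$ after adjusting constants, and $\pi_u\geq\pi_*$ gives the exponent $T\pi_*^2/\tau_{\min}$. Producing exactly the displayed form is the step most sensitive to which concentration result is used, so I would first try the version in \citet{wolfer_minimax_2019}, from which the notation is borrowed.

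For the first term, the ratio of a martingale to a correlated denominator is not mean-zero, so a linearization is needed. Use $\bbE[M_{T-1}]=0$ to write
\[
\bbE\bigl[(M_{T-1}/N_u)\indicator{G}\bigr]=\bbE\Bigl[M_{T-1}\Bigl(\tfrac{1}{N_u}-\tfrac{1}{\bbE N_u}\Bigr)\indicator{G}\Bigr]-\tfrac{1}{\bbE N_u}\bbE\bigl[M_{T-1}\indicator{G^c}\bigr].
\]
The last piece is bounded by $\bbP[G^c]$, since $|M_{T-1}|\leq N_u\lesssim T$ and $\bbE N_u\gtrsim T\pi_*$. That holds once $T\gtrsim\tau_{\min}/\pi_*$, using $|\bbE N_u-(T-1)\pi_u|\lesssim\tau_{\min}$ from geometric mixing. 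It can therefore be absorbed into the exponential term.

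For the main piece, apply Cauchy--Schwarz on $G$. This gives
\[
|\cdot|\leq \sqrt{\bbE M_{T-1}^2}\,\sqrt{\bbE\bigl[(N_u-\bbE N_u)^2\bigr]}\Big/\bigl(\tfrac{(T-1)\pi_u}{2}\,\bbE N_u\bigr).
\]
By orthogonality of martingale increments, $\bbE M_{T-1}^2=P_{uv}(1-P_{uv})\bbE N_u\leq \bbE N_u$. The variance of the occupation count of a chain with mixing time $\tau_{\min}$ satisfies $\operatorname{Var}(N_u)\lesssim \tau_{\min}(T-1)\pi_u$, by the standard covariance-decay bound $\sum_s|\operatorname{Cov}(\indicator{X_t=u},\indicator{X_s=u})|\lesssim\tau_{\min}\pi_u$. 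Combining these gives a bound of order $\sqrt{T\pi_u}\sqrt{\tau_{\min}T\pi_u}/(T\pi_u)^2=\sqrt{\tau_{\min}}/(T\pi_u)$. This is better than the claimed $\tau_{\min}/((T-1)\pi_*^2)$, since $\tau_{\min}\geq1$ and $\pi_*\leq 1$, and the claimed bound follows after adjusting constants.

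Summing the two contributions yields the stated inequality. The final $O(\tau_{\min}/(T\pi_*^2))$ follows because $x e^{-x}$ is bounded: the exponential term is dominated by $\tau_{\min}/(T\pi_*^2)$ once one multiplies through by $\sqrt{2/\pi_*}$ and uses $\pi_*\leq1$. The main obstacle is obtaining the concentration bound for $N_u$ with the right dependence on $\mu$ for a non-stationary start. I would rely on a cited Markov-chain Hoeffding inequality rather than re-derive it.
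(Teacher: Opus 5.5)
Your argument is sound up to constants, but it follows a different route from the paper.

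\textbf{The paper's route.} The paper introduces no good event. On $\{N_u>0\}$ it uses the exact identity $\frac1x=\frac1\nu-\frac{x-\nu}{\nu^2}+\frac{(x-\nu)^2}{\nu^2x}$ with $\nu=(T-1)\pi_u$. This gives $\bbE[\frac{M}{N_u}\indicator{N_u>0}]=-\mathrm{Cov}(M,N_u)/\nu^2+\bbE[\frac{M}{N_u}\frac{(N_u-\nu)^2}{\nu^2}\indicator{N_u>0}]$. The covariance is handled by Cauchy--Schwarz. The remainder is handled by $|M/N_u|\le 1$ together with $\mathrm{Var}(N_u)\le\tau_{\min}(T-1)$. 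Because this remainder needs no lower bound on $N_u$, the only bad event is $\{N_u=0\}$, which is a deviation of the full mean. Paulin's Hoeffding bound plus the change-of-measure factor then gives exactly $\sqrt{2/\pi_*}\exp(-T\pi_*^2/\tau_{\min})$.

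\textbf{What your route buys and costs.} Your first-order linearization needs $N_u\ge\nu/2$ to control the denominator. That buys you a sharper polynomial term. Because $M$ and $N_u-\bbE N_u$ stay paired in a single Cauchy--Schwarz, your variance bound $\mathrm{Var}(N_u)\lesssim\tau_{\min}T\pi_u$ yields $O(\sqrt{\tau_{\min}}/(T\pi_u))$. The paper's $\tau_{\min}/(T\pi_*^2)$ comes from the crude remainder bound. It also buys a more careful treatment of $\bbE N_u\neq(T-1)\pi_u$ under a non-stationary start, which the paper silently ignores.

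The price is the exponential term. A deviation of $\nu/2$ in the same concentration inequality yields only $\exp(-T\pi_*^2/(4\tau_{\min}))$. No change of multiplicative constants turns that into the displayed exponent, so "adjusting constants" does not recover the statement verbatim. If you want the display exactly, the paper's second-order identity is the clean fix.

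\textbf{Two small slips.}
\begin{enumerate}
\item For the piece $\bbE[M_{T-1}\indicator{G^c}]/\bbE N_u$, the bounds $|M_{T-1}|\le N_u\lesssim T$ and $\bbE N_u\gtrsim T\pi_*$ give only $\bbP[G^c]/\pi_*$. Use instead $|M_{T-1}|\le N_u<(T-1)\pi_u/2$ on $G^c$, which gives roughly $\frac12\bbP[G^c]$.
\item In the final $O(\cdot)$ step, the fact $\pi_*\le1$ works against you. With $x=T\pi_*^2/\tau_{\min}$, the bound $\sqrt{2/\pi_*}\,e^{-x}\lesssim1/x$ holds uniformly only once $x\gtrsim\log(2/\pi_*)$. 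That requires a lower bound on $T$ like the one in \Cref{lemma:bias_lambert_function_proof}. Alternatively, let the constant depend on $\pi_*$, which is how the paper reads the $O(\cdot)$.
\end{enumerate}
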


\begin{proof}[Proof of\ \ \Cref{lemma:bias_empirical_transition}]

Let $N_u = \sum_{t=1}^{T-1} \indicator{X_t = u}$ denote the number of visits to state $u$, and let $N_{uv} = \sum_{t=1}^{T-1} \indicator{X_t = u, X_{t+1} = v}$ be the number of observed transitions from $u$ to $v$. Let $\nu = \mathbb{E}[N_u] = (T-1)\pi_u$. We consider
\begin{equation*}
    \hat{P}_{uv} = \frac{N_{uv}}{N_u} \indicator{N_u > 0} + \frac{1}{d} \indicator{N_u = 0}.
\end{equation*}

Define $M = N_{uv} - P_{uv}N_u = \sum_{t=1}^{T-1} \indicator{X_t = u}\big(\indicator{X_{t+1} = v} - P_{uv}\big)$. The sequence of summands forms a martingale difference sequence with respect to the filtration $\mathcal{F}_t = \sigma(X_1, \dots, X_t)$, implying $\mathbb{E}[M] = 0$. Since $M = 0$ whenever $N_u = 0$, it also holds that $\mathbb{E}[M \indicator{N_u > 0}] = 0$.

The estimation error can be decomposed as:
\begin{equation*}
    \hat{P}_{uv} - P_{uv} = \frac{M}{N_u} \indicator{N_u > 0} + \left(\frac{1}{d} - P_{uv}\right) \indicator{N_u = 0}.
\end{equation*}

To compute the expectation of the ratio without conditioning on $N_u$, we use that for $x > 0$:
\begin{equation*}
    \frac{1}{x} = \frac{1}{\nu} - \frac{x - \nu}{\nu^2} + \frac{(x - \nu)^2}{\nu^2 x}.
\end{equation*}
Substituting $x = N_u$ and multiplying by $M \indicator{N_u > 0}$ yields:
\begin{equation*}
    \frac{M}{N_u} \indicator{N_u > 0} = \frac{M}{\nu} \indicator{N_u > 0} - \frac{M(N_u - \nu)}{\nu^2} \indicator{N_u > 0} + \frac{M(N_u - \nu)^2}{\nu^2 N_u} \indicator{N_u > 0}.
\end{equation*}

Taking the expectation of both sides, and noting that $M(N_u - \nu) = 0$ when $N_u = 0$, the first term vanishes and we obtain:
\begin{equation}
    \label{eq:bias_decomp}
    \mathbb{E}\left[\frac{M}{N_u} \indicator{N_u > 0}\right] = -\frac{\text{Cov}(M, N_u)}{\nu^2} + \mathbb{E}\left[ \frac{M}{N_u} \frac{(N_u - \nu)^2}{\nu^2} \indicator{N_u > 0} \right].
\end{equation}

We bound the two terms in \cref{eq:bias_decomp} separately. For the covariance term, we apply the Cauchy-Schwarz inequality.
Using properties of martingale difference sequences, the variance of $M$ is bounded by:
\begin{equation*}
    \text{Var}(M) = \sum_{t=1}^{T-1} \mathbb{E}\left[ \indicator{X_t = u} (\indicator{X_{t+1=v}}-P_{uv})^2 \right] \leq \frac{1}{4}\mathbb{E}[N_u] = \frac{\nu}{4}.
\end{equation*}
Using \citet[Corollary 2.10]{paulin_concentration_2015} and properties of sub-Gaussian random variables, we get that the variance of the occupation time satisfies $\text{Var}(N_u) \leq \tau_{\min} (T-1)$, where $\tau_{\min}$ is the mixing time. Thus:
\begin{equation*}
    \frac{|\text{Cov}(M, N_u)|}{\nu^2} \leq \frac{\sqrt{(\nu/4)\tau_{\min} (T-1)}}{\nu^2} = \frac{\sqrt{\pi_u\tau_{\min}}}{2\nu\pi_u}.
\end{equation*}

For the remainder term, note that conditional on $N_u > 0$, we have $\left| \frac{M}{N_u} \right| = \left| \frac{N_{uv}}{N_u}-P_{uv} \right| \leq 1$. Therefore:
\begin{equation*}
    \left| \mathbb{E}\left[ \frac{M}{N_u} \frac{(N_u - \nu)^2}{\nu^2} \indicator{N_u > 0} \right] \right| \leq \frac{\mathbb{E}[(N_u - \nu)^2]}{\nu^2} = \frac{\text{Var}(N_u)}{\nu^2} \leq \frac{\tau_{\min}}{\nu\pi_u}.
\end{equation*}

Combining these bounds, the total bias $\bias_{uv}$ is bounded by:
\begin{equation*}
    |\bias_{uv}| \leq \frac{\tau_{\min}}{\nu\pi_u} + \left| \frac{1}{d} - P_{uv} \right| \mathbb{P}[N_u = 0] \leq \frac{\tau_{\min}}{\pi_*^2(T-1)}  + \mathbb{P}[N_u = 0].
\end{equation*}

Using \citet[Corollary 2.10 and Proposition 3.10]{paulin_concentration_2015},
    \begin{equation*}
        \bbP\left[N_u = 0\right] \leq \bbP\left[\left|N_u - T\pi_u\right| > T\pi_u\right] \leq \sqrt{\frac{2}{\pi_*}}\exp\left(-\frac{T\pi_u^2}{\tau_{\min}}\right) \leq \sqrt{\frac{2}{\pi_*}}\exp\left(-\frac{T\pi_*^2}{\tau_{\min}}\right),
    \end{equation*}
so $\mathbb{P}[N_u = 0]$ decays exponentially with $T$, and the bias $\bias_{uv}$  is $O\left(\frac{\tau_{\min}}{T\pi_u^2}\right)$.
\end{proof}

\begin{lemma}
    \label{lemma:concentration_transition_matrix}
    Using the notation of \Cref{lemma:concentration_MC}, for any $\varepsilon \in [0,d]$ we have
    \begin{equation*}
         \mathbb{P}\left[\|\hat{P}-P\|_{F} > \varepsilon\right]  \leq  \ \frac{4d^2}{\sqrt{\pi_*}}  \cdot \exp \left(-\frac{\varepsilon^2 T\pi_*^{3/2}}{16d\sqrt{\tau_{\min}}}\right),
    \end{equation*}
    and thus is a sub-Gaussian random variable with sub-Gaussian norm $$\|\hat{P}-P\|_{\psi_2} \leq \psimcsquare.$$
\end{lemma}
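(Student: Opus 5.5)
The plan is to reduce the Frobenius deviation to entrywise deviations of the row estimates $\hat{P}_{u\cdot}$ and control each row via concentration of the visit counts $N_u$ together with a martingale bound for $M_{uv}=N_{uv}-P_{uv}N_u$.

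\textbf{Step 1 (union bound over entries).} Since $\|\hat{P}-P\|_F^2=\sum_{u,v}(\hat{P}_{uv}-P_{uv})^2$, the event $\|\hat{P}-P\|_F>\varepsilon$ forces some entry to satisfy $|\hat{P}_{uv}-P_{uv}|>\varepsilon/d$. A union bound over the $d^2$ pairs then accounts for the $d^2$ prefactor. I would rather work row by row, with threshold $\varepsilon/\sqrt d$ on $\|\hat{P}_{u\cdot}-P_{u\cdot}\|_2$, and then per entry. Either way, the required per-entry tail is $\exp(-c\,\varepsilon^2T\pi_*^{3/2}/(d\sqrt{\tau_{\min}}))$.

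\textbf{Step 2 (split on the visit count).} Fix $u$. Define the good event $G_u=\{N_u\ge (T-1)\pi_u/2\}$. By \citet[Corollary 2.10 / Prop.\ 3.10]{paulin_concentration_2015}, exactly as in the proof of \Cref{lemma:bias_empirical_transition},
\[
\mathbb{P}[G_u^c]\le \sqrt{2/\pi_*}\,\exp\bigl(-cT\pi_u^2/\tau_{\min}\bigr).
\]
This accounts for the $1/\sqrt{\pi_*}$ prefactor. On $G_u$ we have $|\hat{P}_{uv}-P_{uv}|=|M_{uv}|/N_u$. The process $M_{uv}$ is a martingale with increments bounded by $1$ and predictable quadratic variation at most $N_u/4$, so Freedman's inequality, or Azuma applied after stopping at the $\lceil T\pi_u/2\rceil$-th visit to $u$ (a skeleton-chain argument), gives
\[
\mathbb{P}\bigl[|M_{uv}|>\delta N_u,\ G_u\bigr]\le 2\exp\bigl(-c\,\delta^2 T\pi_u\bigr).
\]

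\textbf{Step 3 (balancing the two tails).} The two tails have different shapes: $\exp(-T\pi_*^2/\tau)$, which does not depend on $\varepsilon$, and $\exp(-\varepsilon^2T\pi_*/d)$. To produce a single Gaussian tail in $\varepsilon$, I would exploit the restriction $\varepsilon\le d$. For $\varepsilon\le d$ the quantity $\varepsilon^2\pi_*^{3/2}/(d\sqrt\tau)$ is dominated by both $\pi_*^2/\tau$ and $\varepsilon^2\pi_*/d^2$, up to a geometric-mean interpolation. This is where the exponent $3/2$ and $\sqrt{\tau}$ come from: they are the geometric mean of $(\pi_*^2/\tau,\pi_*)$ when the two rates are merged. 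I expect this bookkeeping to be the main obstacle. If direct domination fails for small $\varepsilon$, the fallback is to use $\min(a,b)\le\sqrt{ab}$ and to note that the bound is trivial whenever the right-hand side exceeds $1$.

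\textbf{Step 4 (sub-Gaussian norm).} Finally, convert the tail $K\exp(-\varepsilon^2/s^2)$ with $K=4d^2/\sqrt{\pi_*}$ into a $\psi_2$-norm. Since $\|\hat{P}-P\|_F\le\sqrt{2d}$ always holds, the tail is trivially $0$ beyond that range. Absorbing the prefactor costs a factor $\log K\lesssim\log(2d/\pi_*)$ in the squared norm, via the standard $\psi_2$ characterization \citep[Prop.\ 2.5.2]{vershynin_highdimensional_2018}. This gives $\|\hat{P}-P\|_{\psi_2}^2\lesssim s^2\log(2d/\pi_*)$, matching the stated constant $32\,d\sqrt{\tau_{\min}}\log(2d/\pi_*)/(T\pi_*^{3/2})$.
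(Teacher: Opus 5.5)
\textbf{The gap is in Step 3, and it cannot be closed: the inequality as stated is false.} After Steps 1--2 you have, up to constants,
$\mathbb{P}[\|\hat P-P\|_F>\varepsilon]\lesssim d^2e^{-c\varepsilon^2T\pi_*/d^2}+\frac{d}{\sqrt{\pi_*}}e^{-cT\pi_*^2/\tau_{\min}}$.
Note the $d^2$: an entrywise threshold $\varepsilon/d$ gives $\varepsilon^2/d^2$, not the $\varepsilon^2/d$ you write at the start of Step 3. The paper gets $1/d$ by applying the matrix Freedman inequality of \citet{wolfer_minimax_2019} to whole rows.

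A sum $e^{-a}+e^{-b}$ is only controlled by $2e^{-\min(a,b)}$, so you need $\min(a,b)$ to dominate the target exponent. The inequality $\min(a,b)\le\sqrt{ab}$ points the wrong way, and the geometric-mean exponent $\varepsilon^2T\pi_*^{3/2}/(d\sqrt{\tau_{\min}})$ is exactly what cannot be reached. The $\varepsilon$-free term falls below the target once $\varepsilon^2\gtrsim d\sqrt{\pi_*/\tau_{\min}}$, which lies inside the admissible range whenever $\tau_{\min}\gg\pi_*$. So the failure is at large $\varepsilon$, not small. There the right-hand side is tiny, and the ``trivial when the bound exceeds $1$'' escape is unavailable.

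This is not slack in the Paulin bound. Take $d=2$, $P=\begin{pmatrix}1-p&p\\ p&1-p\end{pmatrix}$ and $\mu=\pi=(\tfrac12,\tfrac12)$. Then $\pi_*=\tfrac12$ and $\tau_{\min}\le C/p$, since the second eigenvalue is $1-2p$. With probability $\tfrac12(1-p)^{T-1}\ge\tfrac12e^{-2pT}$ the chain never leaves state $1$. On that event $N_2=0$, $\hat P_{2\cdot}=(\tfrac12,\tfrac12)$, and $\|\hat P-P\|_F\ge\sqrt2(\tfrac12-p)>\tfrac12$. The claimed bound at $\varepsilon=\tfrac12$ is $16\sqrt2\,e^{-cT/\sqrt{\tau_{\min}}}\le16\sqrt2\,e^{-c'T\sqrt p}$, with absolute constants $c,c'>0$. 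For small $p$ and $T\to\infty$, which is compatible with the lower bound on $T$ in \Cref{lemma:concentration_MC}, this is far below $\tfrac12e^{-2pT}$.

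\textbf{The paper's proof follows the same plan and fails at the same point.} It controls the rows on $\{N_i\in[(1-\delta)T\pi_i,(1+\delta)T\pi_i]\}$, uses \citet{paulin_concentration_2015} on the complement, and then chooses $\delta$. With $\delta_{\min}=\sqrt{\tau_{\min}}/(2\sqrt{\pi_*}+\sqrt{\tau_{\min}})$ one has $1-2\delta_{\min}<0$ as soon as $\tau_{\min}>4\pi_*$. Hence the asserted inequality $(1-2\delta_{\min})/2\ge\sqrt{\pi_*}/(2+\sqrt{\tau_{\min}})$ is false. In fact every $\delta$ allowed by \Cref{lemma:delta_min} satisfies $(1-\delta)^2\le4\pi_*/\tau_{\min}$, which caps that route at an exponent of order $\varepsilon^2T\pi_*^2/(d\tau_{\min})$.

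\textbf{What your Steps 1--2 actually prove.} You are right that $\|\hat P-P\|_F\le\sqrt{2d}$, so only $\varepsilon^2\le2d$ matters. Together with $d\pi_*\le1\le2\tau_{\min}$ this gives $\min\{\varepsilon^2T\pi_*/d^2,\,T\pi_*^2/\tau_{\min}\}\ge\varepsilon^2T\pi_*^2/(2d\tau_{\min})$. Hence $\mathbb{P}[\|\hat P-P\|_F>\varepsilon]\le\frac{4d^2}{\sqrt{\pi_*}}\exp\bigl(-c\,\varepsilon^2T\pi_*^2/(d\tau_{\min})\bigr)$. Your Step 4 then yields $\|\hat P-P\|_{\psi_2}^2\lesssim d\tau_{\min}\log(2d/\pi_*)/(T\pi_*^2)$. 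In this corrected version, the factor $d$ lost to your entrywise union bound is absorbed by $d\pi_*\le1$, so your simpler scalar martingale argument suffices. The price is that $\pi_*^{-3/2}\sqrt{\tau_{\min}}$ must be replaced by $\pi_*^{-2}\tau_{\min}$ wherever the lemma is used downstream.
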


\begin{proof}[Proof of\ \ \Cref{lemma:concentration_transition_matrix}]
    We modify the end of the proofs in \citet{wolfer_minimax_2019} by carefully tuning the number of times each state is visited. To do so, we need to change the proofs of \citet[Theorem 1 and Lemma 6]{wolfer_minimax_2019} as described below. We note that as in the original paper \citep{wolfer_minimax_2019}, the starting distribution of the chain does not need to be the stationary distribution for our results to hold. We restate the result we want to prove for the sake of clarity.
    \begin{equation*}
    \begin{aligned}
        \mathbb{P}\left[\|\hat{P}-P\|_{F}>\varepsilon \wedge \forall i: N_i \in\left[l_i, u_i\right]\right] & \leq \sum_i \mathbb{P}\left[\left\|\sum_{t=1}^T \boldsymbol{Y}_t\right\|_{F}>\frac{\varepsilon}{\sqrt{2 d}} l_i \wedge \max \left\{\left\|\boldsymbol{W}_{\text{row}, T}\right\|_{F},\left\|\boldsymbol{W}_{\text{col}, T}\right\|_{F}\right\} \leq u_i\right] \\
        & \leq \sum_i(d+1) \cdot \exp \left(-\frac{\varepsilon^2 l_i^2}{2 d\left(u_i+\varepsilon l_i/(3 \sqrt{2 d})\right)}\right)\\
        & \leq \sum_i 2 d  \cdot \exp \left(-\frac{\varepsilon^2 l_i^2}{4 d u_i}\right),\\
    \end{aligned}
    \end{equation*}
    where the last line is obtained by realizing that \(\varepsilon \leq \sqrt{d}\), otherwise the bound is trivial since \(\|\hat{P}-P\|_{F} \leq \sqrt{d}\). Furthermore, if we pick \([l_i,u_i] = \left[T\pi_i(1-\delta),  T \pi_i (1+\delta) \right]\) for some \(\delta \in [0,1]\), we get
    \begin{equation*}
        \mathbb{P}\left[\|\hat{P}-P\|_{F}>\varepsilon \wedge \forall i: N_i \in\left[T\pi_i(1-\delta),  T \pi_i (1+\delta) \right]\right] \leq 2 d^2  \cdot \exp \left(-\frac{\varepsilon^2 T\pi_* (1-\delta)^2}{4 d (1+\delta)}\right).
    \end{equation*}

    We now want to bound the probability that $\exists i$ such that $N_i \notin \left[(1-\delta) T \pi_i,(1+\delta)T\pi_i\right]$. Using \citet[Corollary 2.10 and Proposition 3.10]{paulin_concentration_2015}
    \begin{equation*}
        \mathbb{P}\left[N_i \notin  \left[(1-\delta) T \pi_i,(1+\delta)T\pi_i\right]\right] = \mathbb{P}\left[\left|N_i - T\pi_i\right|>\delta T \pi_i \right] \leq \sqrt{\frac{2}{\pi_*}}\exp\left(-\frac{\delta^2T\pi_i^2}{\tau_{\min}}\right),
    \end{equation*}
    where $\tau_{\min} = \inf\limits_{0 \leq \epsilon \leq 1}\tau(\epsilon)\left(\frac{2-\epsilon}{1-\epsilon}\right)^2$ and $\tau(\epsilon)$ is the mixing time of the Markov chain \citep{paulin_concentration_2015}.

    With a union bound argument, we get
    \begin{equation*}
        \mathbb{P}\left[\exists i \stext{s.t.} N_i \notin  \left[(1-\delta) T \pi_i,(1+\delta)T\pi_i\right]\right] \leq d\sqrt{\frac{2}{\pi_*}}\exp\left(-\frac{\delta^2T\pi_*^2}{\tau_{\min}}\right).
    \end{equation*}
    Combining the two results, we get that
    \begin{equation}
        \label{eq:bound_on_bad_event_with_main}
        \begin{aligned}
            \mathbb{P}\left[\|\hat{P}-P\|_{F} > \varepsilon\right] & \leq   \mathbb{P}\left[\|\hat{P}-P\|_{F}>\varepsilon \wedge \forall i: N_i \in\left[l_i, u_i\right]\right] +  \mathbb{P}\left[\exists i \stext{s.t.} N_i \notin  \left[(1-\delta) T \pi_i,(1+\delta)T\pi_i\right]\right]\\
            & \leq 2 d^2  \cdot \exp \left(-\frac{\varepsilon^2 T\pi_* (1-\delta)^2}{4 d (1+\delta)}\right) + d\sqrt{\frac{2}{\pi_*}}\exp\left(-\frac{\delta^2T\pi_*^2}{\tau_{\min}}\right) \\
            & \leq \frac{2d}{\sqrt{\pi_*}}\left(\exp \left(-\frac{\varepsilon^2 T\pi_* (1-\delta)^2}{4 d (1+\delta)} + \log(d)\right) + \exp\left(-\frac{\delta^2T\pi_*^2}{\tau_{\min}}\right)\right). \\
        \end{aligned}
    \end{equation}

    Using \Cref{lemma:delta_min}, we can choose $\delta = \delta_{\min}$ such that the first term in the last line of \cref{eq:bound_on_bad_event_with_main} dominates the second term. Using the fact that $(1-\delta)^2/(1+\delta)\geq (1-\delta)^2/2\geq (1-2\delta)/2$, and that $(1-2\delta_{\min})/2\geq \sqrt{\pi_*}/\left(2 + \sqrt{\tau_{\min}}\right)$, we get
    \begin{equation*}
        \mathbb{P}\left[\|\hat{P}-P\|_{F} > \varepsilon\right]  \leq  \frac{4d^2}{\sqrt{\pi_*}}  \cdot \exp \left(-\frac{\varepsilon^2 T\pi_*^{3/2}}{16d\sqrt{\tau_{\min}}}\right).
    \end{equation*}
    From \citet[Proposition 2.5.2]{vershynin_highdimensional_2018}, we have that \(\|P-\hat{P}\|_{F}\) is a sub-Gaussian random variable with sub-Gaussian norm
    \begin{equation*}
        \begin{aligned}
            \|P-\hat{P}\|_{\psi^2}^2 &\leq \frac{16d\sqrt{\tau_{\min}}}{\pi_*^{3/2}T}\log\left(\frac{4d^2}{\sqrt{\pi_*}}\right)\\
            & \leq \frac{16 d \sqrt{\tau_{\min }}}{\pi_*^{3 / 2} T} \cdot 2 \log \left(\frac{2 d}{\pi_*}\right) \\
            & \leq \psimcsquare.
        \end{aligned}
    \end{equation*}
\end{proof}

\begin{remark}
    There is an implicit assumption in \citet{wolfer_minimax_2019} that each state is visited at least once that we keep in our proof, and that motivates the lower bound on $T$. However, for practical application one needs to have a way to manage what happens if one of the chain does not visit a state, for instance by using a default value for the MLE in that case as we did in the proof of the bias. One could also consider a regularized (add-$\alpha$) estimator \citep{hao_learning_2018}  $\hat{P}_{uv}^{(\alpha)} = (N_{uv}+\alpha)/(N_u+d\alpha)$, which introduces a controlled bias of order $O(\alpha/(T\pi_*))$ but regularizes the denominator. Since the MLE is already conditionally unbiased, we conjecture that the improvement by this estimator would only be marginal in our setting.
\end{remark}

\subsubsection{Algebraic inequalities}

\begin{lemma}
\label{lemma:bias_lambert_function_proof}
Let $\pi_* \in (0,1),\tau_{\min} > 0$, and $T>1$. Then the following holds
\begin{equation*}
   T \geq \frac{2\tau_{\min}}{\pi_*^2}\log\left(\frac{2}{\pi_*}\right) \Rightarrow \frac{2}{\pi_*}\exp\left(-\frac{2T\pi_*^2}{\tau_{\min}}\right) \leq \frac{\tau_{\min}^2}{(T-1)^2\pi_*^4}.
\end{equation*}
\end{lemma}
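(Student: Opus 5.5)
The plan is to reduce the implication to a one-variable inequality through the substitution $x := T\pi_*^2/\tau_{\min} > 0$, and then to use the hypothesis only to absorb the prefactor $2/\pi_*$.

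First, I would rewrite both sides in terms of $x$.

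\textbf{Right-hand side.} Since $T>1$, we have $0<(T-1)^2 \le T^2$. Hence
\[
\frac{\tau_{\min}^2}{(T-1)^2\pi_*^4} \;\ge\; \frac{\tau_{\min}^2}{T^2\pi_*^4} \;=\; \frac{1}{x^2}.
\]
It therefore suffices to show $\frac{2}{\pi_*}e^{-2x}\le x^{-2}$.

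\textbf{Hypothesis.} The assumption $T \ge \frac{2\tau_{\min}}{\pi_*^2}\log(2/\pi_*)$ reads $x \ge 2\log(2/\pi_*)$. Equivalently,
\[
e^{-x} \;\le\; \left(\frac{\pi_*}{2}\right)^2 \;\le\; \frac{\pi_*}{2},
\]
where the last step uses $\pi_*\in(0,1)$.

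\textbf{Absorbing the prefactor.} Splitting $e^{-2x}=e^{-x}\cdot e^{-x}$ and using the previous display on one factor gives
\[
\frac{2}{\pi_*}e^{-2x} \;\le\; \frac{2}{\pi_*}\cdot\frac{\pi_*}{2}\,e^{-x} \;=\; e^{-x}.
\]

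\textbf{Closing the inequality.} It remains to check $e^{-x}\le x^{-2}$, i.e.\ $x^2e^{-x}\le 1$ for all $x>0$. The function $g(x)=x^2e^{-x}$ satisfies $g'(x)=x(2-x)e^{-x}$, so its maximum on $(0,\infty)$ is attained at $x=2$. That maximum equals $4e^{-2}\approx 0.54<1$, which gives the claim.

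Chaining the three estimates yields
\[
\frac{2}{\pi_*}\exp\!\left(-\frac{2T\pi_*^2}{\tau_{\min}}\right) \;\le\; e^{-x} \;\le\; \frac{1}{x^2} \;\le\; \frac{\tau_{\min}^2}{(T-1)^2\pi_*^4},
\]
which is the statement.

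There is no real obstacle. The only points to check are the direction of $(T-1)^2\le T^2$, which is valid since $T>1$, and that the constant $4e^{-2}$ is below $1$. The hypothesis is in fact stronger than needed, because it absorbs $2/\pi_*$ with room to spare.
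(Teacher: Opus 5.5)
Your proof is correct and follows essentially the same route as the paper. Both arguments replace $T-1$ by $T$, use the hypothesis to absorb the prefactor $2/\pi_*$ with part of the exponent, and control the remaining polynomial factor with an elementary bound; the paper does this after taking square roots and logarithms (splitting $u=sT$ as $u(1-1/e)+u/e$ and using $\log u\le u/e$), whereas you work multiplicatively, splitting $e^{-2x}=e^{-x}e^{-x}$ and using $\max_{x>0}x^2e^{-x}=4e^{-2}<1$.
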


\begin{proof}
Taking square roots of the target inequality (both sides positive) and then logarithms, the conclusion is equivalent to
\[
  \frac{1}{2}\log\!\frac{2}{\pi_*} + \log\!\bigl(s(T-1)\bigr) \leq sT,
\]
where $s = \pi_*^2/\tau_{\min} > 0$. Setting $L := \tfrac{1}{2}\log(2/\pi_*) \geq 0$ and $u = sT$, and using $s(T-1) < sT = u$, it suffices to show
\[
  L + \log u \leq u.
\]
The hypothesis gives $u = sT \geq 2L$. By the standard inequality $\log u \leq u/e$ (sharp at $u = e$),
\[
  L + \log u \;\leq\; L + \frac{u}{e}.
\]
Since $u \geq 2L$ and $1 - 1/e > 1/2$, we have $L \leq u/2 \leq u(1-1/e)$, so
\[
  L + \frac{u}{e} \;\leq\; u\!\left(1 - \frac{1}{e}\right) + \frac{u}{e} = u,
\]
which completes the proof.
\end{proof}

\begin{lemma}
    \label{lemma:delta_min}
    Given the notation of \Cref{lemma:concentration_MC}, consider $1 > \delta \geq \delta_{\min}$, where
    \begin{equation}
        \begin{aligned}
            \delta_{\min} &= \frac{\sqrt{\tau_{\min}}}{2\sqrt{\pi_*}+\sqrt{\tau_{\min}}}.
        \end{aligned}
    \end{equation}
    Then for all $\varepsilon \in (0,\sqrt{d})$ (since $\|\hat{P}-P\|_{F} \leq \sqrt{d}$ as they both are row stochastic matrices),
    \begin{equation*}
        \frac{\varepsilon^2 T\pi_* (1-\delta)^2}{4 d (1+\delta)} - \log(d)\leq \frac{\delta^2T\pi_*^2}{\tau_{\min}}.
    \end{equation*}
\end{lemma}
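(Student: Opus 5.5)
The plan is to reduce the inequality to a one-line comparison by discarding every favourable term on the left-hand side, and then to observe that the threshold $\delta_{\min}$ is exactly where the resulting comparison becomes an equality.

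\textbf{Step 1: simplify the left-hand side.} Since $\varepsilon < \sqrt{d}$, we have $\varepsilon^2/d < 1$. Since $d \geq 1$, we have $\log(d) \geq 0$, so the $-\log(d)$ term can be dropped. Since $\delta \geq 0$, we have $1/(1+\delta) \leq 1$. Together these give
\[
\frac{\varepsilon^2 T\pi_* (1-\delta)^2}{4 d (1+\delta)} - \log(d) \leq \frac{T\pi_*(1-\delta)^2}{4}.
\]
It therefore suffices to prove $\frac{T\pi_*(1-\delta)^2}{4} \leq \frac{\delta^2 T\pi_*^2}{\tau_{\min}}$.

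\textbf{Step 2: reduce to a linear condition on $\delta$.} Cancel the positive factor $T\pi_*$, giving $(1-\delta)^2/4 \leq \delta^2\pi_*/\tau_{\min}$. For $\delta\in[0,1)$ both $(1-\delta)/2$ and $\delta\sqrt{\pi_*/\tau_{\min}}$ are nonnegative, so we may take square roots. The condition becomes
\[
\frac{1-\delta}{2} \leq \delta\sqrt{\frac{\pi_*}{\tau_{\min}}},
\]
which is equivalent to
\[
\delta\left(1 + 2\sqrt{\pi_*/\tau_{\min}}\right) \geq 1.
\]

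\textbf{Step 3: solve for $\delta$.} Multiplying numerator and denominator by $\sqrt{\tau_{\min}}$ shows the condition is
\[
\delta \geq \frac{\sqrt{\tau_{\min}}}{\sqrt{\tau_{\min}} + 2\sqrt{\pi_*}} = \delta_{\min}.
\]
This holds by hypothesis, which completes the argument.

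Equivalently, the reduced left side $(1-\delta)^2/4$ is decreasing in $\delta$ on $[0,1)$ and the right side is increasing, and the two coincide at $\delta_{\min}$. I do not expect any genuine obstacle here. The only points needing care are the signs: $\log d \geq 0$, $\delta<1$, and the nonnegativity of both sides before taking square roots.
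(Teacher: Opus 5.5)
Your proposal is correct and follows essentially the same route as the paper: drop the $-\log(d)$ term, use $\varepsilon^2 < d$, discard the $(1+\delta)$ factor, and take square roots so that the condition becomes exactly $\delta \geq \delta_{\min}$. The only difference is cosmetic: you apply all the simplifications before rearranging, whereas the paper first rearranges to $\frac{\varepsilon^2 \tau_{\min}}{4 d \pi_*} \leq \delta^2\frac{1+\delta}{(1-\delta)^2}$ and then simplifies.
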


\begin{proof}[Proof of\ \ \Cref{lemma:delta_min}]
    We rearrange the desired inequality:
    \begin{align*}
        &  \frac{\varepsilon^2 T\pi_* (1-\delta)^2}{4 d (1+\delta)} - \log(d) \leq \frac{\delta^2T\pi_*^2}{\tau_{\min}} \\
        \Leftrightarrow \quad & \frac{\varepsilon^2 T\pi_* (1-\delta)^2}{4 d (1+\delta)} \leq \frac{\delta^2T\pi_*^2}{\tau_{\min}}  +  \log(d) \\
        \Leftarrow \quad & \frac{\varepsilon^2 T\pi_* (1-\delta)^2}{4 d (1+\delta)} \leq \frac{\delta^2T\pi_*^2}{\tau_{\min}} \\
        \Leftrightarrow \quad & \frac{\varepsilon^2 \tau_{\min}}{4 d \pi_*} \leq \delta^2\frac{1+\delta}{(1-\delta)^2}.
    \end{align*}

    Since $\varepsilon \in (0,\sqrt{d})$, we have $\varepsilon^2 < d$. Thus, it suffices to ensure
    \begin{equation}
        \label{eq:sufficient_condition}
        \frac{\tau_{\min}}{4\pi_*} \leq \delta^2\frac{1+\delta}{(1-\delta)^2}.
    \end{equation}

    Now, observe that for $\delta \in (0,1)$, we have $\frac{1+\delta}{(1-\delta)^2} > \frac{1}{(1-\delta)^2}$, so
    \begin{equation*}
        \label{eq:simpler_condition}
        \delta^2\frac{1+\delta}{(1-\delta)^2} > \frac{\delta^2}{(1-\delta)^2}. \quad \text{Therefore, \Cref{eq:sufficient_condition} holds if} \quad \frac{\delta^2}{(1-\delta)^2} \geq \frac{\tau_{\min}}{4\pi_*}.
    \end{equation*}

    Taking square roots and rearranging:
    \begin{align*}
        \frac{\delta}{1-\delta} &\geq \frac{\sqrt{\tau_{\min}}}{2\sqrt{\pi_*}} \\
        \Leftrightarrow \quad 2\sqrt{\pi_*} \cdot \delta &\geq \sqrt{\tau_{\min}}(1-\delta) \\
        \Leftrightarrow \quad \delta(2\sqrt{\pi_*} + \sqrt{\tau_{\min}}) &\geq \sqrt{\tau_{\min}} \\
        \Leftrightarrow \quad \delta &\geq \frac{\sqrt{\tau_{\min}}}{2\sqrt{\pi_*}+\sqrt{\tau_{\min}}} = \delta_{\min}.
    \end{align*}

    This shows that for $\delta \geq \delta_{\min}$, inequality \eqref{eq:simpler_condition} holds, which implies the desired result.
\end{proof}

\begin{remark}
    \label{remark:mc_improvements}
    We could consider a pooled estimator that, after an initial clustering step, estimates a single transition matrix from all edge chains within each block. A key difficulty is that, even within a correctly identified block $(a,b)$, different edges $(i,j)$ and $(u,v)$ with $z(i)=z(u)=a$, $z(j)=z(v)=b$ can be governed by \emph{different} transition kernels (since $W(\xi_i,\xi_j) \neq W(\xi_u,\xi_v)$ in general) so the pooled chains are independent but not necessarily identically distributed. \citet{leskela_robust_2026} recently studied the related problem of estimating a Markov chain transition matrix from multiple independent trajectories with heterogeneous kernels, and their analysis would be a natural starting point for studying such a pooled estimator in our setting. However, this approach breaks the two-stage decoupling that underlies our proof architecture and would require a fundamentally different analysis.
\end{remark}

\section{Simulation Study Details}
\label{sec:simulation_details}

For all simulation, we picked $k =\sqrt{n}$, and the latent variables $\xi_i$ were sampled uniformly from $[0,1]$. We use a greedy algorithm to optimize the node assignments, similar to the one used in \citet{olhede_network_2014}: it is a variant of the Tabu search algorithm \citep{glover_tabu_1997}.

\subsection{Block Model Case}
In the BALARM simulation, we generate a dynamic network with three latent blocks, each governed by a second-order Markov process with block-specific periodic activation patterns. The network is observed over a 10-day period at 15-minute resolution, yielding $T = 960$ time points. Each edge process is governed by a combination of autoregressive memory, a time-varying harmonic signal, and a block-specific bias. Specifically, the activation probability at time $t$ is given by
\[
\mathbb{P}[A_{ij}^{(t)} = 1 \mid A_{ij}^{(t-1)}, A_{ij}^{(t-2)}] = \text{logistic}(\phi_{ab}(t) + b_1 A_{ij}^{(t-1)} + b_2 A_{ij}^{(t-2)} + c_{ab}),
\]
where $(a,b)$ denotes the block pair, and the harmonic component $\phi_{ab}(t)$ is defined as follows (with $P = 96$):
\[
\phi_{ab}(t) =
\begin{cases}
-0.15 \cos(2\pi t/P) + 1.2 \sin(2\pi t/P) & \text{if } (a,b) = (1,1) \\
0 & \text{if } (a,b) \in \{(1,2),(2,2)\}\\
0.3 \cos(4\pi t/P) - 0.09 \sin(4\pi t/P) & \text{if } (a,b) = (1,3) \\
-0.8 \cos(2\pi t/P) + 0.2 \sin(2\pi t/P) & \text{if } (a,b) = (2,3) \\
-0.8 \cos(2\pi t/P) - 2.0 \sin(2\pi t/P) & \text{if } (a,b) = (3,3)
\end{cases}
\]
and the autoregressive coefficients $(b_1, b_2)$ and biases $c_{ab}$ for each block are:
\[
(b_1, b_2, c_{ab}) =
\begin{cases}
(0.15, 0.09, 0.2) & \text{for } (1,1) \\
(0.15, -0.15, -1.0) & \text{for } (1,2) \\
(0.15, -0.3, -1.0) & \text{for } (1,3) \\
(0.0, 0.0, -1.1) & \text{for } (2,2) \\
(-0.5, 0.0, -0.55) & \text{for } (2,3) \\
(-0.5, 0.5, -0.7) & \text{for } (3,3).
\end{cases}
\]

Comparing the edge processes for $(1,2)$ and $(2,2)$ highlights the need for methods that can handle explicit temporal dependence rather than just conditional independence on the temporal mean.

\subsection{Smooth Graphon Case}

In the smooth graphon case, the latent function $W(x, y)$ defines the parameters of the edge processes as a function of the latent coordinates $x, y \in [0,1]$. The edge activation probability at time $t$ for a pair of latent $(\xi_i,\xi_j)$ is given by
\begin{align*}
    \mathbb{P}[A_{ij}^{(t)} = 1 \mid& A_{ij}^{(t-1)}, A_{ij}^{(t-2)}, \xi_i, \xi_j] = \\
     &\operatorname{logistic}(b_1(\xi_i,\xi_j) A_{ij}^{(t-1)} + b_2(\xi_i,\xi_j) A_{ij}^{(t-2)} + b_{12}(\xi_i,\xi_j)A_{ij}^{(t-1)} A_{ij}^{(t-2)} +  c(\xi_i,\xi_j)),
\end{align*}
with
\[
b_1(x, y) = 4(x - 0.7)(y - 0.7), \quad b_2(x, y) = 4(x - 0.1)(y - 0.1),
\]
\[
b_{12}(x, y) = 2xy \cdot \text{sign}(x - 0.5) \cdot \text{sign}(y - 0.5),
\]
\[
c(x, y) = 10 \left(1 - \max(x, y) - \frac{3}{4}(x^2 - 1)(y^2 - 1)\right).
\]

\section{Additional Technical Considerations}
\label{appendix:technical_extensions}

\subsection{Missing data} In \Cref{subsec:hospital}, we excluded edges that exhibited no interactions, since our estimator assumes that every edge follows a well-defined Markov process (see \Cref{assumption:mc_graphon}). However, the framework can be extended to handle missing data under a missing-at-random assumption by modeling edge observations as arising from a mixture: with probability $1-p$ the edge is unobserved (missing), and with probability $p$ it is drawn from the decorated graphon. This approach permits consistent estimation provided that missingness is independent of edge values. Formally, we can model the observed edge process as
\begin{equation}
    \tilde{A}_{ij}^{(t)} = \begin{cases}
        A_{ij}^{(t)} & \text{with probability } p \\
        \text{missing} & \text{with probability } 1-p
    \end{cases}
\end{equation}
where $A_{ij}^{(t)}$ follows the decorated graphon model. Under appropriate regularity conditions on $p$ (e.g., $p$ bounded away from zero), the edge-level estimators $\est_{ij}$ can be modified to account for missingness, and the convergence rates derived in \Cref{theorem:abstract_result_conv} would remain valid up to multiplicative factors depending on $p$. The question of estimating $p$ from data and whether $p$ could depend on the latent positions (and how this would impact the error rates) is still open.

\subsection{Extension to edge-dependent concentration bounds}
\label{subsec:heterogeneous_extension}
As noted in \Cref{remark:heterogeneous_edges}, the uniform bounds in \Cref{assumption:edge_process_est} can be replaced by edge-dependent bounds $\cpsi_{ij}$ and $\bias_{ij}$, accommodating heterogeneous edge processes (e.g., Markov chains with different mixing times or state space sizes across edges). Under this relaxation, the rate in \Cref{theorem:abstract_result_blockmodel} would replace the global $\cpsi^2$ by the blockwise average
\[
\bar{\cpsi}^2 := \max_{a,b\in[k]} \frac{1}{n_a n_b}\sum_{i\in z^{-1}(a)}\sum_{j\in z^{-1}(b)} \cpsi_{ij}^2,
\]
and similarly $\bias^2$ by $\bar{\bias}^2$. The proof structure carries through: in \Cref{lemma:concentration_1}, the expectation bound $\mathbb{E}[V_{ab}(z)] \leq 2\cpsi^2$ becomes $\mathbb{E}[V_{ab}(z)] \leq 2(n_a n_b)^{-1}\sum_{i,j} \cpsi_{ij}^2$, and the sub-exponential tail bounds hold with heterogeneous norms via \citet[Prop.~5.10]{vershynin_introduction_2011} applied to sums of independent sub-Gaussians with different variance proxies. The modifications to \Cref{lemma:concentration_2} are analogous.

\subsection{Post-processing via smoothing} The block-constant structure of our estimated graphon (visible in \Cref{fig:simulation_heatmaps}) can be refined through post-processing. Recent work by \citet{verdeyme_hybrid_2024} provides smoothing techniques that can enhance estimation accuracy by mitigating discretization artifacts while preserving the overall graphon structure. Theoretical guarantees for such smoothed estimators can be derived by combining our results with those of \citet{verdeyme_hybrid_2024} as explained in \citet{dufour_inference_2024}.

\subsection{Details on connection to other temporal network models}

\subsubsection{Dynamic graphon of \citet{pensky_dynamic_2019}}  In this dynamic graphon model, the edge between nodes $i$ and $j$ is governed by a smooth function and the nodes carry the same labels at any time \citep[Sec. 7]{pensky_dynamic_2019}, and  conditionally on this function there is no dependence between two successive time observations. In our setting, we can rewrite this time varying connection probability as $W(\xi_i, \xi_j)(\cdot) = f(\xi_i, \xi_j, \cdot)$, with $W: [0,1]^2 \to L^2([0,1])$, $\xi_i \sim \mathrm{Uniform}[0,1]$ and $A_{ij}^{(t)}| \xi_i,\xi_j \sim \mathrm{Bern}(W(\xi_i, \xi_j)(t))$ independently of other entries. The continuous functions $f(\xi_i, \xi_j, \cdot)$ are assumed to have some sparse representation in an orthogonal basis \citep[Sec. 3,7]{pensky_dynamic_2019}, and thus further connect to our estimation method. However, it is worth noting that the estimation procedure in \citet{pensky_dynamic_2019} is fully adaptive to the data (through a penalization) and ours is not (as we need to pick a number of clusters $k$). Comparing the rate of convergence of the two methods could shed light on the price we pay for our two-stage estimation procedure as opposed to a joint estimation in the case of conditional independence within a single edge process.

\subsubsection{BALARM of \citet{suveges_networks_2023}}

The block-ALARM (BALARM) model of \citet{suveges_networks_2023} is a stochastic blockmodel whose edges follow a logistic autoregression. Writing $z(i)=a$ and $z(j)=b$ for the block labels,
\[
\mathbb{P}[A_{ij}^{(t)}=1 \mid A_{ij}^{(t-1)}, A_{ij}^{(t-2)}] = \operatorname{logistic}\big(\phi_{ab}(t) + b_1^{ab} A_{ij}^{(t-1)} + b_2^{ab} A_{ij}^{(t-2)} + c_{ab}\big),
\]
where $\phi_{ab}$ is a periodic term shared within the block pair; this is the process we use in the simulation of \Cref{sec:simulation_details}. It is the block-constant case of our model in which $W(\xi_i,\xi_j)$ is a periodic second-order Markov chain depending on $(\xi_i,\xi_j)$ only through $(a,b)$, so BALARM is covered by the $M$-Markov chain construction of \Cref{subsec:m_markov_chain_and_periodic} with $M=2$. The two approaches differ mainly in estimation: \citet{suveges_networks_2023} fit the autoregressive parameters jointly by likelihood for a fixed set of edge groups, whereas our two-stage procedure estimates each edge chain on its own before clustering, and applies beyond the blockmodel to the smooth-graphon regime. A more structural difference is that BALARM clusters the edge processes directly as a mixture and does not require these groups to be compatible with a partition of the nodes, while in our model the edge groups are induced by the node communities and so remain tied to the graph structure. The hospital analysis of \Cref{subsec:hospital} shows this correspondence in practice, where a decorated SBM with three node communities recovers the six BALARM edge groups.

\end{appendix}
\end{document}